\newtheorem{definition}{Definition}
\newtheorem{example}{Example}
\theoremstyle{remark}
\newtheorem{claim}{Claim}
\newtheorem{step}{Step}
\newtheorem{case}{Case}
\newenvironment{claimproof}[1][]
    {
        \noindent\textit{Proof of #1.\;}
    }
    { 
        \hfill $\vartriangleleft$\\[-2pt]
    }
\crefname{claim}{claim}{claims}
\crefname{step}{step}{Steps}
\crefname{case}{case}{Case}
\newcommand{\N}{\mathbb{N}}
\newcommand{\T}{\mathcal{T}}
\newcommand{\E}{\mathbb{E}}
\newcommand{\s}{\mathbf{s}}
\newcommand{\U}{\mathcal{U}}
\newcommand{\sqda}{SQD}
\newcommand{\sbad}{\s_{\text{bad}}}
\title{Optimal Welfare in Noncooperative Network Formation under Attack}
\author {
    Natan Doubez\textsuperscript{\rm 1},
    Pascal Lenzner\textsuperscript{\rm 2},
    Marcus Wunderlich\textsuperscript{\rm 2}
}
\begin{document}

\maketitle

\begin{abstract}
Communication networks are essential for our economy and our everyday lives. This makes them lucrative targets for attacks. Today, we see an ongoing battle between criminals that try to disrupt our key communication networks and security professionals that try to mitigate these attacks. However, today's networks, like the Internet or peer-to-peer networks among smart devices, are not controlled by a single authority, but instead consist of many independently administrated entities that are interconnected. Thus, both the decisions of how to interconnect and how to secure against potential attacks are taken in a decentralized way by selfish agents.

This strategic setting, with agents that want to interconnect and potential attackers that want to disrupt the network, was captured via an influential game-theoretic model by Goyal, Jabbari, Kearns, Khanna, and Morgenstern (WINE 2016). We revisit this model and show improved tight bounds on the achieved robustness of networks created by selfish agents. As our main result, we show that such networks can resist attacks of a large class of potential attackers, i.e., these networks maintain asymptotically optimal welfare post attack. This improves several bounds and resolves an open problem. Along the way, we show the counter-intuitive result, that attackers that aim at minimizing the social welfare post attack do not actually inflict the greatest possible damage.  
\end{abstract}

\section{Introduction}
We rely on various networks: for communication, for transportation, and also for our energy infrastructure. Given their importance, networks have always been prone to attack and network operators have always tried to secure their networks against possible threats. A prominent example is the use of firewalls in routers to prevent a computer virus spreading in some part of the network from infecting other parts. Also, social distancing and vaccination measures in the past COVID pandemic can be understood as security measures in a (social) network to prevent the spread of a virus.

However, both examples show that in today's networks there is no central authority that could enforce certain security measures or a certain structure of the network. The use of a firewall or to vaccinate is an individual decision of the participants of the networks. The same holds for the decision of which connections to establish: while a direct connection yields benefit in terms of low latency, it also poses a risk since a possible attack could spread along the created link. Thus, today's networks can be better modeled and understood as a complex multi-agent system consisting of strategic smart agents. These agents can be people, routers, smart devices, or simply interacting components of an AI system. Each agent strategically decides on its security measures and on its links it wants to establish towards other agents.

This viewpoint of networks as multi-agent systems of strategic agents has sparked the multifaceted research on game-theoretic network formation models in Economics, Computer Science, and Artificial Intelligence in the last three decades, see e.g.~\cite{Pap01,jackson2008social}. In these models, agents that correspond to nodes of a network strategically decide which connections to other nodes to establish. The agents are selfish and act according to some given utility function that encodes the agents' objectives. In this paper, we focus on the objective of network robustness.
Modern communication and infrastructure networks have to cope with hardware failures or even deliberate attacks while still providing a reliable service. For incorporating this, researchers have studied agent-based network formation models where the agents prepare for single-link failures~\cite{BG03,Meirom15,CLMM16} or try to maximize the obtained min-cut in the created network~\cite{EchzellLM20}. Also the scenario of fighting a virus that spreads in the created network was considered and this is our main reference point. 

In this work we revisit the elegant strategic network formation model with attack and immunization by Goyal, Jabbari, Kearns, Khanna, and Morgenstern~\cite{Goyal16} that features a smart adversary. 
Each agent selfishly decides which costly links to form and if to acquire protection from attack. In such an attack, a single node of the network is targeted for infection and then this infection spreads along the subgraph consisting of unprotected nodes. The authors consider three natural types of attack: \begin{itemize}
\item \textbf{maximum carnage:} the attacker targets a node to infect as many nodes as possible,
\item \textbf{random attack:} the attacker targets an unprotected node uniformly at random, and
\item \textbf{maximum disruption:} the attacker targets a node to minimize the social welfare post attack. 
\end{itemize}
The utility of uninfected agents is the (expected) number of reachable uninfected nodes post attack, while infected agents have utility zero. The social welfare is the total utility. 

\citet{Goyal16} give non-trivial bounds on the social welfare of equilibrium networks for the maximum carnage and the random adversary and they pose the analysis of the maximum disruption attacker as open problem. In this paper, we completely resolve the question of the social welfare by providing tight optimal bounds for all three attackers. Even more, we show that the optimal bound holds for more general class of attackers that subsumes the maximum disruption attacker. In fact, we show that the created equilibrium networks have asymptotically the same social welfare post attack as in a setting without adversary. This highlights that networks that are created in a decentralized way by strategic agents are highly robust against various types of attackers.      

\subsection{Model and Preliminaries}

\textbf{Sets \& Functions:} We use $\N$ to denote the set of natural numbers and $\mathbb{R}^+$ to denote the set of non-negative real numbers. We will refer to the derivative of a discrete function $f : \N \rightarrow \mathbb{R}^+$ as the function $f' : n \mapsto f(n+1) - f(n)$.

\textbf{The Game:} For our network formation game, we mostly use the original notation by~\citet{Goyal16}. A game instance is defined by the tuple $(n,C_E,C_I,\mathcal{A})$, where $n$ is the number of agents (or nodes of the network), the value $C_E>1$ is the cost at which agents can buy an edge to any other node, the cost for a player to \emph{immunize} itself is $C_I>0$, and $\mathcal{A}$ describes the attacker (or opponent) targeting the created network. We use $[n] \coloneqq \{1,\dots,n\}$ as the set of agents and we use the terms agents and nodes interchangeably.

\textbf{Strategies:} Every agent can (1) buy arbitrarily many undirected incident edges to other agents, at a cost of $C_E$ per edge, and (2) immunize itself at a cost of $C_I$. The set of nodes to which an agent~$i \in [n]$ buys an edge is $X_i \subseteq [n]$, where $j \in X_i$ indicates that agent~$i$ buys the edge $\{i,j\}$. Whether agent~$i$ immunizes itself is represented via Boolean variable $y_i \in \{0,1\}$, i.e., $y_i=1$ if and only if agent~$i$ is immunized. If $y_i=0$, agent~$i$ is called \emph{vulnerable}. The pair $s_i \coloneqq (X_i,y_i)$ is the \emph{strategy} of agent~$i$, and a vector $\s = (s_1,\dots,s_n)$ of strategies of all agents is called a \emph{strategy profile}. Every strategy profile $\s$ induces an undirected network $G(\s)=(V,E(\s))$, with $V \coloneqq [n]$ being the set of agents, partitioned into immunized agents $\mathcal{I}(\s)$ and vulnerable agents $\mathcal{U}(\s)$. The set
$E(\s) = \{\{i,j\} \mid i\in X_j \vee j \in X_i\}$ is the set of bought edges. 
If it is clear from the context, we will omit the reference to the strategy profile~$\s$.

\textbf{Attacks:} Given a strategy profile $\s$ and its induced graph $G(\s)=(\mathcal{I}(\s)\cup\mathcal{U}(\s),E(\s))$, the attacker $\mathcal{A}$ will choose a single node as target. Attacking node~$v \in \mathcal{U}(\s)$ \emph{infects} its connected component in the network $G[\mathcal{U}(\s)]$ induced by the vulnerable agents (while attacking an immunized node has no impact). Infected nodes will be completely removed from the network. 
We call the connected components of $G[\mathcal{U}(\s)]$ \emph{vulnerable regions} and define $\mathcal{V}(\s)$ to be the set of all vulnerable regions of $G(\s)$. We similarly call a connected component of $G[\mathcal{I}(\s)]$ an \emph{immunized region}. Formally, an \emph{attacker} $\mathcal{A}$ is defined by a probability distribution over the vulnerable regions $\mathcal{V}(\s)$ that states how likely it is for $\mathcal{A}$ to attack each of them. This distribution heavily depends on the type of attacker, that we will describe in detail later. Given a fixed attacker $\mathcal{A}$, we refer to the vulnerable regions of $G(\s)$ that $\mathcal{A}$ attacks with non-zero probability as \emph{targeted regions} and define $\mathcal{T}(\s,\mathcal{A})$ as the set of all regions targeted by $\mathcal{A}$. All nodes inside a targeted region are called \emph{targeted nodes}. 

\textbf{Utilities:}
The utility of an agent~$i$ in profile $\s$  is a combination of its connectivity in network $G(\s)$ post attack minus the agent's costs for edges and immunization. Formally, the post attack connectivity of agent~$i$ is the expected size of $i$'s connected component after the attacker infected (and thus destroyed) a targeted region. To be precise, let $CC_i(T,\s)$, for some vulnerable region $T$, be the size of the connected component of $G[V \setminus T]$ that contains agent~$i$. Then, the \emph{connectivity} of agent~$i$ in strategy profile~$\s$ with attacker $\mathcal{A}$ is
    $$\textstyle\E_{\mathcal{T}(\s,\mathcal{A})}[CC_i(\s)]\coloneqq\sum_{T \in \mathcal{T}(\s,\mathcal{A})}\left(\mathbb{P}_\mathcal{A}[T,\s]\cdot CC_i(T,\s)\right),$$
where $\mathbb{P}_\mathcal{A}[T,\s]$ is the probability at which $\mathcal{A}$ attacks~$T$ in profile~$\s$.
Further, the \emph{utility} of agent~$i$ in this setting then is 
    $$u_i(\s) \coloneqq \E_{\mathcal{T}(\s,\mathcal{A})}[CC_i(\s)] - |X_i|C_E - y_iC_I,$$
where $|X_i|C_E + y_iC_I$ is the \emph{cost} of agent~$u$ in strategy profile~$\s$. The sum over the utilities of all agents is called the \emph{social welfare} of profile $\s$. We sometimes write $CC_i(v,\s)$ instead of $CC_i(T,\s)$, where $T$ is the targeted region containing ~$v \in \mathcal{U}$. And similarly $U_f(v,\s)$ for $U_f(T,\s)$.

\textbf{Equilibria:} Given strategy profile $\s$, we say that agent~$i$ \emph{plays best response}, if given the strategies of all other agents, agent~$i$ cannot deviate to a different strategy that yields strictly higher utility. If all agents play best response with respect to strategy profile $\s$, then we say that $\s$ is a \emph{Nash equilibrium}, and we say that $G(\s)$ is an \emph{equilibrium network}. We say that profile~$\s$ is a \emph{non-trivial} Nash equilibrium, if $G(\s)$ has at least one edge and if $|\mathcal{I}(\s)|>0$. These equilibria do not exist in the attack-free setting~\cite{bala2000noncooperative}.

\textbf{Different Types of Attackers:}
The simplest attacker is the one that attacks every vulnerable node uniformly at random. We refer to this one as \emph{random attack}. This attacker, and the two other attackers, \emph{maximum carnage} and \emph{maximum disruption}, introduced by~\citet{Goyal16}, belong to the class of \emph{well-behaved opponents} that have the same attack distribution for equivalent networks (see Definition~4 in~\cite{Goyal16} for details). We will consider a subclass of this, called $f$-opponents, that subsumes the maximum carnage and the maximum disruption opponents. For $f$-opponents, every attack is defined by a function $f\colon \{0,\dots,n\} \to \mathbb{R}^+$ that maps every size of a possible connected component of the network to some non-negative value, with\footnote{Setting $f(0)=0$ is not necessary but useful. E.g. when the opponent targets a region inside a connected component $Z$ that does not create a new component. The change in utility is the difference of the images of $f$ of the sizes of $Z$ before and after the attack. Setting $f(0)=0$ allows it to still hold if the attack completely deletes~$Z$. It is also useful when using the convexity of~$f$.} $f(0) = 0$. With this, an $f$-opponent $\mathcal{A}$ aims to minimize $U_f(T,\s) \coloneqq \sum_{K \in \mathcal{K}(T)}f(|K|)$, where $\mathcal{K}(T)$ is the set of connected components of the network after $\mathcal{A}$ attacked some vulnerable region $T \in \mathcal{T}(\s,\mathcal{A})$. Formally, the $f$-opponent attacks only vulnerable regions $T \in \mathcal{T}(\s,\mathcal{A})$ of $G(\s)$ that minimize $U_f(\cdot,\s)$ and chooses uniformly at random among them. In this framework, the \emph{maximum carnage} attacker and the \emph{maximum disruption} can be defined to be $x^1$- and $x^2$-opponents respectively, where $x^r\colon n \mapsto n^r$ is the monomial of degree $r$ for $r \in \N$.

\begin{example}\label{example}
To better understand the difference between the maximum carnage and maximum disruption attackers, consider the instance depicted in \Cref{fig:counter-greed-drop}. 
\begin{figure}[ht]
    \centering
    \includegraphics[width=\linewidth]{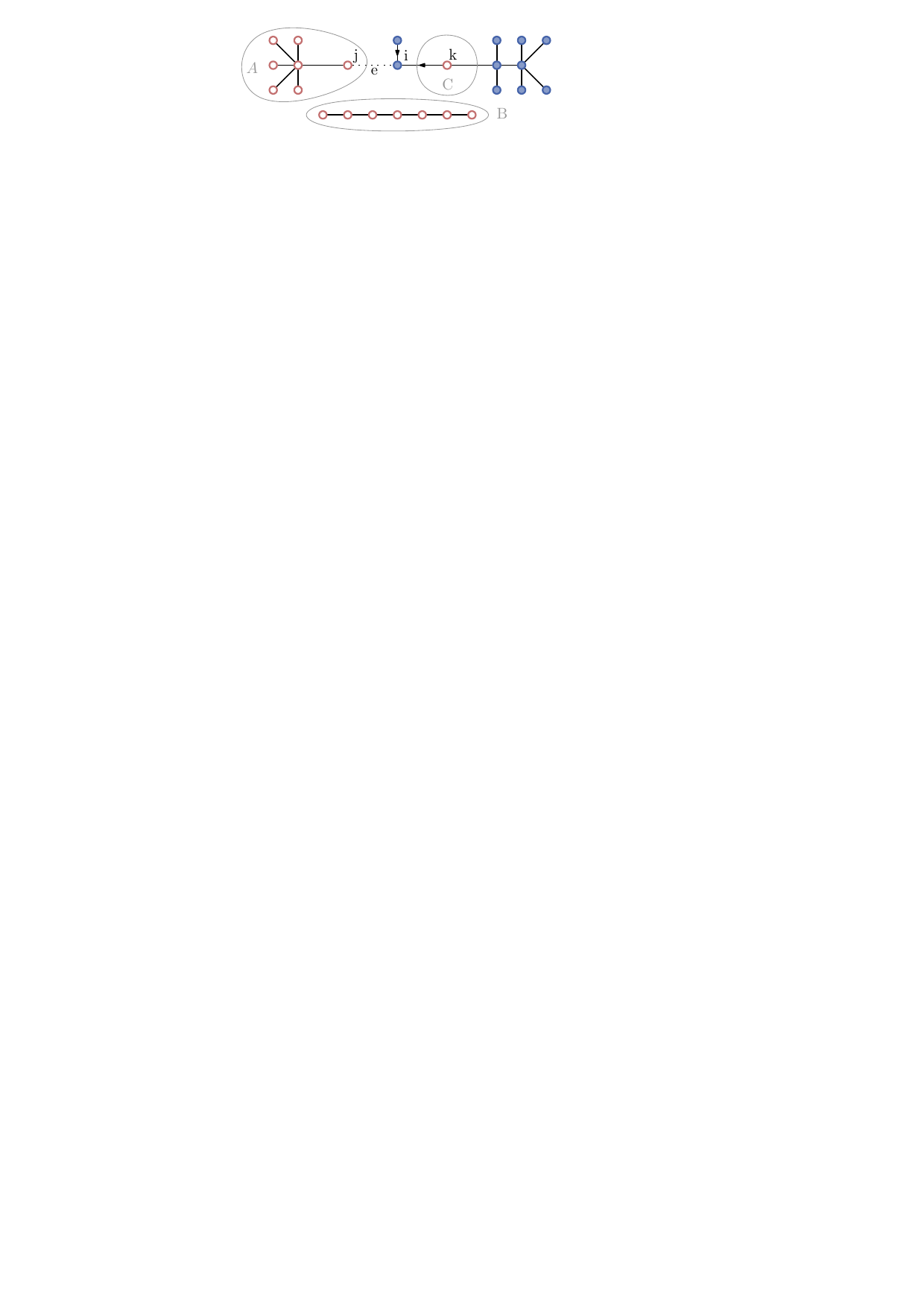}
    \captionof{figure}{Strategy profile $\s$ without edge $e = \{i,j\}$ and $\s'$ where $e$ is bought by $i$. Red (blue) nodes are vulnerable (immunized). $A,B$ and $C$ are vulnerable regions. Without edge $e$ the maximum carnage attack randomizes between infecting a node in $A$ or $B$, while the maximum disruption attacker targets node $k$. In $\s'$, the latter targets a node in~$A$. Arrows indicate edge ownership, directed away from the owner. }
    \label{fig:counter-greed-drop}
\end{figure}
The maximum carnage attacker would treat both vulnerable regions $A$ and $B$ of size $7$ the same, without considering the network structure. In contrast, the maximum disruption attacker would uniquely attack node $k$ in profile~$\s$, since $U_{x^2}(C,\s) = 7^2+2^2+8^2+7^2=166$, while $U_{x^2}(A,\s) = 11^2+7^2=170$. So, targeting region~$C$ minimizes it. However, in $\s'$ we have $U_{x^2}(C,\s') = 9^2+8^2+7^2 = 194$, while $U_{x^2}(A,\s') = 11^2+7^2 = 170$. Hence, $A$ is the only targeted region. 

This shows the counter-intuitive behavior that agent~$i$ prefers to buy the edge $e$ even if it costs $C_E = 9-\varepsilon$, for $\varepsilon>0$, to prevent the destruction of node~$k$. In fact, if, as indicated in \Cref{fig:counter-greed-drop}, agent~$i$ buys no other edges, we have $u_i(\s') = 11 - C_E-C_I$ and $u_i(\s) = 2 - C_I$. 
Such behavior makes the maximum disruption attacker difficult to analyze, as tiny strategy changes may completely shift the targeted regions. Also, a best response strategy may create an edge to a node, like node $j$, that for sure gets destroyed. \hfill $\vartriangleleft$
\end{example}

However, besides the counter-intuitive behavior in~\Cref{example}, the maximum disruption opponent still respects some properties. For instance it is \emph{edge-averse}, i.e., removing edges can only reduce the size of the connected components of each agent, and therefore their utility. We will see that this property is a direct consequence of the strict convexity of the function~$x^2$. Also, this opponent is biased towards targeting the largest components as they contribute more to the social welfare. For our proofs, any function~$f$ that grows faster than $x^2$ will have the same bias. This leads us to naturally encapsulate these properties into a class of attackers, called \emph{super-quadratic-disruptor (\sqda{})}, that we will use for the remainder of this paper to prove statements for maximum disruption opponent and similar attackers.  A {\sqda} is an $f$-opponent such that the function $f$ satisfies: (1) strict convexity, and (2) the function $f(x) / x^2$ is non-decreasing.

\subsection{Related Work}
The study of network formation games has a long tradition in Economics, Computer Science and AI. We restrict our discussion to models and results that are closely related to our work and that of \citet{Goyal16}.  

The objective of ensuring reachability of all other agents was proposed by \citet{bala2000noncooperative} in a setting that is identical to our model but without attack or immunization. They show that equilibrium networks are either empty or trees, depending on the edge price $C_E$. For non-empty equilibria the social welfare is $n^2-\mathcal{O}(n)$, while the empty network has welfare in $\mathcal{O}(n)$. Moreover, they investigate convergence dynamics for finding equilibria. Also, versions with directed edges or connection benefit decay are studied. Later, the authors augmented the model with single edge failures~\cite{BG03} and find that agents build minimally more edges to ensure post failure connectivity. \citet{HS05} extended the model by allowing different failure probabilities per edge and \citet{Kli11,Kli13} studied the price of anarchy of several versions. 

Our reference point, the model by \citet{Goyal16}, is also a direct extension of the reachability model by \citet{bala2000noncooperative}. As one of their main results, they show for the very broad class of well-behaved opponents (and also for multiple stability concepts other than Nash equilibria) that obtained equilibria have at most $2n-4$ edges and that vulnerable regions are trees. Since our $f$-opponents are well-behaved, these results carry over to our analysis. Moreover, the authors show for $C_E,C_I > 1$ that the social welfare of non-trivial equilibria with respect to the maximum carnage and random attack opponent is $n^2 - \mathcal{O}(n^{5/3})$. The same holds for maximum disruption, but only for connected equilibria.
For $C_E>1$, the empty network is an equilibrium with welfare $\mathcal{O}(n)$ and also for $C_I \leq 1$ such bad equilibria exist. Most important for us, they leave the general analysis of the maximum disruption opponent as an open problem. 

The complexity of computing a best response strategy in the model by \citet{Goyal16} was analyzed by \citet{FIKLNS17}. They give a polynomial time algorithm for the maximum carnage and the random attack opponents. Later, \citet{AM23} also achieved this for the maximum disruption attacker. Another extension by \citet{ChenJKKM19} is the natural variant of the model where the infection spreads with probability $p>0$ over each link. For this, the authors only consider the random attack opponent and show that equilibria have at most $\mathcal{O}(n\log n)$ many edges and that equilibria with $\Omega(n)$ edges exist. Also, they show that the expected social welfare of equilibria with $\mathcal{O}(n)$ edges is in $\Theta(n^2)$, i.e., asymptotically optimal and comparable to the setting without attacker. 

Also models with other objectives exist. Close to us is the model by \citet{EchzellLM20}, where agents buy edges to maximize the min-cut in the network. A model with an intermediary, and where edges can have different connection strength was studied by \citet{AnshelevichBK15}. Another large class of related models are network formation games where the objective of an agent is centrality in the created network. Starting from the seminal works of~\citet{jackson1996strategic} and~\citet{Fab03}, many variants have been studied. Among them, there are also versions focusing on the robustness of the created network~\cite{Meirom15,CLMM16}, versions that focus on social networks~\cite{BLLM21} or social distancing~\cite{FGLM22}, variants with temporal edges~\cite{BGKLS23,BGKLS25}, homophilic agents~\cite{BullingerLM22}, or greedy routing~\cite{BFLMR25}.

\subsection{Our Contribution}
We revisit the elegant noncooperative network formation model with attack and immunization by~\citet{Goyal16}. Their main goal was to study the impact of the attacker on the obtained structural properties and the social welfare of equilibrium networks. They find that non-trivial equilibria with respect to the maximum carnage or random attack opponents have a social welfare of $n^2-\mathcal{O}(n^{5/3})$, whereas without attacker the social welfare is $n^2-\mathcal{O}(n)$. Due to the $n^2$-term, this is asymptotically optimal. However, note that the $n^2$-term is a result of using a "benefit minus cost" utility function instead of using a pure cost function. For every agent, every reachable node yields a connection benefit of $1$ whereas unreachable nodes contribute value $0$. This essentially adds the $n^2$-term to the social welfare. In contrast, with a pure cost function, where a reachable node contributes cost $0$ and unreachable nodes yield cost $1$, the $n^2$-term would vanish and the obtained bounds would be far from being tight. We resolve this by proving robust tight bounds on the social welfare. See \Cref{table:results} for an overview.
\begin{table}[h!]
\centering
\begin{tabular}{lcc}
\hline
 & \citet{Goyal16} & Our Results \\
\hline
Max. Carnage& $n^2 - \mathcal{O}(n^{5/3})$ & $n^2 - \mathcal{O}(n)$\\
Random Attack& $n^2 - \mathcal{O}(n^{5/3})$ & $n^2 - \mathcal{O}(n)$ \\
Max. Disruption& open problem &  $n^2 - \mathcal{O}(n)$\\
Tailored & not considered &  $\mathcal{O}(n)$ \\
\hline
\end{tabular}
\caption{Comparison of our results on the social welfare of non-trivial Nash equilibria with the previous results. "Tailored" means a specifically designed opponent.}
\label{table:results}
\end{table}

\noindent
As our main result, we show the optimal social welfare bound of $n^2-\mathcal{O}(n)$ for super-quadratic-disruptor opponents, which subsume the maximum disruption opponent. This shows that even with attackers that try to minimize the post attack social welfare, the obtained welfare is asymptotically the same as in the setting without attacker. Thus, smart strategic agents can cope with strong attackers and still build networks with optimal welfare. This solves an open problem, since for the maximum disruption opponent only a bound of $n^2-\mathcal{O}(n^{5/3})$ for \emph{connected} equilibrium networks was known~\cite{Goyalarxiv}. Thus, we give a general optimal bound for a larger class of attackers.

Finally, we show that the optimal welfare bound does not hold for all attackers. For this, we prove the counter-intuitive result that opponents exist, that enforce asymptotically lower social welfare post attack than the attacker that aims at minimizing this value. In fact, we present a tailored attacker that achieves the low social welfare of empty networks.

\section{Improved Social Welfare for Maximum Carnage and Random Attack}\label{sec:mc_rand}
We study the (expected) social welfare post attack of equilibria with respect to the maximum carnage and the random attack opponent. For this, we focus on the block-cut decomposition of any equilibrium network $G = (V,E)$: 
\begin{definition}
    Consider a network $G=(V,E)$. A node $v \in V$ is called a \emph{cut-vertex} if $G[V\setminus v]$ has more connected components than $G$. A network is \emph{non-separable} if it is connected and has no cut-vertices. Every subnetwork of $G$ that is non-separable and maximal in that property is called a \emph{block}. With that, consider the set of blocks $\mathcal{B}(G)$ and the set of cut-vertices $\mathcal{S}(G)$ of $G$. Connecting all cut-vertices to all blocks they are part of yields a tree $(\mathcal{B}(G)\cup \mathcal{C}(G), E')$ that is called the \emph{block-cut decomposition} of $G$.
\end{definition}

First, we bound the length of specific paths in the block-cut decomposition.
\begin{restatable}{lemma}{lemmaBoundedPathCarnage}\label{lemma:bounded-path-carnage}
    Take the block-cut decomposition $(B(G) \cup C(G), E')$ of an equilibrium network $G=(V,E)$ with respect to the maximum carnage or the random attack opponent. If a path $b_1,c_1,b_2,c_2,...b_r$, that goes through $p$ vulnerable cut-vertices exists, then $p \le 2 C_E + 1$.
\end{restatable}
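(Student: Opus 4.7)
The plan is to prove this by contradiction. Assume a path $b_1, c_1, b_2, \ldots, b_r$ in the block-cut tree contains $p \ge 2 C_E + 2$ vulnerable cut-vertices, listed in path order as $c_{i_1}, \ldots, c_{i_p}$, and let $T_j$ denote the vulnerable region containing $c_{i_j}$ (several $c_{i_j}$ may share a region). I would exhibit a profitable single-edge deviation that contradicts $\s$ being a Nash equilibrium: pick a non-cut-vertex $u$ in $b_1$ and a non-cut-vertex $v$ in $b_r$ (which exist because the endpoint blocks of a non-trivial block-cut path are non-trivial), and consider the deviation in which $u$ buys the edge $\{u,v\}$ at cost $C_E$.

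The core observation is that each $c_{i_j}$ is a cut-vertex separating $u$ from $v$ in $G$. Hence whenever the attacker targets a region $T$ that contains some $c_{i_j}$, the attack disconnects $u$ from $v$ in the original profile, while the added edge keeps them in the same component in the deviation. On such an attack, $u$'s post-attack component therefore grows by at least $CC_v(T,\s)$ nodes, yielding the lower bound
\[
\Delta u_u \;\ge\; \sum_T \mathbb{P}_{\mathcal{A}}[T,\s]\cdot CC_v(T,\s) \;-\; C_E,
\]
where the sum runs over targeted regions $T$ containing at least one $c_{i_j}$. It then suffices to show that this sum strictly exceeds $C_E$.

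For the random attacker, each of the $p$ vulnerable cut-vertices contributes probability $1/n_v$ to the attack on its region, and after destroying a single region the node $v$ remains in a large component. Splitting the path at its median vulnerable cut-vertex, at least $\lceil p/2\rceil \ge C_E + 1$ of them lie on the $u$-side, and for each such $c_{i_j}$ the post-attack component of $v$ still contains every block on the $v$-side of $c_{i_j}$. Using that blocks on the path are non-trivial and that vulnerable regions are trees (so they cannot eat too much mass of neighbouring blocks), the weighted sum telescopes to a quantity strictly greater than $C_E$, contradicting equilibrium.

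The main technical obstacle is the maximum carnage attacker, which targets only vulnerable regions of maximum size and may therefore ignore every $T_j$, making the lower bound above vacuous. The plan for this case is to argue that under $p \ge 2C_E + 2$ either (i) some $T_j$ is itself a largest region, so the previous estimate applies verbatim, or (ii) the actually targeted (off-path) region is so large that its destruction already disconnects a proportionally large piece of the network from $u$, which the edge $\{u,v\}$ recovers to yield the same contradiction. Cleanly delineating these subcases, and matching the cut-vertex count against the selective max-carnage attack probabilities, is where the proof becomes delicate.
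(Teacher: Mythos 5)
There are two genuine gaps. First, your deviation is one-sided: only $u\in b_1$ buys the edge, and you try to lower-bound $\sum_T \mathbb{P}_{\mathcal{A}}[T,\s]\cdot CC_v(T,\s)$. This sum can be far too small to reach $C_E$ when $p$ is only moderately large: if $b_1$ contains almost all of the $n$ nodes and the remaining blocks on the path are single edges, then $CC_v(c_{i_j},\s)=O(p)$ for every vulnerable cut-vertex $c_{i_j}$, so your bound is $O(p^2/n)$ and only yields $p=O(\sqrt{C_E n})$, not $p\le 2C_E+1$. The paper's proof avoids this by symmetrizing: it considers \emph{both} deviations ($x_1$ buys the edge to $x_r$, or $x_r$ buys it), and uses that for each attacked cut-vertex every node outside the off-path subtree $T_{i_j}$ lands in the post-attack component of $x_1$ \emph{or} of $x_r$, so the two gains sum to at least $\frac{1}{n}\sum_{j=1}^p (n-|T_{i_j}|)\ge p-1$; hence one agent gains at least $(p-1)/2$, giving $C_E\ge (p-1)/2$. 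No median-splitting or block-size accounting is needed. Relatedly, the endpoints should be chosen \emph{immunized} (not merely non-cut-vertices): if $u$ and $v$ are vulnerable, the new edge can merge vulnerable regions and shift the attack distribution, invalidating the comparison of expectations over the unchanged set of targeted regions.

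Second, the maximum-carnage case is not the delicate obstacle you make it: in a non-trivial equilibrium with $C_E>1$ the network is connected and all targeted regions are singletons (Lemma 3 of Goyal et al., restated as \Cref{lemma:maxrand_singletons}), which for the maximum-carnage attacker forces \emph{every} vulnerable region to be a singleton. Hence every vulnerable cut-vertex on the path is itself a targeted region attacked with probability at least $1/n$, and the same computation handles both attackers uniformly. Your proposed subcase (ii) — arguing via a large off-path targeted region — is left unresolved and does not obviously produce the needed gain for the edge $\{u,v\}$, since destroying a region far from the path need not disconnect $u$ from $v$ at all.
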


\begin{proof}[Proof Sketch]
     Let $G(\s)$ be the equilibrium network and consider the vulnerable cut-vertices on the path in the block-cut decomposition of $G(\s)$. 
     Moreover, let $x_1 \in b_1$ and $x_r \in b_r$ be two immunized agents. See \Cref{fig:path-bounded-carnage.pdf} for an illustration.
\begin{figure}[h]
\centering
\includegraphics[width=0.7\linewidth]{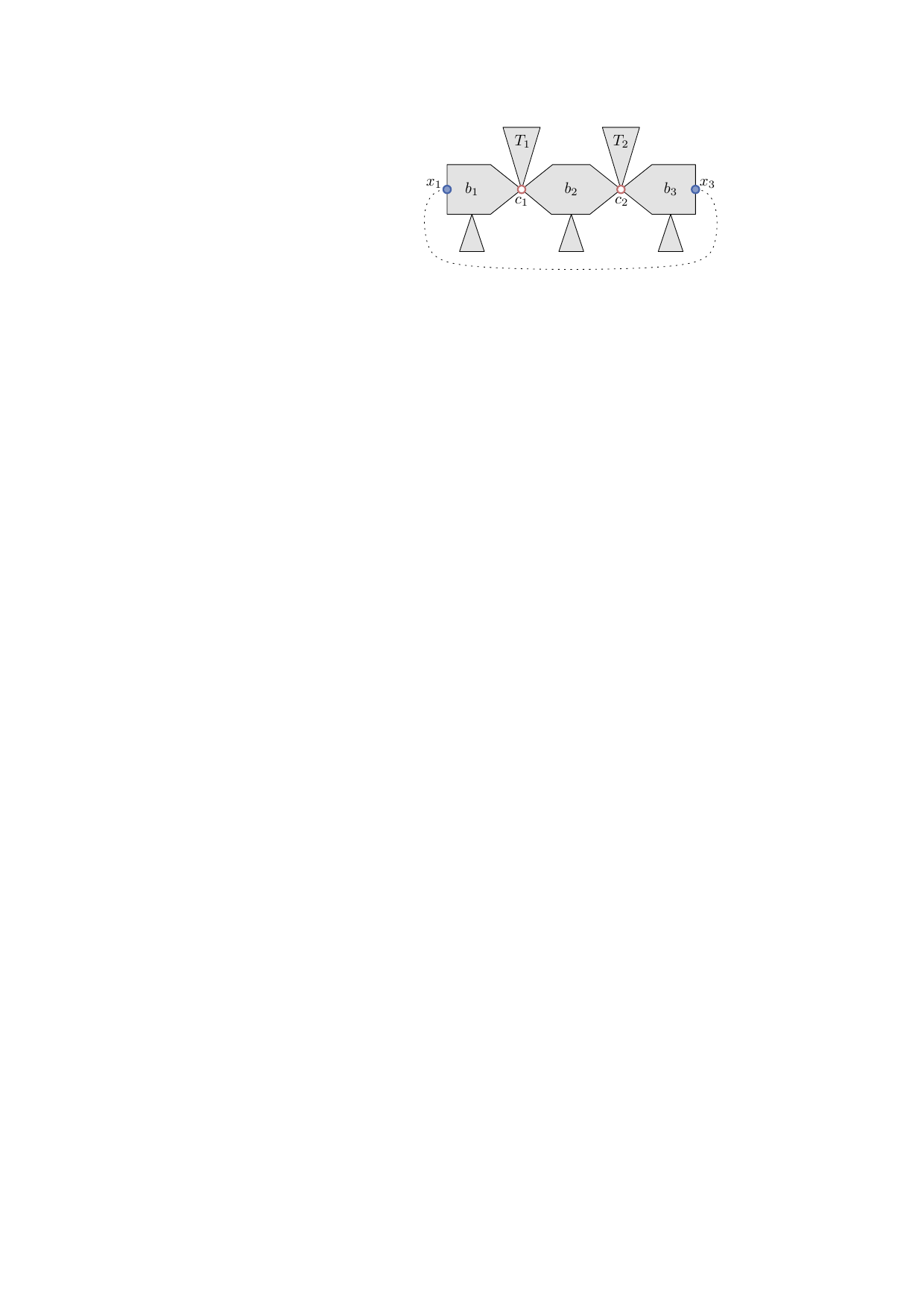}
\caption{The block-cut decomposition of $G(\s)$ with two vulnerable cut-vertices $c_1$ and $c_2$. The dotted edge $\{x_1,x_r\}$ is bought in both strategy profiles $\s_1'$ and $\s_2'$.}
\label{fig:path-bounded-carnage.pdf}
\end{figure}

\noindent
    Now consider the strategy profiles $\s_1'$ and $\s_2'$ that result from either agent $x_1$ buying an edge to agent $x_r$ or vice versa in strategy profile $\s$. Note, that the connectivity of agent~$x_1$ in profile~$\s_1'$ and of agent~$x_r$ in profile~$\s_2'$ is the same. 

    By computing the obtained utilities, we show that the desired result is a consequence of the fact that neither agent~$x_1$ nor agent~$x_r$ can improve from this deviation. 
\end{proof}
\noindent Next, we use the concept of a centroid of a tree, extend it to connected components and show that it always exists.

\begin{restatable}{definition}{defCentroid}\label{def:centroid}
    Let $Z$ be a connected component in $G(\s)$. A node $c \in Z$ is called \emph{centroid} of Z if it respects the property that if it immunizes (or stays immunized if it already was), then after any vulnerable region in $Z$ is destroyed, the size of the connected component containing $c$ is at least $|Z|/2$.
\end{restatable}
\begin{restatable}{lemma}{lemmaCentroid}
Every connected component of $G$ has a centroid.\label{lemma:centroid}
\end{restatable}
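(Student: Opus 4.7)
The plan is to establish existence via an iterative improvement argument inspired by the classical Jordan centroid theorem for trees. For each candidate $v \in Z$, define
\[ \psi(v) \coloneqq \min_T |C_v(T)|, \]
where $T$ ranges over the vulnerable regions of $Z$ in the profile obtained from $\s$ by immunizing $v$ (if not already immunized), and $C_v(T)$ denotes the connected component of $Z \setminus T$ containing $v$. The centroid property for $v$ is exactly $\psi(v) \geq |Z|/2$, so the goal reduces to showing that $\max_{v \in Z} \psi(v) \geq |Z|/2$.

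Pick $v^* \in Z$ maximizing $\psi$ and assume, towards a contradiction, that $\psi(v^*) < |Z|/2$. Let $T^*$ be a witness region, so $|C_{v^*}(T^*)| < |Z|/2$; the components of $Z \setminus T^*$ disjoint from $v^*$ then have total size exceeding $|Z|/2 - |T^*|$. Let $A$ be a largest such component and pick $v' \in A$ adjacent in $Z$ to $T^*$. The core claim is $\psi(v') > \psi(v^*)$, contradicting the maximality of $v^*$ and finishing the proof.

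To prove the claim I would compare $|C_{v'}(T')|$ with $|C_{v^*}(T^*)|$ for every vulnerable region $T'$ in the profile where $v'$ is immunized. For $T' = T^*$ we have $|C_{v'}(T^*)| = |A|$, which strictly exceeds $|C_{v^*}(T^*)|$ by the maximality of $A$ among the components of $Z \setminus T^*$. For any other $T'$, the component $C_{v'}(T')$ absorbs a large portion of the old $C_{v^*}(T^*)$ via connection paths in $Z$ that avoid $T'$, again giving a strict increase. Here I would lean on the fact that vulnerable regions are trees in equilibrium networks (shown by \citet{Goyal16}) to cap how much a single region can separate $A$ from the rest of $Z$.

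I expect the main obstacle to be the last step, especially when $v^*$ is itself vulnerable, so that immunizing it may split its region into several smaller regions that must be individually re-examined, and when multiple regions simultaneously threaten $v'$. Decomposing $Z$ via its block-cut tree and tracking the effect of each candidate $T'$ block-by-block should allow the case analysis to go through, since each $T'$ can influence the connectivity of $v'$ only through the cut-vertices lying on the unique block-cut path from $v'$ to $v^*$.
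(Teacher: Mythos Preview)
Your approach differs substantially from the paper's, which is much shorter: it builds a spanning tree $L$ of $Z$ that contains a spanning tree of every vulnerable region, takes $c$ to be a Jordan centroid of $L$, and then observes that (once $c$ is immunized) each vulnerable region lies inside a single proper subtree of $L$ below $c$, so the surviving $L$-paths already keep $c$ connected to at least $|Z|/2$ nodes. No iterative shifting, no case analysis, and no equilibrium hypothesis are needed.

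Your sketch has a real gap. The inequality $|A| > |C_{v^*}(T^*)|$ does not follow from ``$A$ is the largest component of $Z\setminus T^*$ not containing $v^*$'': those components only sum to $|Z| - |T^*| - |C_{v^*}(T^*)|$, which may be smaller than $|C_{v^*}(T^*)|$ and can even be zero (when removing $T^*$ leaves a single component), in which case no $A$ exists and there is nothing to shift to. Even when $A$ exists, equality $|A| = |C_{v^*}(T^*)|$ is possible, so the strict improvement you need is not guaranteed. Separately, the lemma is stated and proved in the paper for \emph{arbitrary} networks $G(\s)$, not only equilibria, so your planned reliance on ``vulnerable regions are trees in equilibrium networks'' is illegitimate here; the block-cut argument you outline for the $T' \neq T^*$ case would have to work without any structural restriction on the vulnerable regions. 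The spanning-tree reduction sidesteps all of these difficulties at once.
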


\noindent From the proof of \Cref{lemma:centroid}, we get the following:
\begin{restatable}{corollary}{lemmaNoCentroidImpliesVulnerable}\label{lemma:no-centroid-implies-vulnerable}
    In any connected component $Z$ of a network $G(\s)$, one of the two following properties must hold:
    \begin{itemize}
        \item there exists an immunized centroid of $Z$,
        \item there exists a vulnerable region containing a centroid that if removed, every remaining connected component has size $< |Z|/2$.
    \end{itemize}
\end{restatable}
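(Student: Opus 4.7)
The plan is to case-split on whether the centroid $c$ produced by \Cref{lemma:centroid} is immunized. First, I would invoke \Cref{lemma:centroid} to fix a centroid $c \in Z$. If $c$ is immunized, the first bullet is witnessed by $c$ itself and there is nothing more to show. Otherwise $c$ is vulnerable and lies in a unique vulnerable region $R$; the goal is to argue that $R$ itself witnesses the second bullet, i.e., that every connected component of $Z \setminus R$ has size strictly less than $|Z|/2$.

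To establish this size bound I would argue by contradiction. Assume some component $C^*$ of $Z\setminus R$ satisfies $|C^*| \geq |Z|/2$, and consider the profile $\s_c$ obtained from $\s$ by flipping $c$'s immunization. Using the structural result of \citet{Goyal16} that vulnerable regions are trees, the forest $R\setminus\{c\}$ decomposes into the vulnerable regions $R_1,\dots,R_k$ of $\s_c$ that arose from $R$. Since every path in $Z$ from $C^*$ to $c$ must pass through $R$, there is a specific piece $R_{j^\star}$ whose removal severs $c$ from $C^*$. Destroying $R_{j^\star}$ in $\s_c$ then leaves $c$ in a component of size at most $|Z|-|C^*|-|R_{j^\star}|<|Z|/2$, directly contradicting the centroid property of $c$ in $\s_c$.

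The main obstacle is the degenerate case in which $c$ has a direct edge into $C^*$, so that no single piece $R_{j^\star}$ can sever $c$ from $C^*$, and the clean contradiction above fails. To handle this I would rely on the specific iterative construction inside the proof of \Cref{lemma:centroid}: $c$ is obtained as the terminal point of a \emph{move-toward-heavier-side} procedure analogous to the classical tree-centroid algorithm. A direct edge from the terminal $c$ into an outside component of size at least $|Z|/2$ would enable one further improving move into that component, contradicting termination. Ruling out this degenerate configuration is the delicate step, and it is the reason the statement is phrased as a consequence of the \emph{proof} of \Cref{lemma:centroid} rather than of its statement alone; once it is ruled out, the contradiction in the previous paragraph goes through and the second bullet follows.
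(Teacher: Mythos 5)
Your overall strategy diverges from the statement in a way that cannot be repaired: after fixing the centroid $c$ from \Cref{lemma:centroid} and assuming it is vulnerable, you try to show that the region $R \ni c$ \emph{always} witnesses the second bullet. That is a strictly stronger claim than the corollary, and it is false. Take $Z$ consisting of a single vulnerable node $c$ joined to two immunized cliques $A$ and $B$ of equal size $m$, with one additional edge directly between $A$ and $B$. The construction in the proof of \Cref{lemma:centroid} may choose the spanning tree $L$ with $A$ and $B$ hanging off $c$ as two subtrees of size $m \le |Z|/2$ each, so $c$ is a legitimate output of that construction and is vulnerable; yet removing $R=\{c\}$ leaves the single component $A\cup B$ of size $|Z|-1 > |Z|/2$, so the second bullet fails for $R$. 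The corollary still holds there, but only via the \emph{first} bullet: any immunized node of $A\cup B$ adjacent to $c$ is a centroid. This case split --- either every component of $Z\setminus R$ has size $< |Z|/2$ (second bullet), or some surviving component $Z_x$ has size $\ge |Z|/2$, in which case an immunized neighbour $x$ of $R$ inside $Z_x$ inherits the centroid property from $c$ and witnesses the first bullet --- is exactly the paper's proof, and it is the step your proposal is missing.

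Your attempted patch for the ``degenerate case'' does not close this gap. The move-toward-heavier-side procedure measures subtree sizes in the spanning tree $L$, where $A$ and $B$ above are separate subtrees of size $\le |Z|/2$; the oversized component only appears in $G$ after deleting $R$, through non-tree edges, so no ``further improving move'' is available and termination at $c$ is consistent with the configuration you need to exclude. Separately, even outside the degenerate case your contradiction requires a single piece $R_{j^\star}$ of $R\setminus\{c\}$ whose removal severs $c$ from $C^*$; but $C^*$ is a component of $G[Z\setminus R]$ and may attach to several pieces of $R\setminus\{c\}$ (again via edges not in $L$), in which case no such piece exists and the claimed size bound on $c$'s component does not follow.
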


\noindent In the following, we use a result from~\citet{Goyal16} that holds for all well-behaved opponents.
\begin{restatable}{lemma}{lemmaEdgeNumber}{(Lemma 2 \& Theorem 2 of \citet{Goyal16})}\label{lemma:1}
    Let $G=(V,E)$ be an equilibrium network with a well-behaved opponent. Then
    \begin{enumerate}
        \item all vulnerable regions in $G$ are trees, and
        \item $|E| \leq 2n-4$, for all $n \geq 4$.
    \end{enumerate}
\end{restatable}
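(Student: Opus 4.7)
The plan is to prove the two items in turn using best-response edge-deletion arguments that exploit the structure imposed by equilibrium and the well-behavedness of the opponent.

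For item~(1), I would argue by contradiction: suppose some vulnerable region $T$ contains a cycle, pick an edge $e = \{i,j\}$ on that cycle with owner $i$, and consider the unilateral deviation in which $i$ drops $e$. Since $e$ lies on a cycle of $T$, the vertex set of every vulnerable region is unchanged in $G - e$, so well-behavedness of the opponent implies that the attack distribution coincides in $\s$ and in the deviation. For any realization of the attack on some region $T'$: if $T' = T$, the region is completely destroyed in both profiles and $e$ is irrelevant; if $T' \neq T$, then $T$ survives as a connected subgraph in both $G$ and $G - e$ (precisely because $e$ was on a cycle of $T$), so the connected component of $i$ post-attack is identical. Hence $i$ strictly gains the saved edge cost $C_E > 1$ from the deviation, contradicting that $\s$ is a Nash equilibrium.

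For item~(2), I build on item~(1). Writing $\mathcal{V}$ for the set of vulnerable regions and $U := |\mathcal{U}|$, item~(1) contributes exactly $U - |\mathcal{V}|$ internal edges to $|E|$. To bound the remaining edges, which have at least one immunized endpoint, I would form the quotient multigraph $H$ obtained by contracting each vulnerable region to a single super-node. A variant of the deletion argument shows $H$ is simple: two external edges from an immunized node $v$ to the same tree region $T$ are redundant, since the surviving copy of $T$ already connects both attachment points to $v$ whenever $T$ is not attacked, and both edges are destroyed when $T$ is attacked. A further cycle-cutting argument restricts the cyclic structure of $H$, and combined with a block-cut analysis of the underlying graph together with the observation that no non-trivial block can live entirely inside a vulnerable region (otherwise item~(1) is violated), an edge-count aggregated over blocks yields $|E| \leq 2n - 4$ for $n \geq 4$.

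The main obstacle is the cycle-cutting step in item~(2). Unlike in item~(1), where the deleted edge is neutralized after every attack, a cycle in the quotient $H$ may traverse several vulnerable regions, and one must argue that for every realization of the well-behaved attack distribution the deleting agent still retains an alternative path of comparable value. I would follow the blueprint of \citet{Goyal16}, exploiting the tree structure of each traversed vulnerable region to exhibit an explicit alternative route and partitioning cycles of $H$ according to which contracted super-nodes they meet. The bound $2n - 4$ (rather than the tighter $n - 1$ one would get from a pure forest argument) arises precisely because the biconnected blocks among the immunized nodes can survive attacks without creating a deletion opportunity, so the block contribution must be accounted for explicitly rather than ruled out.
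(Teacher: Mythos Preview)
The paper does not prove this lemma at all; it is quoted as Lemma~2 and Theorem~2 of \citet{Goyal16} and used as a black box. There is therefore no in-paper argument to compare your proposal against.

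On the merits of your sketch: item~(1) is the standard deletion argument and is essentially correct. The one point worth tightening is the appeal to well-behavedness---you assert the attack distribution is unchanged because the vulnerable-region partition is unchanged, but the formal definition in \citet{Goyal16} concerns equivalent \emph{networks}, so you should either verify that deleting a cycle edge internal to a vulnerable region yields an equivalent network in that technical sense, or bypass the issue by observing directly that for \emph{every} realization of the attack the deletion never shrinks $i$'s post-attack component.

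Item~(2) has genuine gaps. First, your argument that the quotient $H$ is simple ignores edge ownership: the redundancy argument only lets $v$ drop an edge that $v$ itself bought, and if the two parallel edges into the same region $T$ have different owners, neither owner may have an improving deletion (the node in $T$ faces a different connectivity calculation than $v$ does). Second, the ``cycle-cutting'' step and the block-cut edge count are not actually carried out---you explicitly defer to ``the blueprint of \citet{Goyal16}.'' That is an honest citation, but the passage from ``$H$ has restricted cyclic structure'' to the sharp bound $|E|\le 2n-4$ is precisely where all the work lies, and your proposal does not supply it.
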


\noindent Now, we can prove the main result of this section.
\begin{restatable}{theorem}{theoremBoundsPrevious}\label{theorem:bounds-previous}
    The social welfare of non-trivial Nash equilibria with respect to the maximum carnage or the random attack opponent is $n^2 - O(n)$.
\end{restatable}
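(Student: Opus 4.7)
The plan is to bound the two ingredients of the social welfare separately: the total cost and the expected post-attack sum of connectivities $\E\left[\sum_i CC_i(\s)\right]$. The total cost is immediately $O(n)$: by \Cref{lemma:1}, the number of edges is at most $2n-4$, contributing $(2n-4)C_E = O(n)$, and the immunization cost is at most $nC_I = O(n)$.

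For the connectivities, observe that for any fixed targeted region $T$, we have $\sum_i CC_i(T,\s) = \sum_C |C|^2$ summed over the connected components $C$ of $G - T$, and $\sum_C |C|^2 \le n\cdot \max_C |C|$. Hence the bound $\sum_i CC_i(T,\s) \ge n^2 - O(n)$ is implied by the stronger structural claim that in any non-trivial equilibrium, for every targeted region $T$, the graph $G - T$ contains a component of size $n - O(1)$. I would decompose this into two sub-claims: (i) every vulnerable region has size $O(1)$, and (ii) removing such a region leaves at most $O(1)$ vertices outside one giant component.

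Sub-claim (i) is proved by a deviation argument using the centroid. By \Cref{lemma:1}, every vulnerable region $T$ is a tree, so it admits a centroid $v$ whose removal splits $T$ into sub-trees of size at most $|T|/2$. Consider the deviation where the (vulnerable) centroid $v$ immunizes: $v$ becomes safe, the region $T$ splits into sub-trees, and the new max-carnage target has size strictly smaller than $|T|$. Since $v$ remains linked through tree edges to every sub-tree of $T$ except the one targeted by the new attacker, its post-attack connectivity is at least $|T|/2$. Comparing with $v$'s pre-deviation utility of $0 - |X_v|C_E$ (being in the attacked region), the deviation strictly improves the utility whenever $|T|/2 > C_I$, contradicting equilibrium. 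Hence $|T|\le 2C_I = O(1)$ for every vulnerable region, and this bound also controls the random-attack opponent whose expected attacked-region size is at most $\max_T |T|$.

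Sub-claim (ii) is obtained by combining \Cref{lemma:bounded-path-carnage} with an edge-buying deviation: any agent $v$ outside the largest component of $G - T$ could buy an edge to an immunized vertex inside that component, gaining $\Theta(n)$ in post-attack connectivity for cost $C_E$. In a non-trivial equilibrium at least one immunized vertex exists, and the bounded-path lemma constrains the block-cut decomposition enough around any region so that a suitable immunized attachment point is present in the giant component. Ruling out this deviation then forces $G - T$ to contain all but $O(1)$ vertices in a single component. The main obstacle will be exactly this step: carefully tracking the attacker's response to the newly bought edge (so that the gained connectivity is not immediately destroyed), and arguing that an immunized attachment point in the giant component is always available, using \Cref{lemma:bounded-path-carnage} together with the constant bound on vulnerable region sizes. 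Once (i) and (ii) are in place, for the random attack opponent the bound follows by averaging, since the probability of attacking $T$ is proportional to $|T|$ and each surviving $G - T$ satisfies the giant-component bound.
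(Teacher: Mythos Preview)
Your overall decomposition (cost is $O(n)$, then bound the expected sum of squared component sizes) is fine, but both sub-claims have real problems, and the paper's route is quite different.

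\textbf{Sub-claim (i) is unnecessary and the argument is flawed.} For non-trivial equilibria under maximum carnage or random attack, it is already known (\Cref{lemma:maxrand_singletons}, i.e.\ Lemma~3 of \citet{Goyal16}) that all targeted regions are singletons. So you do not need a new size bound. Your centroid argument is also incorrect as stated: you assume the targeted region $T$ is attacked with probability~$1$ (writing $v$'s pre-deviation utility as $0-|X_v|C_E$), which fails whenever several regions tie for maximum size; and your claim that ``the new max-carnage target has size strictly smaller than $|T|$'' ignores other vulnerable regions of size $|T|$ or $|T|-1$ elsewhere in the graph.

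\textbf{Sub-claim (ii) is where the content lies, and your deviation does not give it.} You want: for every targeted singleton $T$, the graph $G-T$ has a component of size $n-O(1)$. Your proposed deviation says an agent $v$ left outside the giant piece of $G-T$ buys one edge into it and ``gains $\Theta(n)$''. But $v$'s utility is an \emph{expectation} over all targets; the gain you describe materializes only on the event that this particular $T$ is attacked, and that event can have probability as small as $1/|\U(\s)|=O(1/n)$. The expected gain from the single edge is then $O(1)$, which does not beat $C_E$ and so does not yield a contradiction. In fact the per-$T$ giant-component statement is strictly stronger than what is needed and does not follow from a one-edge deviation; it would require aggregating over many cut-vertices simultaneously, which is precisely what an averaging argument does.

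\textbf{What the paper actually does.} It does \emph{not} prove a per-$T$ giant component. Instead it fixes one immunized agent $x$, roots the block--cut tree at $x$'s block, and stratifies the vulnerable cut-vertices into layers $H_0,H_1,\dots$ by how many vulnerable cut-vertices lie below them; \Cref{lemma:bounded-path-carnage} bounds the number of layers by $2C_E+2$. Within a single layer the subtrees hanging below distinct cut-vertices are disjoint, so $\sum_{z\in H_i} CC_x(z,\s)\ge n(|H_i|-1)$. Averaging over all vulnerable singletons and using the lower bound $|\U(\s)|\ge n/(2C_I)$ (\Cref{lemma:lower-bound-vulnerable}) gives $\E[CC_x(\s)]\ge n-O(1)$. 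Finally Jensen's inequality turns this into $\E[CC_x(\s)^2]\ge (n-O(1))^2$, and since the social welfare is at least $\E[CC_x^2]$ minus total cost, the $n^2-O(n)$ bound follows. The two ingredients you are missing are the layered averaging (in place of a per-$T$ statement) and the $\Omega(n)$ lower bound on the number of vulnerable agents, without which the constant number of layers cannot be converted into an $O(1)$ loss in expectation.
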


\begin{proof}[Proof Sketch]
    Consider the immunized agent~$x$ and the block-cut decomposition of the network rooted in the block containing node~$x$. 
    For every $0 \le i \le 2C_E +1$, define $H_i$ the set of vulnerable cut-vertices of rank $i$. The rank is defined as follows. Let $H_0$ is the set of vulnerable cut-vertices such that no other vulnerable cut-vertices are in the subtree of this node. Then $H_{i+1}$ is the set of cut-vertices whose subtree contains a different node in $H_i$. Finally, let $H = \cup_{i} H_i$ the set of all vulnerable cut-vertices. See \Cref{fig:bounds-carnage}.
    \begin{figure}[h]
    \centering    \includegraphics[width=0.5\linewidth]{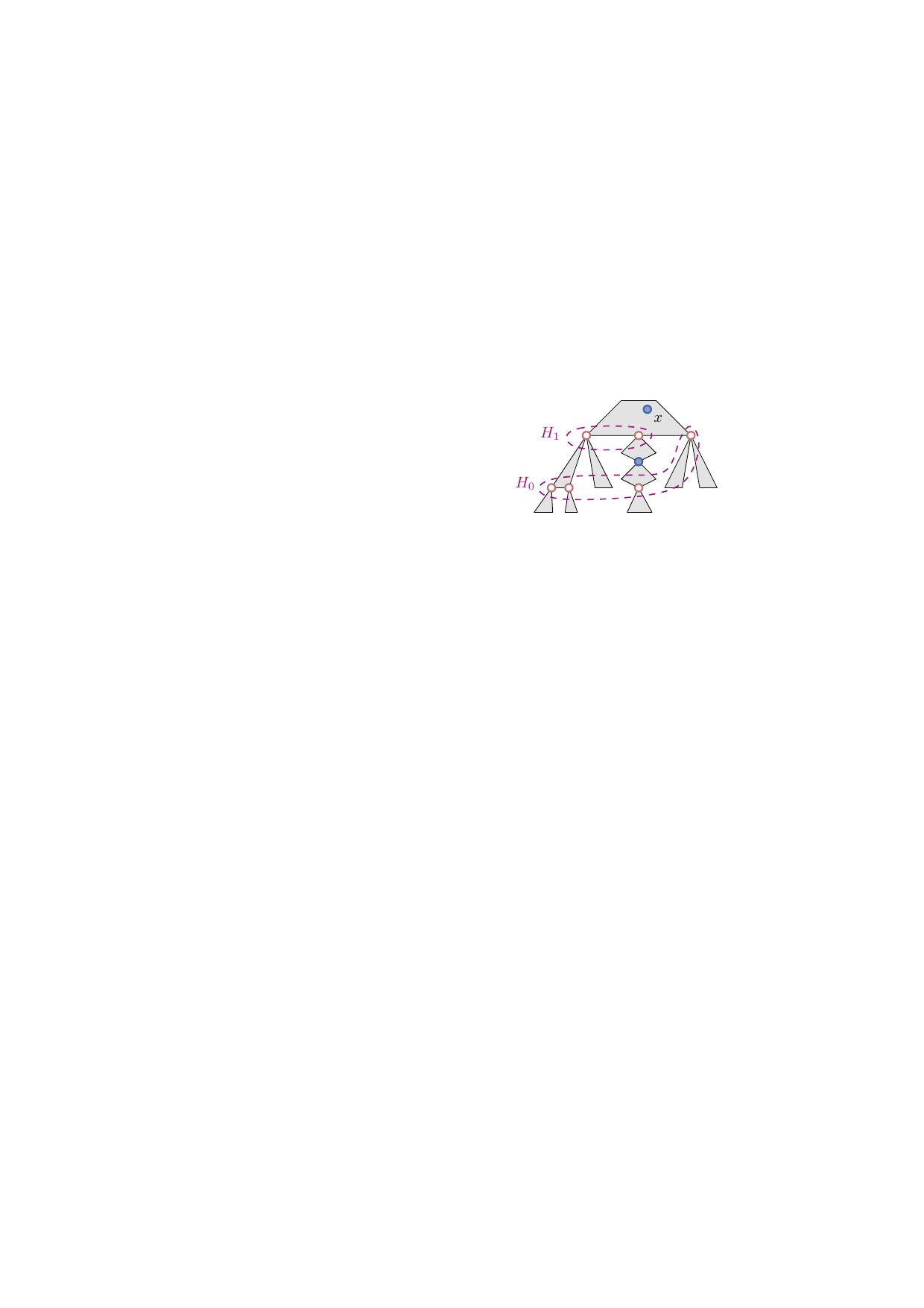}
    \caption{The layers $H_i$ from the proof of \Cref{theorem:bounds-previous}.}
    \label{fig:bounds-carnage}
    \end{figure}

    \noindent
    Note that for a given $0 \le i \le 2 C_E +1$ and a node $y$, there is at most one node in $H_i$ that disconnects nodes~$x$ and $y$ when removed. Therefore,
        $\sum_{z \in H_i} CC_{x}(z, \s) \ge n (|H_i|-1)$,
    since every node is in all of these components except for at most one. Also, all targeted regions are singletons. Therefore, 
    $$\E_{\U(\s)}[CC_x(\s)] \ge\; n - 2 - 4 C_I( C_E +1).$$

    Finally the social welfare is the expected value of the sum of squares of connected components minus the cost. This is higher or equal to the expected value of the square of the connected component of $x$ minus the cost. We therefore have, using Jensen's inequality and \Cref{lemma:1}, that the social welfare of Nash equilibrium~$\s$ is at least
    \begin{align*}
    & \E_{\U(\s)}[CC_x(\s)^2] - (2n-4)C_E - n C_I  \\
    \ge\;&  \E_{\U(\s)}[CC_x(\s)]^2 - (2n-4)C_E - n C_I \\
    \ge\;& (n - 2 - 4 C_I(C_E + 1))^2  - (2n-4)C_E - n C_I \\
    =\;& n^2 - O(n). \qedhere
    \end{align*}
\end{proof}

\section{Optimal Welfare Bounds for \sqda{} Opponents}
In this section, all our statements are with respect to a fixed game $(n,C_E,C_I,\mathcal{A})$ with $C_E, C_I > 1$, where $\mathcal{A}$ is an \sqda{} opponent with function $f$. Since $\mathcal{A}$ is fixed for this section, we will omit it as a parameter and in subscripts.

\noindent
We show the intuitive statement, that the \sqda{} opponent favors splitting components into small components. Also, selling edges in targeted regions makes the region less attractive for the \sqda{} opponent, i.e., it is edge-averse. Both follows from \Cref{lemma:f-opp-increase} (see Supp. Material) which states that \sqda{} opponents favor attacking large components.
\begin{restatable}{corollary}{sqdCutVertices}\label{cor:sqd-cut-vertices}
    Let $\s$ be a strategy profile such that a component $K$ in $G(\s)$ contains at least two vulnerable regions $T_\text{cut}, T_\text{leaf}$ of equal size, such that the remaining nodes in component $K$ stay connected when removing $T_\text{leaf}$ but not when removing $T_\text{cut}$. Then $U_f(T_\text{cut}, \s) < U_f(T_\text{leaf}, \s)$, which means that $T_\text{leaf}$ is not targeted by the \sqda{} opponent.
\end{restatable}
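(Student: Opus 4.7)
The plan is to reduce the inequality $U_f(T_\text{cut}, \s) < U_f(T_\text{leaf}, \s)$ to a strict superadditivity property of $f$, which I would derive directly from the two defining properties of an \sqda{} function---strict convexity together with $f(0)=0$.

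First, I would observe that since $T_\text{cut}$ and $T_\text{leaf}$ both lie inside the same component $K$ of $G(\s)$, removing either region affects only the component sizes arising from $K$: the connected components of $G[V \setminus T]$ that lie outside $K$ are identical for $T = T_\text{cut}$ and $T = T_\text{leaf}$, so their contributions to $U_f$ cancel in the comparison. Thus it suffices to compare $\sum_{j=1}^{m} f(|K_j|)$, where $K_1, \dots, K_m$ are the connected components of $K \setminus T_\text{cut}$, against $f(s)$, where $s := |K| - |T_\text{leaf}|$ is the size of the single surviving piece of $K \setminus T_\text{leaf}$. By hypothesis, $m \geq 2$ and each $|K_j| \geq 1$, and since $|T_\text{cut}| = |T_\text{leaf}|$ we also have $\sum_j |K_j| = |K| - |T_\text{cut}| = s$.

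Next, I would establish the key lemma: for $f$ strictly convex on $\{0,\dots,n\}$ with $f(0) = 0$ and any integers $a, b \geq 1$ with $a + b \leq n$, we have $f(a) + f(b) < f(a+b)$. The cleanest route is via first differences: strict convexity means $g(k) := f(k+1) - f(k)$ is strictly increasing, so the telescoping identities $f(a+b) - f(a) = \sum_{k=a}^{a+b-1} g(k)$ and $f(b) = \sum_{k=0}^{b-1} g(k)$ together with $g(a+i) > g(i)$ for every $i$ yield $f(a+b) - f(a) > f(b)$ termwise. Iterating this strict superadditivity $m-1 \geq 1$ times gives $\sum_j f(|K_j|) < f(s)$, which combined with the first paragraph is exactly the claim $U_f(T_\text{cut},\s) < U_f(T_\text{leaf},\s)$. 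Since the \sqda{} opponent targets only regions minimizing $U_f$, $T_\text{leaf}$ cannot be targeted.

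The main obstacle I anticipate is not in any single step---each is routine---but in carefully using the hypothesis to secure strictness: I need that $K \setminus T_\text{cut}$ genuinely splits into $m \geq 2$ nonempty pieces (which is exactly what ``do not stay connected'' guarantees) and that $s \geq 1$ so the first-difference argument has room to operate. The latter is automatic since $T_\text{cut} \neq T_\text{leaf}$ are distinct vulnerable regions of $K$, forcing $|K| \geq |T_\text{cut}| + |T_\text{leaf}| \geq |T_\text{leaf}|+1$, hence $s \geq 1$. Conceptually, the corollary says that splitting a component (via $T_\text{cut}$) is strictly better for the attacker than cleanly carving off a leaf-piece of the same size (via $T_\text{leaf}$), because strict convexity of $f$ rewards fragmentation; the second \sqda{} property, that $f(x)/x^2$ is non-decreasing, is not needed here and will only come into play when comparing regions of different sizes.
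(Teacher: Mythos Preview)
Your proof is correct. The route differs from the paper's: the paper invokes its \Cref{lemma:f-opp-increase}, comparing the two attacks via the per-node vectors $(CC_i(T,\s))_i$ together with a permutation that swaps the agents of $T_\text{cut}$ and $T_\text{leaf}$, so that the inequality reduces to the fact that $n \mapsto f(n)/n$ is strictly increasing. You instead work directly at the component level, establish strict superadditivity $f(a)+f(b)<f(a+b)$ from first differences, and iterate. Both arguments rest on exactly the same hypotheses (strict convexity and $f(0)=0$), and your observation that the non-decreasing property of $f(x)/x^2$ is not needed here is correct. Your route is more self-contained for this single corollary; the paper's route is economical in context because \Cref{lemma:f-opp-increase} is a general tool reused in several later proofs (e.g., \Cref{cor:tech-cor2} and \Cref{lemma:greed-drop}), so the corollary becomes a one-line application. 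Incidentally, your superadditivity lemma is essentially the special case $g(a,b)>g(0,b)=0$ of the paper's \Cref{lemma:g-increase}, though the paper does not use that lemma here.
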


\begin{restatable}{corollary}{corTechCor}\label{cor:tech-cor2}
    Let $\s$ be a strategy profile and consider a vulnerable node~$u$ inside a vulnerable region $T$ in $G(\s)$. Further, let $\s'$ be the strategy profile derived from $\s$, where
    \begin{itemize}
        \item some edges are sold that are incident to nodes in $T$, and/or
        \item some agents in $T$ got immunized.
    \end{itemize}
    Then, if $T'$ is the vulnerable region containing node~$u$ in network $G(\s')$, it holds that $U_f(T', \s') \geq U_f(T,\s)$.
\end{restatable}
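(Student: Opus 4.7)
The plan is to directly compare the post-attack graphs $G(\s)\setminus T$ and $G(\s')\setminus T'$, and to argue that every component of the latter is obtained by merging some components of the former together with possibly extra vulnerable nodes from $T\setminus T'$. Once this structural picture is in place, the inequality on $U_f$ reduces to the superadditivity $f(a)+f(b)\leq f(a+b)$, which follows from the strict convexity of $f$ and $f(0)=0$ and already underlies \Cref{lemma:f-opp-increase}.

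First, I would verify that $T'\subseteq T$ as a set of nodes. The only changes from $\s$ to $\s'$ are selling edges incident to $T$ and immunizing nodes in $T$, so $\mathcal{U}(\s')\subseteq\mathcal{U}(\s)$ and $E(\s')\subseteq E(\s)$. Hence any path witnessing that two vulnerable nodes lie in the same vulnerable region of $\s'$ is already a vulnerable path in $\s$, and since $u$ is vulnerable in both profiles, the region $T'$ of $\s'$ containing $u$ must be a subset of $T$.

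Next, I would show that every component $K$ of $G(\s)\setminus T$ lies in a single component of $G(\s')\setminus T'$. Since $K\subseteq V\setminus T\subseteq V\setminus T'$ and every edge of $K$ has both endpoints in $V\setminus T$, such edges are not incident to $T$ and are therefore preserved in $E(\s')$; thus $K$ remains connected in $G(\s')\setminus T'$. Moreover the restriction of $G(\s')\setminus T'$ to $V\setminus T$ coincides with $G(\s)\setminus T$, so each component $K'$ of $G(\s')\setminus T'$ decomposes uniquely as a disjoint union of some components $K_{j_1},\ldots,K_{j_r}$ of $G(\s)\setminus T$ together with a (possibly empty) set of nodes in $T\setminus T'$.

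To conclude, I would invoke superadditivity: convexity of $f$ together with $f(0)=0$ implies that $x\mapsto f(x)/x$ is non-decreasing on the positive integers, which in turn gives $f(a)+f(b)\leq f(a+b)$ for all non-negative integers $a,b$. Applied to the decomposition above, $f(|K'|)\geq\sum_{i=1}^{r}f(|K_{j_i}|)$; summing over all components $K'$ of $G(\s')\setminus T'$ and using that every $K_j$ lies in exactly one $K'$ yields $U_f(T',\s')\geq U_f(T,\s)$. The main obstacle is setting up this structural picture cleanly: one must check that selling edges incident to $T$ neither introduces new connections among $V\setminus T$ nodes nor lets the region of $u$ escape $T$, and that immunizing nodes in $T$ can only split or shrink $T$ rather than merge it with other vulnerable regions. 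Once these bookkeeping points are in place, the rest is a one-line application of superadditivity.
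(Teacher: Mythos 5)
Your proof is correct and follows essentially the same route as the paper: establish $T'\subseteq T$ from the fact that only edges incident to $T$ are sold and only nodes in $T$ are immunized, observe that every surviving path in $G(\s)\setminus T$ persists in $G(\s')\setminus T'$ so components can only grow, and then conclude by a monotonicity/superadditivity property of $f$. The only cosmetic difference is that the paper finishes by citing \Cref{lemma:f-opp-increase} (pointwise domination of the component-size vector, using that $f(n)/n$ is increasing), whereas you re-derive the equivalent superadditivity $f(a)+f(b)\le f(a+b)$ directly; both rest on the same fact.
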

\noindent

The next corollary is one of the main tools for our other results. It follows from \Cref{lemma:greed-drop} (see Supp. Material).

\begin{restatable}{corollary}{corSizeAfterEdge}\label{cor:size-after-edge}
    Let $u$ be a cut-vertex in an equilibrium network w.r.t. an \sqda{} opponent $\mathcal{A}$, and $Z$ be a remaining connected component after node~$u$ is removed. If agent~$u$ bought all of the $k$ edges that it shares with nodes in $Z$, and if there are either targeted regions outside of $Z$ or no targeted regions included in $Z$, then it holds that $|Z| \ge kC_E$. Also, if node~$u$ was targeted, then $|Z| > kC_E$.
\end{restatable}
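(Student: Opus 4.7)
The plan is to prove both inequalities by a deviation argument: let $\s'$ be the strategy profile obtained from the equilibrium $\s$ by having agent $u$ sell all $k$ edges it owns into $Z$. Since $u$ is a cut-vertex of $G(\s)$ and $Z$ is a connected component of $G(\s) \setminus u$, these are the only $u$-$Z$ connections, so in $G(\s')$ the node $u$ is disconnected from $Z$ entirely; writing $Y$ for $u$'s connected component in $G(\s)$, the component of $u$ in $G(\s')$ is exactly $Y \setminus Z$. Agent $u$'s cost drops by $kC_E$, so the equilibrium inequality $u_u(\s) \ge u_u(\s')$ rearranges to
\[
\E[CC_u(\s)] - \E[CC_u(\s')] \ge kC_E.
\]
It therefore suffices to complement this with the upper bound $\E[CC_u(\s)] - \E[CC_u(\s')] \le |Z|$ (strict when $u$ is targeted), which together force $|Z| \ge kC_E$ (respectively $|Z| > kC_E$).

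I would first establish a \emph{pointwise} version of the upper bound: for every vulnerable region $T$ of $G(\s)$, $CC_u(T, \s) - CC_u(T, \s') \le |Z|$. The natural case split is $T \subseteq V \setminus Z$ versus $T \subseteq Z$. In the former the $k$ edges are untouched by the attack, so the $u$-component in $G(\s)[V \setminus T]$ is obtained from the one in $G(\s')[V \setminus T]$ by appending the whole of $Z$, giving a loss of exactly $|Z|$. In the latter the component of $u$ in $G(\s')[V \setminus T]$ is $Y \setminus Z$ irrespective of $T$, while in $G(\s)[V \setminus T]$ it has size at most $|Y| - |T|$, so the loss is at most $|Z| - |T| \le |Z|$. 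The subcase where $u$ itself is vulnerable is handled analogously after tracking how $u$'s own vulnerable region shrinks in $\s'$.

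The main obstacle is upgrading this pointwise bound to an expectation bound, because the \sqda{} opponent's targeted distribution in $\s'$ need not match that in $\s$. To control this shift I would invoke \Cref{lemma:greed-drop} together with the hypothesis that some region of $\mathcal{T}(\s)$ lies outside $Z$: the former, leveraging the strict convexity of $f$ and the fact that $f(x)/x^2$ is non-decreasing, quantifies how the minimum-$U_f$ region moves when edges incident to a single cut-vertex are removed; the latter excludes the pathological situation in which \emph{every} targeted region of $\s$ lies inside $Z$ and the attacker's entire mass in $\s'$ could pivot onto some cut region of $Y \setminus Z$ that would remove more than $|Z|$ additional nodes from the component of $u$. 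Combined with the pointwise bound, this yields $\E[CC_u(\s)] - \E[CC_u(\s')] \le |Z|$, and hence $|Z| \ge kC_E$.

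For the strict statement, when $u$ itself lies in some targeted region $T^* \in \mathcal{T}(\s)$ attacked with positive probability, the contribution $CC_u(T^*, \s) = 0$ strictly decreases $\E[CC_u(\s)]$ relative to the bound used in the previous paragraph, while the analysis there still applies verbatim to $\E[CC_u(\s')]$. A short additional use of strict convexity of $f$ upgrades the expectation bound to $\E[CC_u(\s)] - \E[CC_u(\s')] < |Z|$, and combined with the equilibrium inequality this yields $|Z| > kC_E$, completing the proof.
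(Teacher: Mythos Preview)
Your proposal is correct and takes essentially the same approach as the paper: combine the equilibrium inequality $u_u(\s)\ge u_u(\s')$ with the connectivity-loss bound from \Cref{lemma:greed-drop}. The paper's proof is just a one-line invocation of \Cref{lemma:greed-drop}, which already delivers the expectation inequality $\E_{\mathcal{T}(\s')}[CC_u(\s')]\ge \E_{\mathcal{T}(\s)}[CC_u(\s)]-|Z|$ (strict when $u$ is targeted) under exactly the stated hypothesis; your pointwise detour and the separate ``strict convexity'' argument for the targeted case are therefore unnecessary, since both the non-strict and strict expectation bounds are the literal conclusion of that lemma.
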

\noindent 
Next, we reprove and generalize a result from the full-version of \cite{Goyal16} (see \cite{Goyalarxiv}). 

\begin{restatable}{lemma}{lemmaTargetsSingleton}{(generalizes Theorem 6 in \cite{Goyalarxiv})}\label{lemma:targetsingletons}
    In any connected component with at least one immunized node, the targeted regions are singletons for equilibrium networks w.r.t. an SQD opponent.
\end{restatable}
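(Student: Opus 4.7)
The plan is a contradiction argument. Assume that $\s$ is an equilibrium, that $K$ is a connected component of $G(\s)$ containing at least one immunized node, and that a vulnerable region $T \subseteq K$ with $|T| \geq 2$ is targeted by the \sqda{} opponent (so $T$ is a tree by \Cref{lemma:1}). I will exhibit a profitable unilateral deviation and thereby contradict the equilibrium hypothesis.

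First I would locate a \emph{boundary} vertex of $T$. Because $K$ is connected and contains both the vulnerable nodes of $T$ and at least one immunized node, $K \setminus T$ is nonempty, and by maximality of $T$ as a connected vulnerable subgraph, any $K\setminus T$-neighbour of a $T$-node must be immunized. Thus there is an edge $\{t,w\} \in E(\s)$ with $t \in T$ vulnerable and $w \in K \setminus T$ immunized; fix such a pair. The candidate deviation is $\s \to \s'$ in which $t$ switches to being immunized while keeping all of its edges. The region $T$ is then replaced in $\s'$ by the vulnerable subtrees $T_1,\dots,T_r$ of $T \setminus \{t\}$ (a single region if $t$ is a leaf of $T$, possibly several otherwise). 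Applying \Cref{cor:tech-cor2} to a representative of each $T_i$ yields $U_f(T_i,\s') \geq U_f(T,\s)$, and every other vulnerable region is unchanged and already satisfied $U_f(\cdot,\s) \geq U_f(T,\s)$ since $T$ minimised $U_f(\cdot,\s)$. Consequently every region targeted in $\s'$ has $U_f$-value at least $U_f(T,\s)$, while $t$ is now never destroyed and, regardless of the attack, remains attached through $\{t,w\}$ to the immunized backbone of $K$ that the attack cannot touch.

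The heart of the proof, and its main obstacle, is the quantitative comparison $\E[CC_t(\s')] - \E[CC_t(\s)] > C_I$. In $\s$ the agent $t$ contributes $0$ to its own expected connectivity on the event that $T$ is attacked, which has probability $1/|\mathcal{T}(\s,\mathcal{A})|$; in $\s'$ it always contributes at least the size of the surviving component containing both $t$ and $w$. Two subcases must be treated. If no $T_i$ is targeted in $\s'$, the attacker strictly reallocates probability mass away from $T$'s slot, and since a non-trivial equilibrium together with the structural bounds of \Cref{lemma:1} force $\E[CC_t(\s)]$ to be of order $|K|$, the resulting multiplicative gain already dominates the constant $C_I > 1$. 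If instead some $T_i$ ties the minimum in $\s'$ and remains targeted, I would invoke \Cref{cor:sqd-cut-vertices} and the strict convexity of $f$ to argue that removing such a $T_i$ leaves $t$ together with $w$ inside a surviving component of size comparable to $|K| - |T_i|$, again producing a gain linear in $|K|$. The delicate point throughout is tracking how the \sqda{} opponent re-randomises its attack in $\s'$ and certifying that $t$'s post-attack component does not collapse to a small neighbourhood of $w$; this is where the combination of the edge-averse and largest-component-biased properties of \sqda{} opponents (\Cref{cor:tech-cor2} and \Cref{cor:sqd-cut-vertices}) is essential.
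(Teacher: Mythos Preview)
Your contradiction strategy differs from the paper's in a way that creates a genuine gap. The paper does \emph{not} have the boundary vertex immunize; instead it takes a leaf $u$ of the tree $T$ (rooted at a boundary vertex $x$ adjacent to an immunized node $y$), identifies which of $u$ or its parent $v$ bought the edge $\{u,v\}$, and has that agent \emph{swap} this edge for an edge to $y$. This deviation is cost-neutral, so any strict increase in expected connectivity is already an improving move. After the swap $T$ splits into $T_u$ and $T_v$, both of which have strictly larger $U_f$ than $T$ did, so (having first argued that $T$ was not the unique target) the new target set is exactly $\mathcal{T}(\s)\setminus\{T\}$ and the deviating agent's expected connectivity becomes its old expectation conditioned on $T$ not being hit---a strict improvement.

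Your deviation, by contrast, costs $C_I$, and you never establish that the connectivity gain exceeds $C_I$. The claim that ``the resulting multiplicative gain already dominates the constant $C_I>1$'' is not justified and is in general false: if $|\mathcal{T}(\s)|$ is large, the probability that $T$ is attacked is only $1/|\mathcal{T}(\s)|$, so the gain from guaranteeing $t$'s survival is on the order of $|K|/|\mathcal{T}(\s)|$, which can be arbitrarily small compared to $C_I$. (Even the observation that some node in $K$ chose to immunize, giving $|K|\ge C_I$, does not rescue this.) Your second subcase, where some $T_i$ remains targeted in $\s'$, is also only sketched; the appeal to \Cref{cor:sqd-cut-vertices} does not obviously control the size of $t$'s post-attack component relative to $C_I$. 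The fix is precisely the paper's: use a zero-cost edge swap rather than immunization, so that a strict connectivity increase suffices and no comparison with $C_I$ is needed.
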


\noindent
The next results consider what happens if the \sqda{} opponent targets cut-vertices of the network.  
\begin{restatable}{lemma}{lemmaConditionCutTargeted}\label{lemma:condition-cut-targeted}
    In a Nash equilibrium with respect to an $SQD$ opponent~$\mathcal{A}$, if a connected component contains an immunized node and a targeted cut-vertex, then, 
    \begin{enumerate}[label=\roman*),topsep=0pt,itemsep=2pt,leftmargin=20pt]
        \item when removing any targeted node, the size of all remaining connected components is strictly higher than $C_E$, and
        \item there are at least two targeted regions outside of this component.
    \end{enumerate}
\end{restatable}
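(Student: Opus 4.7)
Both parts proceed by contradiction using deviation arguments. For part (i), I fix a targeted node $u \in K$ and a connected component $Z$ of $G(\s)[K \setminus \{u\}]$, and assume $|Z| \le C_E$. By Lemma \ref{lemma:targetsingletons}, $u$ is a singleton vulnerable region, so every neighbor of $u$ is immunized. Let $k$ be the number of edges between $u$ and $Z$ in $G(\s)$. If $u$ owns all $k$ of them, Corollary \ref{cor:size-after-edge} (invoked with the ``$u$ is targeted'' clause) immediately gives $|Z| > k C_E \ge C_E$, contradicting the assumption. Otherwise, some immunized $w \in Z$ owns the edge $\{u,w\}$, and I study the deviation $\s'$ in which $w$ sells this edge.

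In this deviation $w$ saves $C_E$ in cost, so to contradict Nash it suffices to show $\E[CC_w(\s)] - \E[CC_w(\s')] < C_E$. Two ingredients drive the argument. First, Corollary \ref{cor:tech-cor2} applied to the singleton $\{u\}$ (to which the sold edge is incident) yields $U_f(\{u\},\s') \ge U_f(\{u\},\s)$. Second, strict super-additivity of $f$ (which follows from strict convexity together with $f(0)=0$) shows that for any region $T'$ sitting in a component outside $K$, attacking $T'$ becomes strictly cheaper in $\s'$ by the positive amount $f(|K|) - f(|K \setminus Z|) - f(|Z|) > 0$, in the subcase $k=1$ where $w$'s sale disconnects $Z$ from $u$'s side of $K$. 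Hence every outside target of $\s$ is strictly preferred over $\{u\}$ in $\s'$, so $u$ leaves the new target set and $w$'s post-attack component is always contained in $Z$; comparing with $\E[CC_w(\s)]$, which averages the small value $|Z|$ (only when $u$ is attacked) with strictly larger values under every other attack, yields the required strict inequality. The subcase $k \ge 2$ is easier: $w$'s sale does not disconnect anything, so $w$'s connectivity barely changes while it still saves $C_E$. The residual situation where $\s$ has no outside target is dispatched by part (ii), proved next.

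For part (ii), I suppose at most one targeted region lies outside $K$ and derive a contradiction. By part (i), the targeted cut-vertex $v$ partitions $K$ into at least two components $Z_1, Z_2, \ldots$, each of size strictly greater than $C_E$, and each $Z_i$ contains an immunized neighbor $w_i$ of $v$ (since $\{v\}$ is a singleton vulnerable region). I consider the deviation where $w_1$ buys the edge $\{w_1, w_2\}$. In the new profile, attacking $v$ merges $Z_1$ and $Z_2$ into a single post-attack component, and by strict convexity of $f$ this strictly raises $U_f(\{v\},\cdot)$ relative to every other target, so the attacker must reallocate probability mass off $v$. The resulting connectivity gain for $w_1$ concentrates on former $v$-attacks (where $w_1$ now reaches $Z_2$ in addition to $Z_1$) and strictly exceeds $C_E$ once one combines part (i)'s size bound $|Z_2| > C_E$ with the observation that, because at most one outside target exists, the attacker has few alternative low-$U_f$ regions to spread mass over, so a nontrivial fraction of probability must indeed leave $v$. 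The main obstacle throughout both parts is tracking how the SQD opponent's minimizers of $U_f$ shift after a single-edge deviation --- exactly the counter-intuitive behavior highlighted in Example \ref{example} --- and Corollary \ref{cor:tech-cor2} together with the strict super-additivity of $f$ supply the matched monotone controls on $U_f$ (inside-$K$ minima only grow, outside-$K$ minima strictly shrink) that keep this shift analysis tractable.
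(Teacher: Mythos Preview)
Your argument for part~(i) breaks down in the subcase where an immunized $w\in Z$ owns the edge $\{u,w\}$ and $k=1$. After $w$ sells this edge, $w$ is stranded in the small component $Z$, so $\E[CC_w(\s')]\le |Z|\le C_E$. But in $\s$, agent $w$'s connectivity averages $|Z|$ (only when $u$ is attacked) with values that can be as large as $|K|-1$ otherwise. Thus $\E[CC_w(\s)]-\E[CC_w(\s')]$ can be of order $|K|$, which is in general far larger than $C_E$; the sale is not an improving deviation and no contradiction follows. In other words, the inequality you claim goes the wrong way. The paper avoids this trap by (a) choosing a targeted cut-vertex $y$ whose removal creates a component $Z_x$ of \emph{minimum} size, and (b) having the neighbor $x\in Z_x$ perform an edge \emph{swap} to an immunized node $w$ on the other side of $y$, rather than a sell. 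Minimality is essential: it guarantees that wherever the new target set lands after the swap, $x$'s post-attack component is at least as large as before, so the swap is strictly improving. This then forces $y$ to own all the $y$--$Z_x$ edges, at which point \Cref{cor:size-after-edge} applies cleanly.

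There is also a circularity: you defer the ``no outside target'' subcase of (i) to part~(ii), but your proof of (ii) invokes the size bound from part~(i). Finally, your sketch for (ii) is too loose to be an argument: buying $\{w_1,w_2\}$ does raise $U_f(\{v\},\cdot)$, but you have not controlled where the attacker's mass moves (it could move to other cut-vertices inside $K$ that are just as bad for $w_1$), nor quantified the resulting gain against $C_E$. The paper's proof of (ii) instead uses a swap by an agent \emph{inside} the small component $Z_x$, exploits minimality again to bound the post-deviation connectivity, and then separately shows that if only one outside target existed, agent $y$ would strictly prefer to drop its edge into $Z_x$.
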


\begin{restatable}{corollary}{lemmaNoIsolatedCut}\label{lemma:no-isolated-cut}
    In a Nash equilibrium $\s$ with respect to an $SQD$ opponent $\mathcal{A}$, if a connected component contains an immunized node and a targeted cut-vertex, then there is no isolated node in $G(\s)$.
\end{restatable}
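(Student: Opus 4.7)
The plan is to assume for contradiction that $G(\s)$ contains an isolated node $v$ and produce a strictly profitable deviation in each of two cases. Write $K$ for the component containing both an immunized agent and the targeted cut-vertex $u$, and $K_1,\dots,K_r$, with $r \geq 2$, for the components of $K \setminus \{u\}$. By Lemma~\ref{lemma:targetsingletons}, the singleton $\{u\}$ being targeted forces every neighbor of $u$ inside $K$ to be immunized; by Lemma~\ref{lemma:condition-cut-targeted}\,(i), each $|K_i| > C_E$.

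If $v$ is immunized, then $v$'s utility is $1 - C_I < 0$. Un-immunizing yields a utility in $[0,1]$ regardless of whether $\{v\}$ becomes a target of $\mathcal{A}$, strictly beating $1 - C_I$, contradicting best response.

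If $v$ is vulnerable, I would first rule out that $\{v\}$ is targeted. Comparing $U_f(\{v\},\s)$ with $U_f(\{u\},\s)$ reduces to $f(|K|) > f(1) + \sum_i f(|K_i|)$, which follows by iterating the strict superadditivity $f(a+b) > f(a) + f(b)$ for $a,b \geq 1$, itself a consequence of strict convexity combined with $f(0) = 0$. Hence $\{u\}$ is strictly preferred to $\{v\}$, so $\{v\}$ is not targeted and $v$'s utility equals exactly $1$. For the deviation, let $K_1$ be a smallest piece of $K \setminus \{u\}$, so that $|K_1| \leq (|K|-1)/r \leq (|K|-1)/2 < |K|/2$, and let $y$ be an immunized neighbor of $u$ inside $K_1$. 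Consider the profile $\s'$ in which $v$ buys the single edge $\{v,y\}$. The central claim is that $\mathcal{A}$ still uniquely targets $\{u\}$ in $\s'$; granting this, $v$'s post-attack component is $K_1 \cup \{v\}$ of size $|K_1|+1 > C_E + 1$, giving $v$ utility $|K_1|+1 - C_E > 1$, a strict improvement.

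To prove the central claim I would track, for each vulnerable region $T$, how $U_f(T,\cdot)$ shifts from $\s$ to $\s'$. For any former target $T$ disjoint from $K \cup \{v\}$, the shift is $f(|K|+1) - f(|K|) - f(1)$; for $\{u\}$ it is only $f(|K_1|+1) - f(|K_1|) - f(1)$. Strict convexity together with $|K_1| < |K|$ makes $\{u\}$ strictly preferred to every such $T$. A direct superadditivity computation rules out the new singleton $\{v\}$ sitting inside $K \cup \{v\}$. The delicate remaining case is a non-targeted $T' \subseteq K$ with $T' \neq \{u\}$: let $P$ denote the component of $K \setminus T'$ containing $u$ (and hence $y$, since $y$ is adjacent to $u$ and $y \notin T'$). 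If $T'$ became targeted in $\s'$, combining $U_f(T',\s) > U_f(\{u\},\s)$ with $U_f(T',\s') \leq U_f(\{u\},\s')$ would force $f(|P|+1) - f(|P|) < f(|K_1|+1) - f(|K_1|)$, hence $|P| < |K_1|$ by strict convexity. But a quick case split shows $|P| \geq |K| - |K_1|$ when $T' \subseteq K_1$ (since $P$ then contains $u$ and every $K_i$ with $i \neq 1$) and $|P| \geq 1 + |K_1|$ when $T' \subseteq K_{i_0}$ for some $i_0 \neq 1$ (since $P$ then contains $u$ and all of $K_1$); the choice of $K_1$ as a smallest piece makes both lower bounds strictly larger than $|K_1|$, contradicting $|P| < |K_1|$.

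The main obstacle is exactly this last case analysis: without anchoring $v$'s new edge to a smallest piece of $K \setminus \{u\}$, an interior region $T' \subseteq K_1$ could in principle become the new minimum of $U_f$ in $\s'$ and shrink $v$'s post-attack component below the $C_E + 1$ threshold, breaking the improvement argument.
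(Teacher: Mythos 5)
Your argument is correct, and it uses the same deviation as the paper (the isolated node buys an edge to an immunized neighbor of the targeted cut-vertex inside one piece of $K\setminus\{u\}$), but the verification is a genuinely different and substantially heavier route. The paper does not prove that $\{u\}$ remains the unique target in $\s'$; it only needs that the new leaf singleton $\{v\}$ is not targeted, which follows in one line from \Cref{cor:sqd-cut-vertices} (a cut singleton is strictly preferred to a leaf singleton of equal size in the same component). Given that, \emph{whatever} region is attacked, $v$'s neighbor $y$ stays attached to $u$ and hence to at least one full piece $K_i$ with $|K_i|>C_E$ by \Cref{lemma:condition-cut-targeted}, so $v$'s post-attack component always exceeds $C_E+1$ and the edge pays off. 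This means the ``main obstacle'' you identify at the end is not actually an obstacle: even if an interior region $T'\subseteq K_j$ became the new target, the component of $u$ in $K\setminus T'$ still contains $u$ and every $K_i$ with $i\neq j$, so $v$'s post-attack component never drops below $C_E+2$; consequently the anchoring to a \emph{smallest} piece and the entire $f'$-comparison ruling out interior targets are superfluous. Your stronger claim is nonetheless correctly established: the shift computations, the superadditivity argument ruling out $\{v\}$, and the $|P|<|K_1|$ contradiction are all sound. Two small points: you only treat \emph{former targets} disjoint from $K\cup\{v\}$, but non-targeted vulnerable regions outside $K\cup\{v\}$ undergo the identical shift $f(|K|+1)-f(|K|)-f(1)$ and start strictly above the minimum, so they are excluded a fortiori (worth one sentence); and your handling of the immunized isolated node by un-immunizing is a valid alternative to the paper's uniform edge-buying deviation.
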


\noindent
Next, we bound the sizes of connected components.

\begin{restatable}{lemma}{lemmaMinimalSize}\label{lemma:minimal_size}
    In a Nash equilibrium $\s$ with respect to an $SQD$ opponent $\mathcal{A}$, the size of a connected component is either $1$ or at least $C_E+1$.
\end{restatable}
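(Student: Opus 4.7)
Proceed by contradiction: suppose some Nash equilibrium $\s$ has a connected component $Z$ with $2 \le |Z| = s \le C_E$. Since $Z$ is connected with at least one edge, some agent $u \in Z$ is the sole buyer of an edge $e = \{u, v\}$ inside $Z$ (any edge with two buyers admits a costless sell for the redundant buyer). Consider the deviation $\s'$ where $u$ removes $e$ from $X_u$. Writing $\alpha = \E[CC_u(\s)]$ and $\alpha' = \E[CC_u(\s')]$, equilibrium gives $\alpha - \alpha' \ge C_E$, while $u$'s component being inside $Z$ forces $\alpha \le s \le C_E$. I will show $\alpha' \ge 1$; the chain $\alpha \ge C_E + 1 > s \ge \alpha$ then closes the contradiction.

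\textbf{Easy branches.} If $u$ is immunized, $u$ is never destroyed in $\s'$, so $\alpha' \ge 1$ at once. Otherwise $u$ is vulnerable with vulnerable region $T_u$ in $\s$. If some region of $Z$ is targeted with positive probability in $\s$ --- either $T_u$ itself with probability $p > 0$ giving $\alpha \le (1-p)s$, or some region $T^{**} \ne T_u$ of $Z$ with total probability $q > 0$ giving $\alpha \le s - q$ --- then combined with $\alpha \ge C_E \ge s$, both alternatives produce an immediate contradiction. Hence no region inside $Z$ is targeted in $\s$; then $\alpha = s$, and the equilibrium condition forces $\alpha' = 0$, i.e., $u$ must be destroyed with probability one in $\s'$, so the new vulnerable region $T_u'$ containing $u$ must be the unique $U_f$-minimizer in $\s'$.

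\textbf{Main case.} I will refute this by exhibiting a region with strictly smaller $U_f$ in $\s'$. Fix any $T^* \in \mathcal{T}(\s)$, necessarily outside $Z$. By \Cref{cor:tech-cor2}, $U_f(T_u', \s') \ge U_f(T_u, \s)$, and since $T_u$ is non-targeted in $\s$, $U_f(T_u, \s) > U_f(T^*, \s)$. To compare $U_f(T^*, \cdot)$ across the two profiles I split on whether $e$ is a bridge of $G(\s)$: if so, $Z$ splits into $Z_1, Z_2$ in $\s'$, and strict convexity of $f$ with $f(0)=0$ gives $f(|Z_1|) + f(|Z_2|) < f(|Z|)$, so $U_f(T^*, \s') < U_f(T^*, \s)$; if not, then $e$ must lie inside a vulnerable region (otherwise selling $e$ alters neither the graph nor any region, a strict improvement for $u$), $Z$ stays connected in $\s'$, and $U_f(T^*, \s') = U_f(T^*, \s)$. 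Chaining yields $U_f(T_u', \s') > U_f(T^*, \s')$, contradicting the assumed minimality of $T_u'$.

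\textbf{Main obstacle.} The weight of the proof sits entirely in the final step. The naïve bound $\alpha \le s$ is tight in the forced scenario, so the extra ``$+1$'' in $s \ge C_E + 1$ cannot come from connectivity counting and has to be extracted from how the \sqda{} opponent re-targets post-deviation. The delicate aspect is reconciling two opposite-direction $U_f$-bounds: \Cref{cor:tech-cor2} controls $T_u'$ from below relative to $T_u$, while strict convexity of $f$ controls the outside regions from above under the deviation; both are needed to conclude that outside regions dominate $T_u'$ in $\s'$.
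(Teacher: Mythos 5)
Your proof is correct, and while it runs on the same engine as the paper's --- an edge-selling deviation combined with the edge-aversion of the \sqda{} opponent (\Cref{cor:tech-cor2}) --- the execution is genuinely different. The paper has a buyer $u$ drop \emph{all} bought edges and its immunization, asserts via \Cref{cor:tech-cor2} that $u$'s destruction probability does not increase, and reads off $(1-p)(|Z|-1)\ge C_E$ directly, with no contradiction hypothesis on $|Z|$. You instead sell a single edge, reduce by contradiction (assuming $2\le|Z|\le C_E$) to the situation where $u$'s post-deviation region would have to be the \emph{unique} $U_f$-minimizer, and refute that via the chain $U_f(T_u',\s')\ge U_f(T_u,\s)>U_f(T^*,\s)\ge U_f(T^*,\s')$. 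What your route buys is that the monotonicity of the attack under edge sales is \emph{proved} rather than asserted: the paper's step ``the probability of being destroyed does not increase'' needs, beyond \Cref{cor:tech-cor2}, the observation that the $U_f$-value of the competing regions does not go up --- which is exactly your bridge/non-bridge analysis of $U_f(T^*,\cdot)$ (strict superadditivity of $f$ from convexity and $f(0)=0$ in the bridge case, equality otherwise). The price is length and a case split the paper avoids. Two cosmetic points: the parenthetical in your non-bridge case (``selling $e$ alters neither the graph nor any region'') is muddled and unnecessary --- all you need there is that $Z$ stays connected and $T^*$ lies outside $Z$, so $U_f(T^*,\s')=U_f(T^*,\s)$; and your ``easy branches'' silently use $\alpha\ge C_E$, which follows from $\alpha-\alpha'\ge C_E$ and $\alpha'\ge 0$ and deserves one explicit line.
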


\noindent 
The next lemma deals with the implications of the existence of a targeted region that is a cut-vertex inside a component with an immunized node.

\begin{restatable}{lemma}{singleComponentCut}\label{single-component-cut}
    In a Nash equilibrium $\s$ with respect to an \sqda{} opponent $\mathcal{A}$, if two different connected components each contain an immunized node and a targeted region, then none of the targeted regions are cut-vertices.
\end{restatable}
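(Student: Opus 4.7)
The plan is to derive a contradiction by exhibiting a profitable deviation for an immunized agent. Suppose, for contradiction, that a targeted region in $K_1$ is a cut-vertex $v$, where $K_1$ and $K_2$ are two distinct connected components each containing an immunized node and a targeted region. By \Cref{lemma:targetsingletons}, the targeted region containing $v$ is the singleton $\{v\}$, so every neighbor of $v$ in $K_1$ is immunized; by \Cref{lemma:condition-cut-targeted}(i), every component of $G(\s)[K_1\setminus\{v\}]$ has size strictly greater than $C_E$ and contains at least one immunized neighbor of $v$. Let $w'\in K_2$ be immunized and $x$ be an immunized neighbor of $v$ in some component $Z^\star$ of $G(\s)[K_1\setminus\{v\}]$, so $|Z^\star|>C_E$.

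I would then consider the deviation $\s'$ in which $w'$ buys a single new edge $\{w',x\}$. This merges $K_1$ and $K_2$ into one component $K$ in $G(\s')$ while leaving the set of vulnerable regions unchanged. The core of the argument is the claim that for every vulnerable region $T$ that the \sqda{} opponent targets in $\s'$, the component of $w'$ in $G(\s')[V\setminus T]$ has size strictly greater than $|K_2|+C_E$. I would verify this by case analysis on $T$: (a) if $T=\{v\}$, then $w'$ lies in $Z^\star\cup K_2$, which has size $|Z^\star|+|K_2|>C_E+|K_2|$; (b) if $T$ lies in a component of $G(\s)$ other than $K_1,K_2$, the attack leaves $K$ intact, so $w'$'s component has size $|K_1|+|K_2|>C_E+|K_2|$ using $|K_1|>C_E$; (c) if $T$ is a non-cut-vertex vulnerable region of $K_1$ or $K_2$, the strict convexity of $f$ together with $f(0)=0$ (hence super-additivity), combined with the fact that $v$ is a cut-vertex (so $k\ge 2$), yields $U_f(T,\s')>U_f(v,\s')$, and $T$ is not targeted in $\s'$; (d) if $T$ is a cut-vertex vulnerable region of $K_1$ already targeted in $\s$, \Cref{lemma:condition-cut-targeted}(i) applied to $T$ forces every side of $T$ to have size $>C_E$, so $w'$'s post-attack component strictly exceeds $|K_2|+C_E$.

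The main obstacles I anticipate are the symmetric subcase where $T$ is a cut-vertex of $K_2$ already targeted in $\s$, and the subcase where $T$ becomes a target only in $\s'$. For the first, the bound follows directly when $|K_1|\ge|K_2|$; otherwise I would apply the same deviation with the roles of $K_1$ and $K_2$ swapped, and since $|K_1|<|K_2|$ and $|K_2|<|K_1|$ cannot both hold, at least one of the two symmetric deviations succeeds. For the second, the \sqda{} preference $U_f(T,\s')\le U_f(v,\s')$ together with strict convexity forces the $x$-containing side of $T$ to be strictly smaller than $|Z^\star|$, and I would close this gap by exploiting the adjacency of $x$ to $v$ together with \Cref{cor:sqd-cut-vertices} and \Cref{cor:tech-cor2}, to either reduce to an already-targeted cut-vertex (where \Cref{lemma:condition-cut-targeted}(i) supplies the bound) or to rule out such a $T$ entirely. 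Once the claim is established, $\E[CC_{w'}(\s')]>|K_2|+C_E$ while trivially $\E[CC_{w'}(\s)]\le|K_2|$, so $u_{w'}(\s')-u_{w'}(\s)=\E[CC_{w'}(\s')]-\E[CC_{w'}(\s)]-C_E>0$, contradicting that $\s$ is a Nash equilibrium.
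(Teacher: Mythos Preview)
Your overall plan---merge the two components by a single purchased edge and show the deviation strictly improves---is also the paper's, but you run it in the opposite direction and omit a preparatory step the paper relies on. The paper first uses strict convexity of $f$ to show that if $Z_1$ carries a targeted cut-vertex then one may take $|Z_1|\le|Z_2|$; it then lets an immunized node $x$ \emph{in the smaller component $Z_1$} buy the edge to an immunized neighbor $y$ of a targeted vertex in $Z_2$. That size inequality is what makes their case ``$T\subset Z_1$'' go through (the post-attack component of $x$ contains all of $Z_2\ge|Z_1|$ plus $CC_x(T,\s)>C_E$ from \Cref{lemma:condition-cut-targeted}(i)), and the choice of $y$ handles ``$T\subset Z_2$''. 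You instead have $w'\in K_2$ buy into $K_1$ and defer the size comparison to a ``swap roles'' device, which makes the case split heavier and leaves a genuine gap.

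The gap is your second obstacle, regions that become targeted only in $\s'$. Your observation that $U_f(T,\s')\le U_f(v,\s')$ forces the $x$-side of $K_1\setminus T$ to be strictly smaller than $|Z^\star|$ is correct (via \Cref{lemma:g-increase}), but it gives an \emph{upper} bound where you need a lower bound $>C_E$; the tools you invoke (\Cref{cor:sqd-cut-vertices}, \Cref{cor:tech-cor2}) do not supply that, and \Cref{lemma:condition-cut-targeted}(i) applies only to nodes targeted in the equilibrium $\s$, not to fresh targets in $\s'$. A smaller hole sits in case~(c): your super-additivity comparison $U_f(T,\s')>U_f(v,\s')$ pits the removal of one vertex ($v$) against the removal of $|T|$ vertices, and only goes through when $|T|=1$; since \Cref{lemma:targetsingletons} constrains only \emph{targeted} regions in $\s$, a non-singleton vulnerable region in $K_1\cup K_2$ could in principle become a target after the merge and is not covered. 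To finish along your lines you must bound $CC_{w'}(T,\s')$ for every $T\in\mathcal T(\s')$, not merely those inherited from $\mathcal T(\s)$; the paper's upfront ordering $|Z_1|\le|Z_2|$ together with its choice of endpoint adjacent to a targeted node of $Z_2$ is exactly what delivers this uniformly.
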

\noindent 
Later will be bound the number of agents in some cases of Nash equilibria. The following statement is useful for this.

\begin{restatable}{lemma}{lemmaBoundedNumberComponentsCut}\label{lemma:bounded-number-components-cut}
    In a Nash equilibrium $\s$ with respect to an \sqda{} opponent $\mathcal{A}$, the number of connected components is bounded by $C_E + C_I +2$ if there are no isolated nodes.
\end{restatable}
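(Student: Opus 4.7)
My plan is to argue by contradiction: assume the equilibrium $\s$ has $k \ge C_E + C_I + 3$ connected components. Since there are no isolated nodes, \Cref{lemma:minimal_size} forces every component to have size at least $C_E + 1$, so the network is large in a quantitative sense. My goal is then to construct a profitable deviation for a suitably chosen immunized agent, violating the equilibrium condition.

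First I would reduce to the case where every component contains at least one immunized node. If some component $K$ is entirely vulnerable, it forms a single tree-shaped vulnerable region of size $\ge C_E + 1$ by \Cref{lemma:1}. I would argue that some agent $y \in K$ benefits from immunizing: by \Cref{cor:tech-cor2}, introducing an immunization strictly increases $U_f$ for the region containing $y$, so the $\sqda{}$ attacker's probability of attacking inside $K$ drops; combined with the presence of many attractive alternative targets among the other $k-1$ components, the expected gain in connectivity from being spared dominates the immunization cost $C_I$.

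Having reduced to the case where every component hosts an immunized node, \Cref{lemma:targetsingletons} implies that all targeted regions are singletons, and \Cref{single-component-cut} forbids any such singleton from being a cut-vertex (since there are at least two components carrying both an immunized node and a targeted region). I would then pick an immunized agent $x$ in some component $K_1$ and consider the deviation where $x$ buys one edge to a chosen immunized node in each of the other $k-1$ components, at total cost $(k-1)C_E$. In the merged network the collection of vulnerable regions is unchanged, and by \Cref{cor:sqd-cut-vertices} the $\sqda{}$ attacker still favours non-cut-vertex singleton targets, so the post-deviation attack removes only a bounded number of nodes from $x$'s new component. This pushes $x$'s post-deviation connectivity up to $n - O(1)$, while pre-deviation it was at most $|K_1| \le n - (k-1)(C_E+1)$. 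The net utility change is therefore at least $(k-1)(C_E+1) - (k-1)C_E - O(1) = k - 1 - O(1)$, which is strictly positive once $k \ge C_E + C_I + 3$, provided the $O(1)$ correction can be kept below $C_I + 1$.

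The main obstacle will be controlling the $\sqda{}$ attacker in the merged network. A region that was not targeted originally — because it was a cut-vertex in its own component, whose destruction would split that component into pieces too small to be attractive — could become a preferred target after the merge, since its destruction in the merged graph may still disconnect a large subtree from the immunized backbone through $x$. Bounding the additional damage requires combining the equilibrium inequalities on the original graph (which say such a cut-vertex was strictly worse than the actual non-cut singleton targets) with the strict convexity of $f$ and the monotonicity of $f(x)/x^2$ built into the definition of $\sqda{}$. I expect this to be the most technical part of the proof, and the place where the constant $C_E + C_I + 2$ emerges precisely.
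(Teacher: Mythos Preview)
Your proposal has genuine gaps, and more importantly it misses the simple argument that makes the lemma almost immediate. First, your reduction step is unjustified: you claim that in an entirely vulnerable component $K$ some agent $y$ profits from immunizing because the attacker's probability of hitting $K$ ``drops''. But immunizing $y$ splits $K$ into several smaller vulnerable subtrees, any of which the \sqda{} attacker may still prefer over the other components; you give no quantitative control showing the expected gain exceeds $C_I$. Second, your invocation of \Cref{single-component-cut} presupposes that at least two components carry both an immunized node and a targeted region, which you have not established. Third, and most seriously, you yourself identify the ``main obstacle'' --- that after the merge the attacker may switch to a formerly non-targeted cut-vertex --- and then leave it unresolved, merely expressing hope that the constant $C_E+C_I+2$ will emerge from the analysis.

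The paper's proof sidesteps all of this with a crude but robust bound that requires no understanding of the attacker's post-deviation behaviour. Pick \emph{any} node $x$ in the \emph{smallest} component $Z_r$ and have it immunize (paying $C_I$) and buy one edge into each of the other $r-1$ components. No reduction to ``every component has an immunized node'' is needed; no structural control on targeted regions is needed. After the deviation $x$ is immunized, and whatever single vulnerable region the attacker destroys can sever $x$ from at most one of the original components $Z_1,\dots,Z_r$; hence $x$'s post-attack connectivity is at least $\sum_i |Z_i| - \max_i |Z_i|$. Comparing this with the trivial upper bound $|Z_r|$ on $x$'s original connectivity and using $|Z_i|\ge C_E+1$ from \Cref{lemma:minimal_size} yields $r \le C_E + C_I + 2$ in two lines. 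The key idea you are missing is precisely this: you do not need to track where the \sqda{} opponent attacks after the deviation --- losing at most the largest component is already enough.
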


\noindent
The next lemma allows us to eliminate the possibility of targeted cut-vertices as they only exist in small networks and thus do not impact the asymptotic bounds.

\begin{restatable}{lemma}{lemmaCutImplyBounded}\label{lemma:cut-imply-bounded}
    If there is a Nash equilibrium $\s$ with respect to an \sqda{} opponent $\mathcal{A}$ such that $G(\s)$ has a component with a targeted cut-vertex and an immunized node, then the number of agents in this game instance is at most
        $$(C_I+C_E+2)(2C_I + 3C_E).$$
\end{restatable}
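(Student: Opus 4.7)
The plan is to bound $n$ as a product of two quantities: the number of connected components of $G(\s)$ and the maximum size of any single component. I will show that the first is at most $C_I + C_E + 2$ and the second is at most $2C_I + 3C_E$, and then multiply to conclude.

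First, since $G(\s)$ contains a component with a targeted cut-vertex and an immunized node, \Cref{lemma:no-isolated-cut} ensures that $G(\s)$ has no isolated nodes, so \Cref{lemma:bounded-number-components-cut} directly yields that the number of connected components is at most $C_I + C_E + 2$.

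Second, I would show that every component $K$ satisfies $|K| \le 2C_I + 3C_E$ by a single-agent deviation argument. Assume for contradiction that some component $K$ has $|K| > 2C_I + 3C_E$. By \Cref{lemma:condition-cut-targeted}(ii), at least two components besides the one containing the targeted cut-vertex also contain targeted regions, so I can pick a targeted vulnerable node $u$ in some component $K' \ne K$. The deviation I analyze is: $u$ immunizes and buys a single edge to a carefully chosen node of $K$. This costs an extra $C_I + C_E$. Before the deviation, $u$ is targeted with positive probability and is destroyed in that event, so its expected connectivity is strictly less than $|K'|$. After the deviation, $u$ is immunized and lies inside the merged component $K \cup K'$, so its new connectivity is at least $|K|$ minus the damage of any new attack falling inside this merged region. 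Using \Cref{cor:tech-cor2} (edge- and immunization-averseness) together with \Cref{lemma:targetsingletons} (targeted regions are singletons in components with an immunized node) and \Cref{lemma:minimal_size} (minimum non-trivial component size), I would bound this worst-case damage by $C_I + 2C_E$. Then the deviation improves $u$'s utility by at least $|K| - (C_I + 2C_E) - (C_I + C_E)$, which is strictly positive under the assumption $|K| > 2C_I + 3C_E$, contradicting the Nash property. Multiplying the two bounds yields the stated inequality on $n$.

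The main obstacle is controlling the re-targeting of the \sqda{} opponent induced by the deviation. As \Cref{example} illustrates, even tiny structural modifications can cause the attack distribution to shift drastically, so one cannot naively evaluate the post-deviation utility under the pre-deviation attack distribution. The key technical step will be to argue that, regardless of where the attacker redirects after the deviation, the connectivity loss absorbed inside the merged component $K \cup K'$ is bounded by a constant depending only on $C_E$ and $C_I$. This should follow by combining \Cref{cor:tech-cor2}, which guarantees that immunizing $u$ and adding an edge can only make the region containing $u$ less attractive to the opponent, with the singleton structure of targeted regions guaranteed by \Cref{lemma:targetsingletons}.
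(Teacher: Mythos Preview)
Your overall architecture---bound the number of components via \Cref{lemma:no-isolated-cut} and \Cref{lemma:bounded-number-components-cut}, then bound the maximum component size, then multiply---matches the paper. The gap is in the second step.

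The claim that the ``worst-case damage'' inside the merged component can be bounded by $C_I + 2C_E$ is not substantiated by the lemmas you cite, and in fact fails for a structurally important case. \Cref{cor:tech-cor2} only says that immunizing or dropping edges inside a vulnerable region makes that region weakly \emph{less} attractive to the opponent; it gives no control over how much connectivity $u$ retains when the opponent attacks a region of the large component $K$. \Cref{lemma:targetsingletons} guarantees that targeted regions in components with an immunized node are singletons, but a singleton can still be a cut-vertex whose removal disconnects an arbitrarily large fraction of $K$; indeed, the very hypothesis of the lemma is that such a cut-vertex exists in $Z$. More seriously, $K$ may have \emph{no immunized centroid}: by \Cref{lemma:no-centroid-implies-vulnerable}, $K$ then contains a vulnerable region whose removal leaves every remaining piece of size strictly below $|K|/2$. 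No choice of attachment point for $u$ avoids losing at least half of $K$ in that event, so the damage is $\Theta(|K|)$, not a constant.

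The paper handles this by splitting into three sub-arguments rather than one uniform deviation. First it bounds the size of the fully vulnerable targeted components $T_i$ by $2C_E + C_I$ via a deviation from $T_1$ that buys edges into \emph{two} other components (Step~1); this bound on $|T_1|$ is then an input to the later steps. For a component $K$ with an immunized centroid, it attaches the deviating agent to that centroid, guaranteeing at least $|K|/2$ connectivity after any attack inside $K$; combined with the Step~1 bound on $|T_1|$ this yields $|K| \le 2C_I + 3C_E$ (Step~3). For a component $K$ \emph{without} an immunized centroid, no deviation argument is used at all: instead the paper exploits the defining property of an \sqda{} opponent that $f(x)/x^2$ is non-decreasing to show directly that if $|K| > \sqrt{2}\,|T_1|$ the opponent would strictly prefer to attack the centroid-containing region of $K$ rather than $T_1$, contradicting that $T_1$ is targeted (Step~2). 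Your proposal is missing both the preliminary bound on $|K'| = |T_i|$ and the entire non-deviation argument for the no-immunized-centroid case.
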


\noindent 
Now we consider a general result that guarantees the existence of a specific connected component.

\begin{restatable}{lemma}{lemmaTargetedWithImmunized}\label{lemma:targeted-with-immunized}
    In a non-trivial Nash equilibrium w.r.t. an \sqda{} opponent, if there is a targeted region, at least one such region is in a connected component with an immunized node.
\end{restatable}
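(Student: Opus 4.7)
My plan is to proceed by contradiction, assuming that every targeted region lies in a connected component containing no immunized node. Each such component $C_j$ is then entirely vulnerable and forms a single vulnerable region, with $|C_j|\in\{1\}\cup[C_E+1,\infty)$ by \Cref{lemma:minimal_size}. By non-triviality there is an immunized agent $v$ in some component $D^*$, and my goal is to exhibit an agent with a profitable deviation.

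I will work with two families of deviations. First, $v$ drops its immunization, producing a profile $\s'$ in which $v$'s new vulnerable region is some $R\supseteq\{v\}$. If $R$ is not targeted in $\s'$ then attacks stay confined to the $C_j$'s, $v$'s component remains $D^*$, and $v$ strictly saves $C_I>0$, contradicting Nash; otherwise \Cref{cor:tech-cor2} applied in the reverse direction (immunizing $v$ to move from $\s'$ to $\s$), together with strict super-additivity of $f$ (a consequence of strict convexity and $f(0)=0$), will force $|D^*|\geq k'C_I$ in any equilibrium, where $k'$ is the number of regions tied for the minimum $U_f$ in $\s'$. Second, a vulnerable agent $w$ in a targeted component $C_j$ deviates: if $|C_j|\geq C_E+1$ I take $w$ to be a leaf of the tree $C_j$ (which exists by \Cref{lemma:1}) and let $w$ only immunize, so that by \Cref{lemma:targetsingletons} the leftover region $C_j\setminus\{w\}$ of size at least $2$ is not targeted, giving profitability unless $|C_j|\leq tC_I$, where $t$ is the number of targeted regions; and if $|C_j|=1$ the singleton $w$ simultaneously immunizes and buys an edge to $v$, producing an all-immunized merged component that no attack touches, so the deviation is profitable unless $|D^*|$ is at most a linear expression in $C_E$, $C_I$, and $1/t$.

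I will then close by a case split on the sizes of the targeted components. When every targeted $|C_j|=1$, I first show that $D^*$ is entirely immunized: any vulnerable region inside $D^*$ would strictly out-damage a singleton by strict super-additivity, contradicting the hypothesis that singletons minimize $U_f$. Since every edge of $D^*$ is then bought by some immunized agent, I pair the merge deviation (an upper bound on $|D^*|$) with a strengthened drop-immunization deviation in which $v$ additionally sells one of its bought edges (a lower bound on $|D^*|$, because after the sale the newly isolated singleton region $\{v\}$ becomes uniquely targeted by the same super-additivity argument) to obtain two jointly incompatible bounds. When some targeted $|C_{i^*}|\geq C_E+1$, the leaf-immunize deviation gives $|C_{i^*}|\leq tC_I$; the drop-immunization deviation combined with the SQD property $f(x)/x^2$ non-decreasing forces $|D^*|$ to be quadratic in $|C_{i^*}|$ (so that $R=\{v\}$ genuinely matches the damage $f(|C_{i^*}|)$ and is targeted); and the merge deviation forces $|D^*|$ to be linearly bounded, so together these are incompatible for $|C_{i^*}|\geq C_E+1$.

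The hard part will be this tension between the drop-immunization deviation, which is profitable when $|D^*|$ is small (so that $v$'s saved $C_I$ outweighs its lost expected connectivity), and the merge deviation, which is profitable when $|D^*|$ is large (so that the merged component is worth the additional $C_E+C_I$). Closing the gap seems to require both defining properties of an SQD opponent in tandem: strict super-additivity of $f$, to force $v$'s new region to actually become targeted once $v$ drops immunization (which is what drives the lower bound on $|D^*|$), and the growth condition $f(x)/x^2$ non-decreasing, to amplify this lower bound into a quadratic dependence on $|C_{i^*}|$ that dominates the linear upper bound from the merge deviation.
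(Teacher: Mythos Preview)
Your approach is genuinely different from the paper's, and it has a real gap in the case $k:=|C_{i^*}|\ge C_E+1$. Several ingredients do not hold as stated. First, \Cref{lemma:targetsingletons} is proved only for equilibrium networks, so you cannot invoke it on the post-deviation profile; the non-targeting of $C_j\setminus\{w\}$ needs a direct argument and can actually fail when $f(1)=0$ (e.g.\ $f(0)=0$, $f(x)=(x-1)^2$ for $x\ge1$, which is SQD). Second, in this case you have not argued that $D^*$ is entirely immunized (that was specific to $k=1$), so after $v$ drops immunity the region $R$ need not be $\{v\}$. Third, and most importantly, the ``quadratic versus linear'' tension does not close. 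Even granting $R=\{v\}$, the targeting condition is $f'(|D^*|-1)\ge f(k)$; for $f(x)=x^2$ this gives $|D^*|\gtrsim k^2/2$, but your merge deviation was only defined for singleton $C_j$, and the natural extension (a leaf $w$ of $C_j$ immunizes and buys an edge to $v$) may cause the attacker to uniquely hit $C_j\setminus\{w\}$ inside the merged component, yielding only $|D^*|\le C_E+C_I+O(k)$. These bounds are jointly satisfiable whenever $C_I$ is large compared to $C_E$ (e.g.\ $C_E=2$, $C_I=100$, $k=3$, $|D^*|=5$), so no contradiction results. For general SQD the quadratic lower bound itself is false: with $f(x)=x^2\,2^x$ (strictly convex, $f(x)/x^2=2^x$ non-decreasing) the inequality $f'(|D^*|-1)\ge f(k)$ forces only $|D^*|\gtrsim k$.

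The paper's route for $k>1$ is structurally different. An immunized node $u\in Z=D^*$ buys an edge to \emph{every} targeted component; combining $u$'s Nash inequality with the Nash inequality for a leaf of some $T_j$ (which gives $\ell C_E\le(\ell-1)k-\ell$) produces, for each immunized $v\in Z$, a vulnerable region $T'_v\subset Z$ with $CC_v(T'_v,\s)<|Z|-k$. A centroid argument (\Cref{lemma:no-centroid-implies-vulnerable}) then extracts a single region $T'\subset Z$ whose removal leaves every piece of $Z$ with size strictly below $|Z|-k$. Finally, both SQD axioms are applied together in a direct calculation to show $U_f(T',\s)<U_f(T_1,\s)$, contradicting that the $T_j$ are the targeted regions. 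This avoids trying to squeeze $|D^*|$ between two numerical bounds.
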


\noindent 
We continue with the setting without targeted cut-vertices and show the existence of at least two vulnerable nodes.

\begin{restatable}{lemma}{lemmaTwoVulnerable}\label{lemma:two-vulnerable}
    In a non-trivial Nash equilibrium $\s$ with respect to an \sqda{} opponent, if there are no targeted cut-vertices in connected components with an immunized node, every connected component with an immunized node and a targeted region contains at least two vulnerable nodes.
\end{restatable}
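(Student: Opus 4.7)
The plan is a proof by contradiction. Assume that some component $K$ in $G(\s)$ has an immunized node and a targeted region but fewer than two vulnerable nodes. If $K$ has no vulnerable node then there is no vulnerable region inside $K$, so suppose $K$ has exactly one vulnerable node~$v$. By \Cref{lemma:targetsingletons}, the targeted region in $K$ is a singleton and therefore must equal $\{v\}$; by the hypothesis of the lemma, $v$ is not a cut-vertex of $K$. The contradiction will be obtained through single-agent deviations that exploit both the strict convexity of $f$ (with $f(0)=0$) and the non-cut-vertex property of $v$.

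First I would reduce to the case where $v$ has degree exactly one in $K$ and owns its unique incident edge. If $v$ has degree $d\geq 2$, then because $v$ is not a cut-vertex, deleting any one edge incident to $v$ keeps $K$ connected, preserves every vulnerable region, and leaves every $U_f$ value unchanged, so the owner of that edge --- whether $v$ itself or a neighbor --- strictly saves $C_E$ without affecting expected connectivity, contradicting the NE. Hence $v$ has a unique neighbor $u_0$. If $u_0$ owned the edge $\{u_0,v\}$, consider $u_0$'s deviation of selling it. Then $K$ splits into the isolated $\{v\}$ and the immunized $K\setminus\{v\}$; $U_f(\{v\},\cdot)$ is unchanged, but for every targeted region $T'$ lying outside $K$ the contribution of $K$ to $U_f(T',\cdot)$ changes from $f(|K|)$ to $f(1)+f(|K|-1)$, which is strictly smaller by strict convexity and $f(0)=0$. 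Hence post-deviation every pre-existing targeted region outside $K$ is strictly preferred over $\{v\}$, so $u_0$'s component is $K\setminus\{v\}$ almost surely; $u_0$'s utility then changes by $C_E+q-1>0$, contradicting the NE.

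So $v$ owns its unique edge, and I split on $|K|$. If $|K|=2$, so $K=\{v,u_0\}$, then the analogous convexity computation shows that $v$'s deviation of selling its edge makes $\{v\}$ strictly non-minimal (so $v$'s expected connectivity post-deviation is $1$), except in the boundary subcase where $\{v\}$ was the unique pre-deviation target and no other region was close enough in $U_f$-value to catch up, in which case $v$'s expected connectivity stays $0$ but its cost drops by $C_E$; in both subcases the utility strictly improves because $C_E>1$, contradicting the NE. If $|K|\geq 3$, let $K'=K\setminus\{v\}$, an immunized connected subgraph of size at least two, and pick a non-cut-vertex $u_1\neq u_0$ of $K'$ (which exists since every connected graph on at least two vertices has at least two non-cut-vertices). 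Consider $u_1$'s deviation of de-immunizing. Since $u_1$ is not adjacent to $v$, the new vulnerable region is the singleton $\{u_1\}$, and attacking it leaves the same size-$(|K|-1)$ remainder of $K$ as attacking $\{v\}$ would, so $U_f(\{u_1\},\s')=U_f(\{v\},\s')=U_f(\{v\},\s)$ while all other $U_f$ values remain unchanged; the number of targeted regions therefore grows from $k$ to $k+1$. Writing $u_1$'s expected connectivity before and after as $(k|K|-1)/k$ and $(k|K|-1)/(k+1)$, the NE condition on $u_1$'s deviation gives $|K|\geq (k+1)C_I+1/k$, while $v$'s deviation of immunizing (keeping its edge) gives $q|K|\leq C_I$ and hence $|K|\leq kC_I$; combining yields $C_I+1/k\leq 0$, a contradiction.

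The main obstacle I anticipate is bookkeeping how the minimum-$U_f$ region set evolves under each deviation. The inequality $f(|K|)>f(|K|-1)+f(1)$ from strict convexity with $f(0)=0$ must be invoked in the correct direction for the deviations that split or merge components, and for the final de-immunizing deviation the delicate point is to verify that $\{u_1\}$ ties with --- but does not strictly dominate --- the pre-existing targeted regions, so that the targeted set grows by exactly one and the subsequent probability arithmetic produces the desired contradiction.
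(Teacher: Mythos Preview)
Your argument is correct and follows essentially the same route as the paper's proof: reduce to the vulnerable node being a leaf owning its unique edge, then exploit the symmetry between attacking this leaf and attacking a suitably chosen immunized non-cut-vertex after de-immunization, and derive a contradiction by playing the vulnerable node's choice not to immunize against the immunized node's choice to stay immunized. The paper phrases the last step as directly exhibiting an improving de-immunization (using $|K|/|\T|\le C_I$), whereas you extract the two inequalities $|K|\ge (k{+}1)C_I+1/k$ and $|K|\le kC_I$ and collide them; these are equivalent. Two cosmetic differences: the paper avoids your separate $|K|=2$ case by first deducing $|K|\ge C_E+1>2$ from the profitability of the leaf's edge, and it selects the second agent as a leaf of the (necessarily tree-shaped) immunized region rather than invoking the two-non-cut-vertices fact directly. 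Also, define $q$ explicitly when you first use it.
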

\noindent 
Finally, we can show that most one connected component with a targeted regions exists.

\begin{restatable}{lemma}{lemmaSoloImmunizedTargeted}\label{lemma:solo-immunized-targeted}
    In a non-trivial Nash equilibrium with respect to an \sqda{} opponent, if there are no targeted cut-vertices in connected components with an immunized node, then there is at most one connected component with an immunized agent and a targeted region.
\end{restatable}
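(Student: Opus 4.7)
The plan is to argue by contradiction. Suppose two distinct components $K_1, K_2$ of $G(\s)$ each contain an immunized agent $x_i \in K_i$ and a targeted region. By \Cref{lemma:targetsingletons} and the lemma's hypothesis these targets are non-cut-vertex singletons, say $\{u_i\} \subseteq K_i$. From $U_f(\{u_1\},\s) = U_f(\{u_2\},\s)$ and the strict convexity of $f$ I would first deduce $|K_1| = |K_2| =: m$, which by \Cref{lemma:minimal_size} satisfies $m \ge C_E + 1 \ge 3$.

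The core structural step is to show that every vulnerable region $T \subseteq K_1$ (and symmetrically in $K_2$) is itself a non-cut-vertex singleton, hence targeted. With $C_1, \dots, C_p$ the components of $K_1 \setminus T$, we have $U_f(T,\s) - U_f(\{u_1\},\s) = \sum_i f(|C_i|) - f(m-1)$. Super-additivity of $f$ (from strict convexity and $f(0) = 0$) gives $\sum_i f(|C_i|) \le f(m - |T|)$, with strict inequality whenever $T$ is a cut, and the SQD conditions imply that $f$ is strictly increasing on $\{1,2,\dots\}$. So if $|T| \ge 2$ or $T$ is a cut-vertex singleton, one of these strict inequalities forces $U_f(T,\s) < U_f(\{u_1\},\s)$, contradicting either \Cref{lemma:targetsingletons} or the no-targeted-cut-vertex hypothesis. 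Hence every vulnerable region in $K_1$ is a non-cut-vertex singleton and is targeted; let $t_1 \ge 1$ denote the number of targets in $K_1$ and $t$ the total number of targets.

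The deviation I would test is $x_1$ buying a new edge to $x_2$, producing $\s'$ in which $K_1 \cup K_2$ merges into a component $K'$ of size $2m$. Because $x_1, x_2$ remain in $K' \setminus v$ for every vulnerable $v \in K_1 \cup K_2$, each such $v$ is still non-cut-vertex in $K'$, so $U_f(\{v\}, \s') = f(2m-1) + \sum_{K \ne K_1, K_2} f(|K|)$. The change in $U_f$ from $\s$ to $\s'$ equals $f(2m) - 2f(m)$ for regions targeted outside $K_1 \cup K_2$, and $f(2m-1) - f(m-1) - f(m)$ for the singletons inside; strict convexity applied to $m-1 < 2m-1$ yields $f(2m) - f(2m-1) > f(m) - f(m-1)$, so the outside increase strictly dominates, and no outside region remains targeted in $\s'$. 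Consequently every target in $\s'$ is a non-cut-vertex singleton in $K'$, each satisfying $CC_{x_1}(\{v\},\s') = 2m - 1$, so $\E[CC_{x_1}(\s')] = 2m - 1$.

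Finally, in $\s$ each of the $t_1$ targets inside $K_1$ gives $CC_{x_1} = m-1$ and every other target gives $CC_{x_1} = m$, so $\E[CC_{x_1}(\s)] = m - t_1/t$. The Nash condition $\E[CC_{x_1}(\s)] + C_E \ge \E[CC_{x_1}(\s')]$ then becomes $C_E \ge m - 1 + t_1/t$, which combined with $m - 1 \ge C_E$ forces $t_1/t \le 0$ and contradicts $t_1 \ge 1$. I expect the main obstacle to be the structural step: without eliminating all multi-node and all cut-vertex vulnerable regions from $K_1, K_2$, multi-node targets could persist in $\s'$ and drag $CC_{x_1}$ well below $2m - 1$, at which point the final inequality would no longer close.
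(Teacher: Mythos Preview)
Your proposal is correct and uses the same deviation as the paper: an immunized agent in one component buys an edge to an immunized agent in the other, and the resulting connectivity gain beats $C_E$ via \Cref{lemma:minimal_size}. The paper's argument is much terser and simply asserts post-deviation connectivity $\ge |K_1|+|K_2|-1$; your structural step---that every vulnerable region in $K_i$ must already be a non-cut-vertex singleton in $\s$, else it would undercut the assumed target---is precisely the justification the paper leaves implicit, while your further observations that $|K_1|=|K_2|$ and that outside targets vanish in $\s'$ are correct but unnecessary (an outside target would still yield $CC_{x_1}=2m$, and the paper never uses equal sizes).
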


Now, we derive the important statement, that without targeted cut-vertices no isolated agents exist. This is helpful, since such agents contribute to low social welfare. 

\begin{restatable}{lemma}{lemmaNoIsolated}\label{lemma:no-isolated}
   In a non-trivial Nash equilibrium w.r.t. an \sqda{} opponent, if no targeted cut-vertices exist in connected components with an immunized node, no agent is isolated. 
\end{restatable}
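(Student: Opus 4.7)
My plan is proof by contradiction: assume some agent $v$ is isolated in the non-trivial Nash equilibrium $\s$, and exhibit a profitable deviation. If $v$ is immunized, then $u_v(\s) = 1 - C_I < 0$ and the deviation where $v$ drops its immunization yields utility at least $0$, strictly exceeding $1 - C_I$ since $C_I > 1$. So I may assume $v$ is vulnerable and $\{v\}$ is a singleton vulnerable region.

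By \Cref{lemma:targeted-with-immunized} and \Cref{lemma:solo-immunized-targeted}, there is a unique component $K^*$ containing both an immunized node $w^*$ and a targeted region; by \Cref{lemma:targetsingletons} together with the hypothesis, every target in $K^*$ is a non-cut-vertex singleton, and I fix one, $T^* = \{u^*\}$. Writing the remaining components as $C_1, \ldots, C_m$, I compare $U_f(\{v\}, \s) = f(|K^*|) + \sum_i f(|C_i|)$ with $U_f(T^*, \s) = f(|K^*|-1) + f(1) + \sum_i f(|C_i|)$. Since $|K^*| \geq C_E + 1 \geq 2$ by \Cref{lemma:minimal_size}, strict convexity of $f$ with $f(0) = 0$ gives $f(|K^*|) > f(|K^*|-1) + f(1)$, so $U_f(\{v\}, \s) > U_f(T^*, \s)$. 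Thus $\{v\}$ is not targeted and $u_v(\s) = 1$.

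I next consider the deviation $\s'$ in which $v$ buys an edge to $w^*$. Because $w^*$ is immunized, $v$'s vulnerable region is still $\{v\}$ in $\s'$, giving $U_f(\{v\}, \s') = U_f(\{v\}, \s)$; strict convexity yields $U_f(T, \s') > U_f(T, \s)$ for every other $T$, and in particular $U_f(T^*, \s') = U_f(\{v\}, \s')$. In the clean scenario where every vulnerable region in $K^*$ is itself a non-cut-vertex singleton and no other $C_i$ carried a target in $\s$, the targets in $\s'$ are exactly $\{v\}$ and the $t \geq 1$ non-cut-vertex singletons of $K^*$; under any such singleton attack $v$ stays connected to all of $K^*$ minus one vertex, so its new utility is $\tfrac{t\,|K^*|}{t+1} - C_E$, which exceeds $1$ whenever $|K^*|$ or $t$ are moderately large.

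I expect the main obstacle to be two-fold. First, the inequality $\tfrac{t\,|K^*|}{t+1} - C_E > 1$ can fail in the boundary regime $|K^*| = C_E + 1$ with $t$ small; here I would invoke \Cref{cor:size-after-edge} on a cut-vertex $x \in K^*$ adjacent to a non-cut-vertex singleton $u$: if $x$ owned the edge to $u$, the piece $\{u\}$ would need size at least $C_E > 1$, a contradiction, so $u$ itself owns the edge. Then the alternative deviation where $u$ sells this edge makes $u$ isolated with utility close to $1$, strictly beating its original utility $(1 - 1/t)(|K^*|-1) - C_E$ in this small regime. Second, $K^*$ may contain a multi-node vulnerable region that displaces $\{v\}$ from the target set of $\s'$; a careful analysis using \Cref{lemma:two-vulnerable} and super-additivity of $f$ shows that in this case $v$ is never attacked and its post-attack connectivity is bounded below by the size of $w^*$'s remaining piece, again yielding a strictly profitable deviation. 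Combining these reductions covers all configurations and produces the required contradiction.
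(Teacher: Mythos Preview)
Your high-level plan is the same as the paper's: assume an isolated vulnerable $v$, locate the unique component $K^*$ carrying both an immunized node and a target, and contradict the equilibrium property by exhibiting a profitable deviation. Your computation in the ``clean scenario'' is correct, and in fact your second obstacle is vacuous: once $K^*$ contains a targeted non-cut-vertex singleton $T^*$, \emph{every} vulnerable region of $K^*$ must be a non-cut-vertex singleton (a larger region or a singleton cut would have strictly smaller $U_f$ than $T^*$, contradicting that $T^*$ is a minimizer). So the clean scenario is the only scenario, and you need not worry about multi-node regions displacing $\{v\}$.

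The genuine gap is your first obstacle. Your main inequality $\tfrac{t}{t+1}|K^*|>C_E+1$ fails on the whole range $|K^*|\le\tfrac{t+1}{t}(C_E+1)$, not only at $|K^*|=C_E+1$. Your proposed fix (let a targeted leaf $u\in K^*$ sell its unique edge and become isolated with utility~$1$) does cover the gap when every target lies inside $K^*$: writing $t=|\T(\s)|$, the original utility of $u$ is $(1-\tfrac1t)(|K^*|-1)-C_E$, and one checks that this is below $1$ precisely beyond the threshold where your main inequality fails. But as soon as there are $r\ge 2$ targeted regions \emph{outside} $K^*$ (which can happen---they are entire vulnerable components by \Cref{lemma:solo-immunized-targeted}), the two bounds no longer meet. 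Concretely, for $t=2$ and $r=2$ your main argument needs $|K^*|>\tfrac32(C_E+1)$, while the ``$u$ sells its edge'' argument only forces $|K^*|<\tfrac{4C_E+5}{3}$; these intervals overlap only for $C_E<1$. So there is an uncovered band of values of $|K^*|$.

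This is exactly where the paper's proof does real work. It performs a four-way structural case split on $K^*$ according to edge ownership: (1) some agent bought two edges, which forces $|K^*|\ge 2C_E+1$; (2) $K^*$ has a single immunized node (a star); (3) an immunized agent owns a bridge; (4) every immunized-owned edge lies on a cycle. In Cases~3 and~4 the paper first shows there must be at least two targeted components outside $K^*$, and then uses carefully chosen deviations (for $v$, for $u$, or for an immunized node $u\in K^*$ that swaps its edge to nodes in $T_1,T_2$) to push $|K^*|$ above the required threshold. Your sketch does not replicate this analysis, and the single fallback you propose is not strong enough to close the gap in the presence of outside targets.
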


\begin{proof}[Proof Sketch]
    Let $\s$ be a non-trivial Nash equilibrium. By \Cref{lemma:targeted-with-immunized}, there is a component $K$ with at an immunized agent and, if there are vulnerable nodes in $G(\s)$, a targeted region. By \Cref{lemma:targetsingletons}, targeted regions must be singletons and by assumption there are no targeted cut-vertices. Hence, every vulnerable agent in $K$ is targeted and not a cut-vertex.
    
    Now, towards a contradiction, assume there exists an isolated agent~$x$. If agent~$x$ was already immunized, then buying an edge to any immunized node in $K$ would be a strict utility increase, as there are at least $C_E+1$ nodes in $K$ and no vulnerable cut-vertex exists in $K$. Therefore, agent~$x$ is vulnerable, which means that there exists a targeted region, and thus targeted non-cut-vertices in component~$K$.
    
    Now there will be several cases depending on the structure of component~$K$, see \Cref{fig:no-isolated-main} for an illustration. 
    \begin{enumerate}
        \item Some agent in $K$ bought two edges.
        \item There is only one immunized agent in $K$.
        \item An immunized agent bought an edge that disconnects $K$ when removed and no agent bought more than one edge.
        \item Every edge bought by an immunized agent in $K$ is part of a cycle, no agent bought more than one edge, and some immunized agent bought an edge.
    \end{enumerate}
    \begin{figure}[h]
        \centering
        \includegraphics[width=\linewidth]{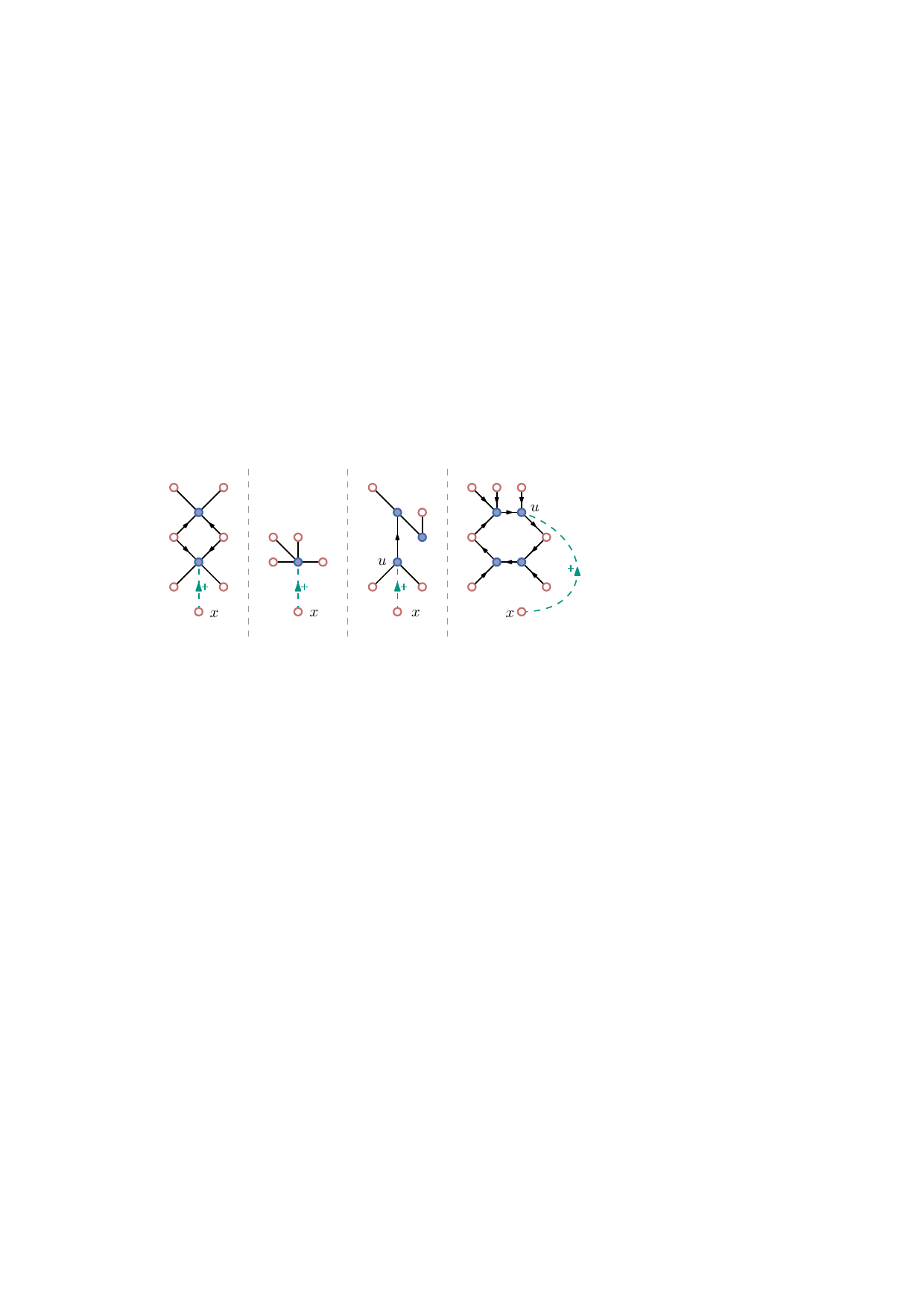}
        \caption{The four cases of the proof. The dashed edge is the proposed deviation.}
        \label{fig:no-isolated-main}
    \end{figure}
      
    \noindent
    We find that in all cases buying an edge to an immunized node of $K$ will be a strictly improving deviation for agent $x$. Without buying immunity in cases $1,2$ and $4$ and sometimes buying immunity in Case~$3$.
\end{proof}
\noindent 
Finally, we combine all our structural observations, to show that large enough Nash equilibria must be connected.

\begin{restatable}{lemma}{lemmaBoundedDisconnected}\label{lemma:bounded-disconnected}
    All non-trivial Nash equilibria with respect to an \sqda{} opponent that are not connected have a bounded number of agents.
\end{restatable}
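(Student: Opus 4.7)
The plan is to combine the earlier structural lemmas with tailored deviation arguments, bounding the size of each connected component individually and then using \Cref{lemma:bounded-number-components-cut} to cap the total number of agents. First I split on whether any component contains both an immunized node and a targeted cut-vertex: if so, \Cref{lemma:cut-imply-bounded} directly yields $n \leq (C_I+C_E+2)(2C_I+3C_E)$ and I am done. Otherwise, \Cref{lemma:no-isolated} gives no isolated nodes and \Cref{lemma:bounded-number-components-cut} gives at most $C_E+C_I+2$ connected components, so it suffices to bound the size of each one.

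By non-triviality and a short argument ruling out the all-immunized case (any immunized agent could profitably buy an edge into another component, of size at least $C_E+1$ by \Cref{lemma:minimal_size}), at least one vulnerable node exists. Then \Cref{lemma:targeted-with-immunized} and \Cref{lemma:solo-immunized-targeted} isolate a unique component $K^*$ containing both an immunized agent and a targeted region, and its targets are singletons by \Cref{lemma:targetsingletons}. The remaining components fall into three families: \emph{(a')} has immunized but no target, \emph{(c)} has no immunized but is entirely a single targeted vulnerable tree, and \emph{(d)} has no immunized and is not targeted (also a vulnerable tree by \Cref{lemma:1}).

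For each family I would use a specific deviation to bound its size. Type \emph{(a')}: an immunized $z \in K_{a'}$ buys an edge to an immunized $y \in K^*$; strict convexity of $f$ forces the $U_f$-value of $T^*$ to rise strictly less than that of every other candidate, so $T^*$ remains targeted, and the Nash inequality for $z$ gives $|K^*| \leq C_E+1$, while a symmetric deviation bounds $|K_{a'}|$. Type \emph{(c)}: a leaf $u \in K_c$ immunizes; by \Cref{cor:tech-cor2}, $T^*$'s $U_f$ is unchanged while the new vulnerable region $K_c \setminus \{u\}$ has strictly larger $U_f$, so $T^*$ stays targeted, and $u$'s Nash inequality $P[T_c]\,|K_c| \leq C_I$ combined with at most $C_E+C_I+2$ targeted regions yields $|K_c| \leq C_I(C_E+C_I+2)$. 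Type \emph{(d)}: the non-targeting inequality $f(|K_d|) \leq f(|K^*|) - f(|K^*|-1)$ together with the SQD property that $f(x)/x^2$ is non-decreasing yields $|K_d|^2 = O(|K^*|)$, so $|K_d|$ is controlled once $|K^*|$ is.

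The main obstacle is bounding $|K^*|$ itself when no type \emph{(a')} component exists, which requires a more delicate deviation. I would pick a centroid $u$ of some type \emph{(c)} or type \emph{(d)} component $K_d$ (existing by \Cref{lemma:centroid}) and have $u$ immunize while buying a single edge to an immunized agent in $K^*$. The centroid property ensures $K_d \setminus \{u\}$ splits into pieces of size at most $|K_d|/2$, and a careful $U_f$-comparison using strict convexity shows the attacker's unique target in the perturbed network lies among these pieces (namely, the largest, $P_{\max}$) rather than among the singletons in $K^*$ or in any other component. Hence $u$'s post-deviation connectivity is $|K^*| + |K_d| - |P_{\max}|$, and the Nash condition forces $|K^*| \leq |P_{\max}| + C_E + C_I \leq |K_d|/2 + C_E + C_I$. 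Combining this with $|K_d|^2 = O(|K^*|)$ (or, when $K_d$ is type \emph{(c)}, with the equality $f(|K_c|) = f(|K^*|) - f(|K^*|-1)$ and the $|K_c|$-bound above) yields $|K^*|$ bounded by a polynomial in $C_E$ and $C_I$. Summing $|K^*|$, $|K_{a'}|$, $|K_c|$, and $|K_d|$ over the at most $C_E+C_I+2$ components then proves the statement.
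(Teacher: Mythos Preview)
Your overall scaffolding (split on whether a targeted cut-vertex exists, then invoke \Cref{lemma:no-isolated} and \Cref{lemma:bounded-number-components-cut}) matches the paper, and your centroid deviation is the right tool. But the detailed component-by-component taxonomy creates genuine gaps, and it can be replaced by one observation you are missing: \emph{$K^*$ is the largest component}. If some larger $M$ had vulnerable regions, then attacking a non-cut singleton of $M$ would strictly beat attacking $T^*$ (by strict convexity), contradicting that $K^*$ contains a target; if $M$ were fully immunized, any immunized agent of $K^*$ would profitably buy an edge into $M$ (since $|M|\ge C_E+1$). Once $K^*$ is the largest, take a centroid $c$ of the \emph{second}-largest component $L$ and let $c$ immunize and buy an edge to an immunized node of $K^*$. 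You do not need to determine the new target at all: whatever region is attacked, $c$'s component has size at least $|K^*|+|L|/2$ (attack in $K^*$ removes a non-cut singleton; attack in $L$ leaves $\ge|L|/2$ by the centroid property; attack elsewhere leaves everything). Since $|K^*|\ge|L|$, this is at least $|L|+|K^*|/2$, so the Nash inequality gives $|K^*|\le 2(C_E+C_I)$ directly. This is the paper's argument, and it bypasses all of your type-(a')/(c)/(d) bookkeeping.

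Concretely, two of your intermediate claims fail as stated. First, your type-(c) bound $|K_c|\le C_I(C_E+C_I+2)$ uses $|\mathcal T(\s)|\le C_E+C_I+2$, but $C_E+C_I+2$ bounds the number of \emph{components}, not the number of targeted regions: $K^*$ can contain many targeted singletons, so $|\mathcal T(\s)|$ is not controlled this way. Second, the type-(d) relation ``$|K_d|^2=O(|K^*|)$'' is specific to $f(x)=x^2$; for instance with $f(x)=x^3$ the non-targeting inequality gives only $|K_d|^3=O(|K^*|^2)$. (Your claim that the attacker \emph{uniquely} targets $P_{\max}$ is also unjustified, though fortunately the inequality $|K^*|\le|P_{\max}|+C_E+C_I$ already follows from the weaker fact that post-deviation connectivity is always at least $|K^*|+|K_d|-|P_{\max}|$, regardless of the target.) Establishing that $K^*$ is maximal removes the need for all three claims.
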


\begin{proof}    
    First of all, \Cref{lemma:cut-imply-bounded} ensures that if the number of nodes is greater than $(C_E + C_I +2)(2C_I +3C_E)$, then there are no targeted cut-vertices in connected components with immunized nodes. Assume this is the case, i.e., that we have more than  $(C_E + C_I +2)(2C_I +3C_E)$ many agents. 
    
    Now if there are no vulnerable agents, then non-trivial Nash equilibria are connected. This was already proven by \citet{bala2000noncooperative}, since without vulnerable agents we are in the setting without attacker. 
    
    Hence, we assume that at least one vulnerable agent exists. For this, by \Cref{lemma:targeted-with-immunized}, we know that there is a  connected component $K$ with an immunized agent and a targeted region. There are also no isolated nodes, according to \Cref{lemma:no-isolated}, and therefore, only a bounded number of connected components, as guaranteed by \Cref{lemma:bounded-number-components-cut}. 
    
    By \Cref{lemma:targetsingletons}, component~$K$ has targeted singletons that are not cut-vertices. Now we show, that $K$ is the largest component with vulnerable regions. Indeed, assume another component $M$ is strictly larger than $K$. If $M$ contains vulnerable nodes, then with the same reasoning as above, the targeted regions in $M$ are singletons and not cut-vertices. Thus, if $|M|>|K|$ the \sqda{} opponent would attack only regions in $M$, which contradicts that component $K$ has targeted regions. If, on the other hand, component $M$ only contains immunized nodes, then there is a strictly improving deviation for the immunized agents of $K$. This consists of buying an edge to any node in $M$. Since, by \Cref{lemma:no-isolated}, component $M$ contains at least two nodes, we get, by \Cref{lemma:minimal_size}, that $|M| \geq C_E+1$. Thus, buying the edge is profitable.
    
    Therefore, we can assume that $K$ is the largest connected component. Let $L$ be the second largest component. We now find an agent in $L$ and consider a specific strategy change, which then yields a bound on the size of $K$. For this, take a centroid~$c$ of component~$L$, and consider the deviation of agent~$c$ immunizing (if it was not already immunized) and buying an edge to an immunized node of component~$K$. The expected size of the connected component of agent~$c$ after the deviation is at least 
        $|K| + \frac{|L|}{2} \ge |L| + \frac{|K|}{2}$.
    This holds, since if a vulnerable region in $K$ is attacked, then agent~$c$'s connected component has size $|K|-1 + |L| \geq |K| + \frac{|L|}{2}$, since $|L|\geq 2$. If a vulnerable region in $L$ is attacked, then $K$ is not affected and, since agent~$c$ is an immunized centroid of component $L$, the size of its connected component is at least $|K| + \frac{|L|}{2}$. If a vulnerable region outside of $K$ and $L$ is attacked, then the size of agent~$u$'s connected component is $|K|+|L| > |K| + \frac{|L|}{2}$.
    
    Hence, for the original strategy profile to be a Nash equilibrium, the inequality $|K| \le 2(C_I+C_E)$ must hold, since the additional cost of the deviation of agent~$c$ is $C_I+C_E$, while the expected additional connectivity is at least $\frac{|K|}{2}$.

    Hence, the number of agents is bounded by $(C_E + C_I +2)(2C_I +2C_E)$,
    if there are no targeted cut-vertices. By \Cref{lemma:cut-imply-bounded}, the latter holds. The stated upper bound on the number of agents is true, since by \Cref{lemma:bounded-number-components-cut} and \Cref{lemma:no-isolated}, we have $C_E + C_I +2$ many connected components, each of size at most $|K|\leq 2C_I +2C_E$.

    This bound of  $(C_E + C_I +2)(2C_I +2C_E)$ is lower than our assumed minimum number of agents of $(C_E + C_I +2)(2C_I +3C_E)$, which is a contradiction. 
    Thus, every non-trivial Nash equilibrium with more than $(C_E + C_I +2)(2C_I +3C_E)$ nodes must be connected.
\end{proof}

Now we are ready, to prove our main result of this paper.
\begin{restatable}{theorem}{theoremBoundsDisruption}\label{theorem:bounds-disruption}
    The social welfare of any non-trivial Nash equilibrium with respect to an \sqda{} opponent is $n^2 - \mathcal{O}(n)$.
\end{restatable}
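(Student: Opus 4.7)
The strategy is to exploit the structural lemmas of Section 3 to reduce to a particularly clean situation—a connected equilibrium whose only targeted regions are singleton non-cut-vertices—and then to bound the benefit and the cost separately. In that regime the post-attack graph loses at most one node, so the benefit stays within $\mathcal{O}(n)$ of $n^2$, while Lemma~\ref{lemma:1} immediately controls the cost.

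Concretely, I would first restrict attention to $n > (C_E + C_I + 2)(2C_I + 3C_E)$, since below this threshold the claim $n^2 - \mathcal{O}(n)$ is vacuous once constants depending on $C_E, C_I$ are absorbed. In this regime, Lemma~\ref{lemma:bounded-disconnected} forces $G(\s)$ to be connected, and Lemma~\ref{lemma:cut-imply-bounded} forbids any targeted cut-vertex inside a component containing an immunized node. Because $\s$ is non-trivial, the (unique) component contains at least one immunized agent, so Lemma~\ref{lemma:targetsingletons} applies and every targeted region of $G(\s)$ is a singleton. Combining the two, every targeted region is a singleton non-cut-vertex of the connected graph $G(\s)$.

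Next I would evaluate the benefit $\sum_i \E[CC_i(\s)]$. If no vulnerable agent exists, no attack occurs and the benefit equals $n^2$. Otherwise, let $t \geq 1$ be the number of targeted singletons; removing any one of them leaves a connected subgraph on $n-1$ nodes, so non-targeted agents always see connectivity $n-1$, while each of the $t$ targeted agents sees $n-1$ with probability $(t-1)/t$ and $0$ with probability $1/t$. Summing gives $(n-t)(n-1) + (t-1)(n-1) = (n-1)^2$, so in either subcase the benefit is $n^2 - \mathcal{O}(n)$. The cost side is immediate: Lemma~\ref{lemma:1} gives $|E(\s)| \leq 2n-4$, hence total edge cost at most $(2n-4)C_E$ and total immunization cost at most $nC_I$; since $C_E$ and $C_I$ are constants of the game, the total cost is $\mathcal{O}(n)$. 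Subtracting yields social welfare $n^2 - \mathcal{O}(n)$.

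The main obstacle is entirely upstream: the delicate work of Lemmas~\ref{lemma:targetsingletons}--\ref{lemma:bounded-disconnected} was to show that in sufficiently large equilibria an SQD opponent can neither target cut-vertices nor force disconnectedness. Once that structure is in hand, the final theorem reduces to the short benefit-minus-cost bookkeeping above, with no further case analysis or potential-function argument required.
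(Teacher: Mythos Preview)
Your proposal is correct and matches the paper's own proof almost exactly: both restrict to $n$ large enough that Lemmas~\ref{lemma:bounded-disconnected}, \ref{lemma:targetsingletons}, and \ref{lemma:cut-imply-bounded} force the equilibrium to be connected with only singleton non-cut-vertex targets, and then observe that the post-attack component has size $n-1$ while total cost is $\mathcal{O}(n)$ by Lemma~\ref{lemma:1}. Your explicit computation of the benefit as $(n-1)^2$ is a nice touch that the paper leaves implicit, but otherwise the arguments are identical.
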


\begin{proof}
    Because this is an asymptotic result, we can ignore the social welfare of instances with bounded size. Thus, for a growing number of agents $n$,  \Cref{lemma:bounded-disconnected} ensures that every non-trivial Nash equilibrium is connected. With \Cref{lemma:targetsingletons}, we know that every targeted region is a singleton. Finally, \Cref{lemma:cut-imply-bounded} ensures that no targeted regions are cut-vertices.

    Therefore, if some agents are vulnerable, the size of the connected component after the attack is $n-1$. And because the total cost for all edges and immunizations are linear in $n$, as seen in \Cref{lemma:1}, the social welfare is $n^2 - \mathcal{O}(n)$.
\end{proof}

\section{Low Welfare for a Tailored Opponent}
As contrast for our positive results on the maximum carnage, random attack, and the \sqda{} opponents, we show that there exist opponents, such that Nash equilibria can have social welfare of $\Theta(n)$, i.e., the lowest possible welfare. This shows the counter-intuitive result, that opponents exist, that achieve a lower social welfare than the opponent that actually aims for minimizing the social welfare.

\begin{restatable}{theorem}{thmBadOpponent}\label{thm:bad-opponent}
    There are non-trivial Nash equilibria with respect to some attacker $\mathcal{A}$ with a social welfare of $\Theta(n)$.
\end{restatable}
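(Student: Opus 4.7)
My plan is to prove the theorem by exhibiting, for any fixed constants $C_E,C_I$ with $1<C_E<2$ and $1<C_I<2$, an explicit non-trivial strategy profile $\s^*$ and a bespoke attacker $\mathcal{A}$ under which $\s^*$ is a Nash equilibrium of social welfare $\Theta(n)$. The profile $\s^*$ will be the ``almost empty'' one: agent $1$ is immunized and buys no edge, agent $2$ is vulnerable and owns the single edge $\{1,2\}$, and every other agent is vulnerable with no edges. Thus $\s^*$ has exactly one edge and one immunized agent (so it is non-trivial), and the vulnerable regions of $G(\s^*)$ are precisely the $n-1$ singletons $\{2\},\{3\},\ldots,\{n\}$.

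The attacker $\mathcal{A}$ is specified profile by profile. On $\s^*$, $\mathcal{A}$ is the uniform distribution on those $n-1$ singletons. On any profile $\s$ obtained from $\s^*$ by a unilateral deviation of some agent $i$, I let $\mathcal{A}(\s)$ punish $i$: if $i$ is vulnerable in $\s$, all probability mass is placed on the vulnerable region of $G(\s)$ containing $i$; if $i$ has immunized in $\s$, the mass is placed on a vulnerable region whose destruction minimizes the size of $i$'s remaining connected component. Since an attacker is nothing more than a profile-indexed probability distribution over $\mathcal{V}(\s)$, this specification is admissible, and on profiles that are not unilateral deviations from $\s^*$ the definition may be arbitrary (e.g.\ uniform over $\mathcal{V}(\s)$).

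A short calculation of welfare in $\s^*$ gives the following. Agents $1$ and $2$ each have expected post-attack connectivity close to $2$; each of the $n-2$ isolated vulnerable agents has expected connectivity $(n-2)/(n-1)\to 1$; and the only costs are the $C_E$ paid by agent $2$ and the $C_I$ paid by agent $1$. Summing yields total welfare $n-O(1)=\Theta(n)$, matching the empty-network order of magnitude.

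What remains is Nash verification, which I plan as a case analysis over the deviation types. Every deviation that leaves the deviator in some vulnerable region is killed by the tailored attacker, driving the deviator's utility to $-C_E\cdot|X_i|\le 0$ and hence strictly below the positive baseline each agent enjoys in $\s^*$. Deviations in which a previously isolated agent immunizes are ruled out by $1-C_I<0$; the deviation in which agent $1$ un-immunizes puts $1$ into the new vulnerable region $\{1,2\}$, which $\mathcal{A}$ then destroys. The only genuinely delicate deviation is an isolated vulnerable agent that immunizes \emph{and} simultaneously buys an edge into the core $\{1,2\}$, where the attacker cannot hit the deviator directly; I expect this to be the main obstacle. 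I plan to rule it out by observing that the deviator's post-attack component size never exceeds $3$, so the bound $C_E+C_I>2$ (which follows from $1<C_E,C_I$) forces the cost to outweigh the gain. The parameter window $1<C_E,C_I<2$ is chosen precisely to make both the baseline utilities positive and this compound deviation unprofitable.
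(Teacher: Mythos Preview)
Your construction is correct and the Nash verification goes through: for $1<C_E<2$, $1<C_I<2$ and $n$ large enough, every unilateral deviation from $\s^*$ is unprofitable under your punishing attacker, and the welfare is indeed $n-O(1)=\Theta(n)$. The case analysis you sketch (vulnerable deviators get wiped out, immunized deviators face an attack that strips their component down to at most two or three nodes, and $C_E+C_I>2$ kills the compound ``immunize and link to the core'' deviation) is complete once one also checks the analogous buy-more-edges deviations for agents~$1$ and~$2$, which fall to the same $k(1-C_E)<0$ arithmetic.

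However, your route is genuinely different from the paper's and proves something weaker in spirit. Your attacker is defined by reference to the fixed profile $\s^*$ and to the identity of the deviating agent; it therefore distinguishes between strategy profiles whose induced networks are isomorphic, and in particular it is \emph{not} well-behaved in the sense of \citet{Goyal16}, let alone an $f$-opponent. The paper instead exhibits a fixed $f$-opponent (with $C_E=C_I=6$ and an explicit six-valued $f$) under which a nine-node star with $n-9$ isolated agents is a non-trivial equilibrium. That construction is harder to verify, but it is the point of the section: it shows that even inside the structured class of well-behaved $f$-opponents studied throughout the paper, some opponent forces $\Theta(n)$ welfare, which is what makes the contrast with the maximum-disruption opponent counter-intuitive. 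Your ``punish the deviator'' attacker makes the existence of a bad equilibrium almost automatic, so while it settles the theorem as literally stated, it does not deliver the message the paper is after.
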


\begin{proof}[Proof Sketch]
We consider a family of instances with $n$ agents and $C_E= C_I = 6$ and a tailored $f$-opponent $\mathcal{A}$, where $f$ is defined as follows: 

\begin{center}
    \begin{tabular}{c|c|c|c|c|c|c}
        $i$ & $1$ & $7$ & $8$ & $9$ & $10$ & otherwise \\ \hline
        $f(i)$ & $2$ & $3$ & $5$ & $4$ & $7$ & $0$ 
    \end{tabular}
  \end{center} 

For every $n \geq 10$, we consider the strategy profile~$\s_{\text{bad}}$ illustrated in \Cref{fig:bad-Nash}. There, all but nine agents are isolated.
\begin{figure}[h]
    \centering
    \includegraphics[width=0.6\linewidth]{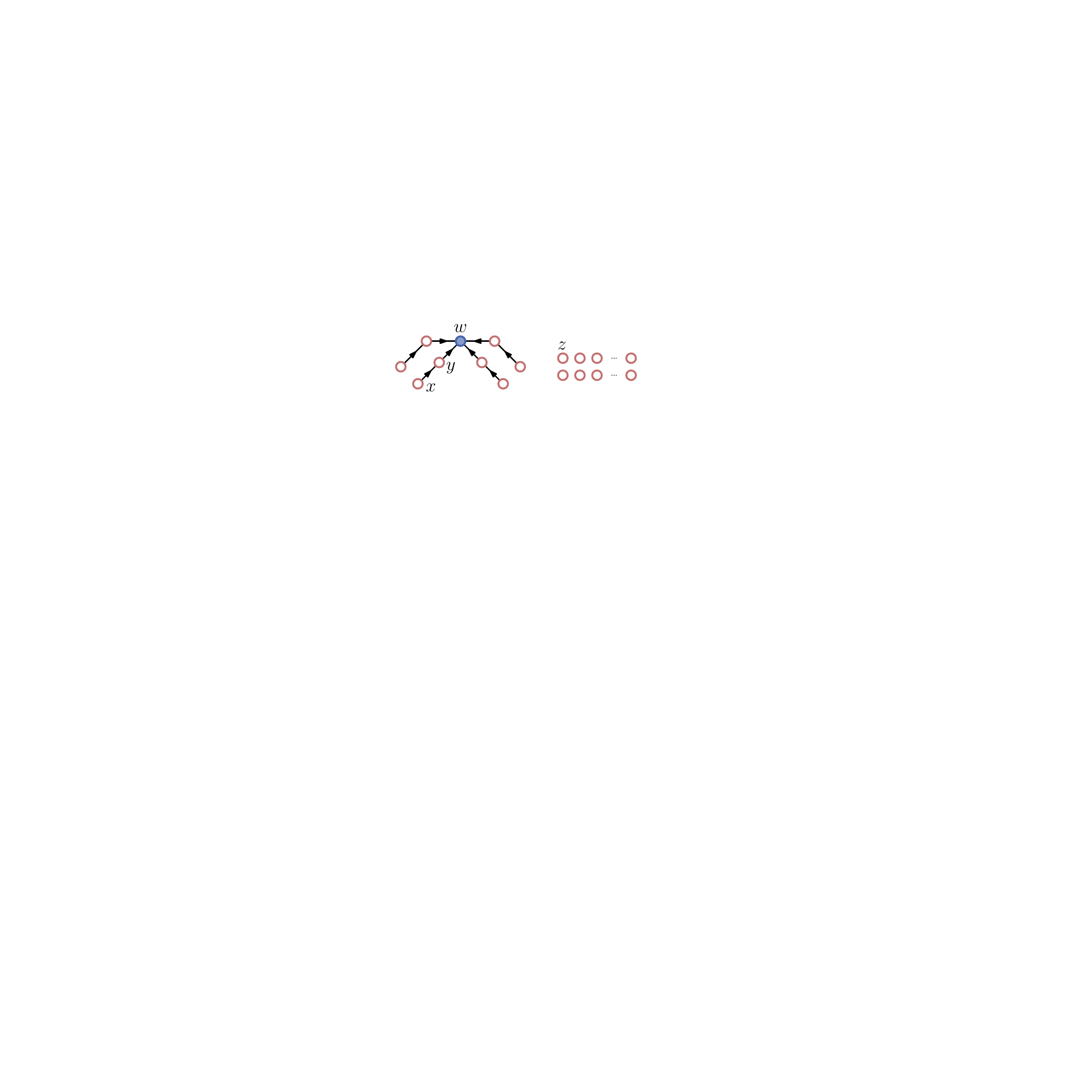}
        \captionof{figure}{The Nash equilibrium~$\s_{\text{bad}}$ with social welfare in $\Theta(n)$, for our tailored opponent, with $C_E = C_I = 6$.}
        \label{fig:bad-Nash}
\end{figure}
 There are therefore $n-5$ vulnerable regions, and $n-9$ targeted nodes. 
 Note, that one of the isolated nodes will be targeted. Thus, the social welfare of profile~$\sbad$ is $$(n-9)\cdot \frac{n-10}{n-9}\cdot 1 + 9\cdot 9 - 8 C_E - C_I \in \Theta(n),$$ since each of the $n-9$ isolated nodes survives with probability $\frac{n-10}{n-9}$ and since the nine nodes in the large connected component are not targeted, each of them has connectivity~$9$.   

We verify that profile $\sbad$ is a Nash equilibrium. 
\end{proof}

\section{Conclusion}
We have revisited the elegant noncooperative network formation model with attack and immunization from \citet{Goyal16}. We prove optimal social welfare of Nash equilibria under various strong types of attackers. This solves an open problem and highlights, that decentralized strategic agents can create very robust networks even under attack.

Our last result, the tailored opponent that yields Nash equilibria with low social welfare, indicates that there is still much to explore even for this very basic model. For example, characterizing what kind of attackers yield suboptimal social welfare, e.g., which properties of the $f$-function are crucial. Also, other variants of the model, for example with cooperation among the agents, could be studied.

\bibliography{references}

\clearpage

\appendix

\section{Omitted Details from the Analysis of Maximum Carnage and Random Attack}

\lemmaBoundedPathCarnage*

\begin{proof}
     Let $G(\s)$ be the equilibrium network and let $c_{i_1},...,c_{i_p}$ be the extraction of the vulnerable cut-vertices on the path in the block-cut decomposition of $G(\s)$. Let $T_{i_j}$ denote the subtree of cut-vertex $c_{i_j}$, i.e., the part of the block-cut decomposition that gets disconnected from blocks $b_{i_j}$ and $b_{i_j+1}$ when $c_{i_j}$ is deleted. Moreover, let $|T_{i_j}|$ denote the number of nodes in $T_{i_j}$ and note that $c_{i_j}$ belongs to $T_{i_j}$. Moreover, let $x_1 \in b_1$ and $x_r \in b_r$ be two immunized agents. See \Cref{fig:path-bounded-carnage.pdf_appendix} for an illustration.
\begin{figure}[h]
\centering
\includegraphics[width=0.7\linewidth]{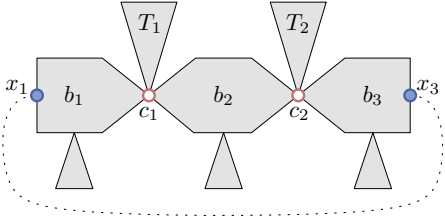}
\caption{The block-cut decomposition of $G(\s)$ with two vulnerable cut-vertices $c_1$ and $c_2$. The dotted edge $\{x_1,x_r\}$ is bought in both strategy profiles $\s_1'$ and $\s_2'$.}
\label{fig:path-bounded-carnage.pdf_appendix}
\end{figure}

\noindent
    Now consider the strategy profiles $\s_1'$ and $\s_2'$ that result from either agent $x_1$ buying an edge to agent $x_r$ or vice versa in strategy profile $\s$. Note, that the connectivity of agent~$x_1$ in profile~$\s_1'$ and of agent~$x_r$ in profile~$\s_2'$ is the same. 

    To compute the utility difference of both strategy changes, note that every vulnerable cut-vertex on the path has a probability of at least $\tfrac{1}{n}$ of being attacked, and such an attack disconnects the components containing the agents $x_1$ and~$x_r$. Also, any node~$z$ can be outside of both resulting connected components only if it is in the sub-tree of the targeted cut-vertex, which can only hold for one of them.
    
    This justifies the following inequality for the expected size of the connected components for both agents $x_1$ and $x_r$ before and after their strategy change:

    \begin{align*}
     &\E_{\T(\s)}[CC_{x_1}(\s_1')] - \E_{\T(\s)}[CC_{x_1}(\s)]\\
     &\quad\quad+ \E_{\T(\s)}[CC_{x_r}(\s_2')] - \E_{\T(\s)}[CC_{x_r}(\s)]\\
     \geq & \frac{1}{n}\sum_{j=1}^{p} \big(CC_{x_r}(c_{i_j},\s) + CC_{x_1}(c_{i_j},\s)\big)\\
     = & \frac{1}{n} \sum_{j=1}^p \big(n - |T_{i_j}|\big) = \frac{1}{n}\left(pn - \sum_{j=1}^p |T_{i_j}|\right)\\
     \geq & (p-1),
    \end{align*}
    where the last inequality follows, since $\sum_{j=1}^p |T_{i_j}| < n$.
    
    Therefore, at least one of the two values $\E_{\T(\s)}[CC_{x_1}(\s_1')] - \E_{\T(\s)}[CC_{x_1}(\s)]$ or $\E_{\T(\s)}[CC_{x_r}(\s_2')] - \E_{\T(\s)}[CC_{x_r}(\s)]$ must be at least $\frac{p-1}{2}$. Since $G(\s)$ is an equilibrium network, none of the strategy changes can improve the utility of the respective agent. This implies that $C_E \ge \frac{p-1}{2}$ holds.
\end{proof}

\lemmaCentroid*
\begin{proof}
Let $Z$ be any connected component of a network $G$. To construct a centroid, start with spanning trees of every vulnerable region of~$Z$, then arbitrarily extend this forest to a spanning tree~$L$ of~$Z$.
Let $c$ be a centroid of $L$. By definition, when $L$ is rooted in $c$, every proper subtree contains at most $|Z|/2$ nodes. Now, we show that $c$ fulfills \Cref{def:centroid}.

If agent~$c$ is immunized or buys immunity, removing a vulnerable region does not remove $c$. However, because the spanning tree $L$ was from the forest of spanning trees of vulnerable regions, the attacked vulnerable region is also a vulnerable region in the tree $L$. Because $L$ is connected, the attacked region can be at most a complete subtree of $c$, without $c$ itself. Hence, agent~$c$'s connected component of after removing the attacked region has size at least $|Z|/2$.
\end{proof}

\lemmaNoCentroidImpliesVulnerable*
\begin{proof}
    Consider a centroid $c$ of $Z$. If agent~$c$ is immunized, the first property holds and we are done. Thus, we assume that agent~$c$ is vulnerable and we investigate the remaining connected components after removing the vulnerable region~$T$ containing agent~$c$. If all of these components have size strictly less than $|Z|/2$, then the second property holds. 

    Thus, assume that component~$Z_x$ has size at least $|Z|/2$. Take any node~$x \in Z_x$ that shared an edge with the removed region~$T$. We claim that $x$ is an immunized centroid.

    First of all, because node~$x$ was not destroyed, it must be immunized. It remains to show that node~$x$ is a centroid of~$Z$. We will inherit the desired property from the centroid~$c$.
    Since $c$ is a centroid, we have that after removing a vulnerable region that does not contain node~$c$, the size of the connected component of $c$ is greater or equal to $|Z|/2$. As the connected component containing node~$x$ also contains node~$c$, it has the desired size. Now, if the removed region is $T$, then the connected component of node~$x$ contains also component~$Z_x$ and thus has size at least $|Z|/2$.
\end{proof}

Also, we use two properties proven by~\citet{Goyal16} for the maximum carnage and random attack opponents.

\begin{restatable}{lemma}{lemmaSingletons}{(Theorem 3 in \cite{Goyal16})}\label{lemma:maxrand_connected}
For non-trivial equilibria with $C_E>1$ and the maximum carnage or random attack opponent, all Nash equilibria are connected. 
\end{restatable}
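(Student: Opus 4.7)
The plan is to argue by contradiction via a standard single-edge exchange, exploiting that an edge between two immunized agents survives every attack. Suppose $\s$ is a non-trivial Nash equilibrium whose induced network is disconnected, with components $K_1, \ldots, K_m$ for some $m \geq 2$; by non-triviality there is an immunized agent, WLOG $x^* \in K_1$. The first preparatory step I would take is to argue that \emph{every} component must contain at least one immunized agent. A component $K_j$ that is entirely vulnerable forms a single vulnerable region of size $|K_j|$, which the maximum carnage attacker targets whenever it is among the largest regions and which the random attacker hits with probability $|K_j|/|\U(\s)|$. A deviation in which some $z \in K_j$ immunizes saves expected destruction strictly exceeding $C_I$ as soon as $|K_j|$ is not tiny; the remaining small cases I would handle directly using $C_E > 1$ together with the presence of $x^*$ elsewhere.

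The main step would then be the following deviation. Pick any $K_j \neq K_1$ and an immunized $y^* \in K_j$, and have $y^*$ buy the edge $\{y^*, x^*\}$. Since no node changes status, the set of vulnerable regions and hence the attacker's distribution are identical before and after the deviation. But because $x^*$ and $y^*$ are both immunized, they remain connected after any attack, so $y^*$'s expected post-attack connectivity grows by $|K_1| - \E[|T \cap K_1|]$, where $T$ denotes the attacked region. Taking $K_1$ to be the largest component ensures that this gain is at least $|K_1|/2$, and my goal is to show that for some such choice of $(K_j, K_1)$ the gain strictly exceeds $C_E$, contradicting the Nash property.

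The main technical obstacle will be ruling out configurations in which every component is so small that $|K_1| \leq 2C_E$. For this I would combine \Cref{lemma:1} (vulnerable regions are trees, $|E| \leq 2n-4$) with an amortization argument: any immunized agent in an equilibrium component must recoup its immunization cost $C_I > 1$ from in-component connectivity, forcing each component containing an immunized agent to have size at least roughly $C_I + 1$. Iterating the deviation by letting both endpoints range over immunized agents of the different components, and using the largest one as the ``anchor'' $K_1$, the gain $|K_1|-\E[|T \cap K_1|]$ can then be shown to exceed $C_E$. This produces the desired strictly profitable deviation and hence the contradiction, so every non-trivial equilibrium under the maximum carnage or random attack opponent is connected.
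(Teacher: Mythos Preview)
The paper does not supply its own proof of this statement: it is quoted verbatim as Theorem~3 of \cite{Goyal16} and used as a black box. So there is nothing in the present paper to compare your argument against.

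That said, your outline contains a genuine error in the main step. When $y^*$ buys the edge $\{y^*,x^*\}$ between two immunized agents, the attacker's distribution is indeed unchanged, but the gain in $y^*$'s expected connectivity is \emph{not} $|K_1| - \E[|T \cap K_1|]$. Removing a targeted region $T \subseteq K_1$ may split $K_1$ into several pieces, and $y^*$ only gains the piece containing $x^*$; the correct gain is $\E_{\mathcal T(\s)}[CC_{x^*}(\s)]$, which can be far smaller than $|K_1| - \E[|T\cap K_1|]$. Consequently the claim ``taking $K_1$ to be the largest component ensures that this gain is at least $|K_1|/2$'' does not follow from anything you have written, and the subsequent argument that the gain exceeds $C_E$ is unsupported. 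To repair this you would need a separate lower bound on $\E_{\mathcal T(\s)}[CC_{x^*}(\s)]$---for instance via a centroid-type argument or via \Cref{lemma:maxrand_singletons}---and that is precisely the substantive work the cited proof carries out.

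Your preparatory step (every component contains an immunized agent) is also only sketched. The sentence ``a deviation in which some $z\in K_j$ immunizes saves expected destruction strictly exceeding $C_I$ as soon as $|K_j|$ is not tiny; the remaining small cases I would handle directly'' hides real case analysis: for small fully vulnerable components one must argue about a \emph{different} deviation (typically buying an edge into $K_1$, possibly together with immunizing), and the effect of that deviation on the set of targeted regions is attacker-specific and not obviously favorable. None of this is fatal to the overall strategy, but as written the proposal is a plan rather than a proof.
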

 
\begin{restatable}{lemma}{lemmaSingletons}{(Lemma 3 in \cite{Goyal16})}\label{lemma:maxrand_singletons}
For non-trivial equilibria with $C_E>1$ and the maximum carnage or random attack opponent, for any connected component that has at least one immunized agent and at least one edge, the targeted regions are singletons. 
\end{restatable}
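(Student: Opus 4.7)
The plan is to argue by contradiction: assume some non-trivial equilibrium $\s$ has a connected component $K$ with an immunized agent, at least one edge, and a targeted vulnerable region $T\subseteq K$ with $|T|\ge 2$. By \Cref{lemma:1}, $T$ is a tree, so it has a leaf $v$ with unique $T$-neighbour $w$; let $e=\{v,w\}$ and let $a\in\{v,w\}$ be the owner of $e$. I will show that $a$'s deviation of selling $e$ strictly increases $a$'s utility. The deviation saves $C_E>1$ in cost and replaces $T$ by two strictly smaller vulnerable regions $\{v\}$ and $T^-:=T\setminus\{v\}$ of sizes $1$ and $|T|-1$ respectively.

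The key structural consequence is that after the deviation no region of size $|T|$ contains $a$: either some pre-existing region of size $|T|$ (previously tied with $T$) becomes the exclusive target, or the new maximum is at most $|T|-1$. For the maximum carnage opponent I will do a short case split. If $T$ was tied with $k-1\ge 1$ other biggest regions, then after the split the attacker concentrates uniformly on the remaining $k-1$ regions; $a$ is never destroyed, and on the event of probability $1/k$ where $T$ used to be hit, $a$'s connectivity jumps from $0$ to at least $1$, which together with the $C_E>1$ saving yields a strict improvement. If $T$ was uniquely biggest, subcases on whether the new biggest is $T^-$, some pre-existing smaller region, or a tie, and on whether $a=v$ or $a=w$, show that $a$'s destruction probability never increases and in fact strictly decreases except when $T^-$ is uniquely the new biggest and $a=w$; in that last subcase $a$'s connectivity is $0$ both before and after, so the saved $C_E>1$ alone gives a strict improvement. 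For the random attack opponent the destruction probability of $a$ strictly drops from $|T|/|\mathcal{U}|$ to $|T_a|/|\mathcal{U}|$ with $|T_a|<|T|$, and on the newly surviving attack events $a$'s connectivity jumps from $0$ to at least $1$.

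The main obstacle is controlling a possible collateral decrease of $a$'s connectivity on events where some region $R\neq T$ is attacked, since removing $e$ can shrink $a$'s component only when $e$ is a bridge of $K\setminus R$. I plan to bound this by using that $K$ is connected and has at least one immunized node: the portion of $K\setminus R$ on $w$'s side that becomes unreachable from $a$ after $e$ is removed is contained in $T^-$ together with immunized nodes reachable only through $T^-$, and once weighted by the attacker's probability of hitting such an $R$, the resulting loss is strictly smaller than the survival-probability gain plus the $C_E>1$ savings. Combining these estimates gives a strict utility increase for $a$, contradicting the assumption that $\s$ is a Nash equilibrium; therefore every targeted region in $K$ must have size $1$.
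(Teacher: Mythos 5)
First, a framing note: the paper does not prove this statement at all—it imports it verbatim as Lemma~3 of \citet{Goyal16}—so the only in-paper point of comparison is the swap-based proof of the generalization \Cref{lemma:targetsingletons}. Your plan replaces that swap by a pure edge \emph{drop}, and that is exactly where it breaks. The decisive gap is your bound on the collateral connectivity loss: you claim that what agent~$a$ loses on attack events it survives is confined to $T^-$ plus immunized nodes reachable only through $T^-$, and that this loss is outweighed by the saved $C_E>1$ plus the survival-probability gain. That fails whenever $e$ is $a$'s lifeline into the component. Concretely, let $a=v$ be a leaf of $T$ whose only edge is $e=\{v,w\}$, let $w$ be attached to a path of $m$ immunized nodes, and let there be $k-1\ge 1$ other vulnerable regions of size $|T|$ elsewhere, so that $T$ is attacked with probability $1/k$. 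Before the drop, $v$'s utility is roughly $(1-\tfrac{1}{k})(m+2)-C_E$; after the drop, $v$ is isolated (and, as a singleton, no longer targeted), so its utility is $1$. The lost nodes are indeed ``immunized nodes reachable only through $T^-$,'' but there are $m$ of them and the attacker spares them with probability bounded away from $0$, so the weighted loss grows linearly in $m$ and cannot be absorbed by $C_E$ plus a survival gain of at most $O(1)$. Whether or not this particular configuration is itself an equilibrium is immaterial: it shows your proposed deviation is not improving there, so the contradiction never materializes, and no choice of leaf repairs this (both leaves of a path-shaped $T$ can be in the lifeline situation).

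The standard fix—used both in the original proof in \cite{Goyalarxiv} and in this paper's proof of \Cref{lemma:targetsingletons}—is to \emph{swap} rather than drop: since the component is connected and contains an immunized agent, some immunized node $y$ is adjacent to $T$, and the owner of the leaf edge exchanges it for an edge to $y$ (with a short argument, via \Cref{cor:size-after-edge} in the \sqda{} version, ensuring the detached piece of $T$ stays hooked to an immunized node). The swap costs nothing extra, weakly increases the deviator's post-attack component in every surviving event, strictly raises the attacker's objective on both pieces of $T$ while leaving it unchanged elsewhere, and hence removes $T$ from the targeted set; the deviator's new utility is its old utility conditioned on $T$ not being attacked, a strict gain. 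Your case analysis of how the targeted set reshuffles after splitting $T$ is sound and would slot into that argument; only the drop-versus-swap choice needs repairing.
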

\noindent Note that when the conditions of \Cref{lemma:maxrand_singletons} hold, then we have that $\E_{\U(\s)}[CC_v(\s)] \coloneqq \E_{\T(\s)}[CC_v(\s)]$, since all targeted regions are singletons.

Finally, we are ready bound the number of vulnerable agents in any non-trivial Nash equilibrium.

\begin{restatable}{lemma}{lemmaLowerBoundVulnerable}\label{lemma:lower-bound-vulnerable}
    In a non-trivial Nash equilibrium $\s$ with respect to the maximum carnage or random attack opponent, there are either no vulnerable agents or at least $\frac{n}{2C_I}$ of them.
\end{restatable}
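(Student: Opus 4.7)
The plan is to assume $k := |\U(\s)| \geq 1$ and derive $k \geq n/(2C_I)$ by producing a profitable immunization deviation for a carefully chosen vulnerable agent whenever this bound fails. By \Cref{lemma:maxrand_connected} and \Cref{lemma:maxrand_singletons}, the equilibrium network $G(\s)$ is connected and every vulnerable region is a singleton; hence, under both the random attack and the maximum carnage opponent, each of the $k$ vulnerable nodes is attacked with probability exactly $1/k$, and no two vulnerable nodes are adjacent in $G(\s)$.

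Next, I would identify a vulnerable agent $v\in\U(\s)$ satisfying $CC_v(u',\s) \geq n/2$ for every $u'\in\U(\s)\setminus\{v\}$, using the dichotomy of \Cref{lemma:no-centroid-implies-vulnerable}. In the first case, $G(\s)$ admits an immunized centroid $c$, so $CC_c(u',\s)\geq n/2$ for every vulnerable $u'$. I would pick $v\in\U(\s)$ minimizing the number of cut-vertices of $G(\s)$ strictly between $v$ and $c$ on their unique $v$-to-$c$ path in the block-cut decomposition. If some $u'\in\U(\s)\setminus v$ were a cut-vertex on this path, then $u'$ itself would lie strictly closer to $c$ in the block-cut tree than $v$, contradicting minimality. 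Hence no other vulnerable agent separates $v$ from $c$: $v$ and $c$ stay in the same component of $G(\s)\setminus u'$ for every $u'\in\U(\s)\setminus v$, giving $CC_v(u',\s)\geq CC_c(u',\s)\geq n/2$. In the second case, no immunized centroid exists and \Cref{lemma:no-centroid-implies-vulnerable} supplies a vulnerable singleton $\{c^*\}$ whose node $c^*$ is a centroid. Setting $v:=c^*$, the centroid property yields $CC_{c^*}(u',\s)\geq n/2$ for every $u'\in\U(\s)\setminus c^*$ directly, because whether $c^*$ is immunized does not alter $G(\s)$ and hence does not alter any connected-component size.

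With such $v$ in hand, I would analyze the deviation $\s'$ in which $v$ immunizes and leaves its edges untouched. Because the remaining vulnerable regions are the unchanged singletons of $\U(\s)\setminus v$, a direct computation gives $\E_{\U(\s')}[CC_v(\s')] - \E_{\U(\s)}[CC_v(\s)] = \overline{CC_v}/k$, where $\overline{CC_v} := \frac{1}{k-1}\sum_{u'\in\U(\s)\setminus v}CC_v(u',\s) \geq n/2$. The Nash inequality $u_v(\s)\geq u_v(\s')$ forces $C_I\geq \overline{CC_v}/k \geq n/(2k)$, which rearranges to $k \geq n/(2C_I)$. The degenerate case $k=1$ is handled directly: after $v$ immunizes no vulnerable agent remains, so $v$ enjoys full connectivity $n$, and the Nash condition $C_I \geq n$ still yields $k=1\geq n/(2C_I)$.

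The main obstacle I anticipate is the block-cut construction in the first case: proving that the vulnerable agent minimizing block-cut distance to the immunized centroid is indeed never separated from that centroid by the deletion of another vulnerable agent. Everything else reduces to a direct computation using the singleton-region structure and the Nash best-response condition.
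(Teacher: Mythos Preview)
Your argument is correct and follows the same centroid-plus-immunization-deviation scheme as the paper. The paper's choice of $v$ in the immunized-centroid case is more direct, however: it simply takes any vulnerable node with an edge into the immunized region of the centroid $c$, so that removing any other vulnerable singleton cannot destroy the all-immunized $v$-to-$c$ path through that region; this sidesteps your block-cut minimization entirely and dissolves the obstacle you anticipate.
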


\begin{proof}
    If all agents are immunized, then clearly the number of vulnerable agents is zero. Hence, 
    assume that there are vulnerable agents in $G(\s)$, i.e., $\mathcal{U}(\s) \neq \emptyset$, and let node~$x$ be a centroid of the network $G(\s)$.

    Since profile~$\s$ is a non-trivial Nash equilibrium, we know that at least one agent is immunized. Also, since $C_E>1$, we have that \Cref{lemma:maxrand_connected} that $G(\s)$ is connected and, by \Cref{lemma:maxrand_singletons}, we know that all targeted regions are singletons. 
    
    If agent~$x$ is immunized, let $v$ denote a vulnerable node that is connected via an edge with the immunized region containing the centroid~$x$.  If $x$ is vulnerable, then let~$v = x$.

   Consider the strategy profile~$\s'$ that is identical to profile~$\s$ but where agent~$v$ additionally buys immunity. Agent~$v$'s expected size of its connected component in $G(\s)$ is $$\E_{\U(\s)}[CC_v(\s)] = \frac{|\U(\s)|-1}{|\U(\s)|}\E_{\U(\s) \setminus \{v\}}[CC_v(\s)].$$ Because agent~$v$'s connected component is the same as the one of agent~$x$, which is a centroid, when node~$v$ is not destroyed, it holds that $\E_{\U(\s) \setminus \{v\}}[CC_v(\s)] \ge \frac{n}{2}$, since, by \Cref{lemma:maxrand_connected}, the network $G(\s)$ is connected. Therefore, in strategy profile~$\s'$ with vulnerable nodes $\U(\s') = \U(\s)\setminus \{v\}$ we have
   \begin{align*}
    \E_{\U(\s')}[CC_v(\s')] &\geq \E_{\U(\s) \setminus \{v\}}[CC_v(\s)]\\ &\ge \E_{\U(\s)}[CC_v(\s)] + \frac{n}{2|\U(\s)|}.
    \end{align*}
    Because strategy profile~$\s$ is a Nash equilibrium, we get that $\frac{n}{2|\U(\s)|} \le C_I$ and therefore that $|\U(\s)| \ge \frac{n}{2 C_I}$.
\end{proof}

\theoremBoundsPrevious*
\begin{proof}
    Consider the immunized agent~$x$ and the block-cut decomposition of the network rooted in the block containing node~$x$.

    For every $0 \le i \le 2C_E +1$, define $H_i$ the set of vulnerable cut-vertices of rank $i$. The rank is defined as follows. Let $H_0$ is the set of vulnerable cut-vertices such that no other vulnerable cut-vertices are in the subtree of this node. Then $H_{i+1}$ is the set of cut-vertices whose subtree contains a different node in $H_i$. Finally, let $H = \cup_{i} H_i$ the set of all vulnerable cut-vertices. See \Cref{fig:bounds-carnage_appendix} for an illustration.
    \begin{figure}[h]
    \centering    \includegraphics[width=0.6\linewidth]{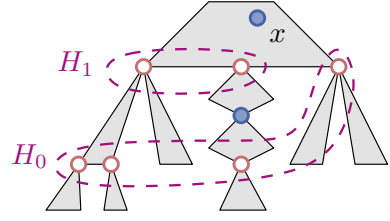}
    \caption{The layers $H_i$ from the proof of \Cref{theorem:bounds-previous}.}
    \label{fig:bounds-carnage_appendix}
    \end{figure}

    \noindent
    Note that for a given $0 \le i \le 2 C_E +1$ and a node $y$, there is at most one node in $H_i$ that disconnects nodes~$x$ and $y$ when removed. Therefore,
    \begin{equation}\label{eq1:bounds-previous}
        \sum_{z \in H_i} CC_{x}(z, \s) \ge n (|H_i|-1),
    \end{equation}
    since every node is in all of these components except for at most one. Also, remember, that by \Cref{lemma:maxrand_singletons}, all targeted regions are singletons. Therefore, 
    \begin{align*}
    &\E_{\U(\s)}[CC_x(\s)] \\
    =\;& \frac{|\U(\s)| - |H|}{|\U(\s)|}(n-1) + \frac{1}{|\U(\s)|}\sum_{i=0}^{2C_E + 1} \sum_{z \in H_i} CC_x(\s,z) \\
    \overset{\text{Eq.} (\ref{eq1:bounds-previous})}{\ge}\;& \frac{|\U(\s)| - |H|}{|\U(\s)|}(n-1) + \frac{1}{|\U(\s)|}\sum_{i=0}^{2 C_E + 1} n(|H_i| - 1) \\
    =\;& \frac{|\U(\s)| - |H|}{|\U(\s)|}(n-1) + \frac{|H|}{|\U(\s)|}n - (2 C_E + 2) \frac{n}{|\U(\s)|}\\
    \ge\;& n -1 - \frac{|H|}{|\U(\s)|} - 4 C_I( C_E +1) \\
    \ge\;& n - 2 - 4 C_I( C_E +1),
    \end{align*}
    where for second to last inequality, we used $\U(\s) \geq\frac{n}{2C_I}$ from \Cref{lemma:lower-bound-vulnerable}. 
    
    Finally the social welfare is the expected value of the sum of squares of connected components minus the cost. This is higher or equal to the expected value of the square of the connected component of $x$ minus the cost. We therefore have, using Jensen's inequality and \Cref{lemma:1}, that the social welfare of Nash equilibrium~$\s$ is at least
    \begin{align*}
    & \E_{\U(\s)}[CC_x(\s)^2] - (2n-4)C_E - n C_I  \\
    \ge\;&  \E_{\U(\s)}[CC_x(\s)]^2 - (2n-4)C_E - n C_I \\
    \ge\;& (n - 2 - 4 C_I(C_E + 1))^2  - (2n-4)C_E - n C_I \\
    =\;& n^2 - O(n). \qedhere
    \end{align*}
\end{proof}

\section{Omitted Details from the Analysis for SQD Opponents}
The following lemma captures that the \sqda{} opponent favors attacking large components.

\begin{restatable}{lemma}{lemmaFOppIncrease}\label{lemma:f-opp-increase}
    Let $\s$ and $\s'$ be two strategy profiles and let $T$ and $T'$ be vulnerable regions with respect to $\s$ and $\s'$, respectively. Further, let $(CC_i(T, \s))_i$ be the vector of the sizes of the connected component every agent~$i$ lies in with respect to $\s$ after an attack on $T$, and $\varphi$ a permutation of the agents $[n]$. Then, if $(CC_i(T,\s))_i \ge (CC_{\varphi(i)}(T',\s'))_i$, then $U_f(T, \s) 
\geq U_f(T',\s')$. And if additionally $CC_i(T,\s) > CC_{\varphi(i)}(T',\s')\ge1$ for some $i \in [n]$, then $U_f(T, \s) > U_f(T',\s')$.
\end{restatable}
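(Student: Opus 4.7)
The plan is to rewrite $U_f(T,\s)$ as a sum indexed by agents rather than by connected components, so that the componentwise hypothesis can be applied termwise. Define $g : \{0, 1, \ldots, n\} \to \mathbb{R}^+$ by $g(0) = 0$ and $g(x) = f(x)/x$ for $x \geq 1$. Using the convention $CC_i(T, \s) = 0$ whenever agent $i$ lies in the attacked region $T$, we have
$$U_f(T, \s) = \sum_{K \in \mathcal{K}(T)} f(|K|) = \sum_{K \in \mathcal{K}(T)} |K| \cdot g(|K|) = \sum_{i=1}^n g(CC_i(T, \s)),$$
since every surviving agent in a component $K$ contributes exactly $g(|K|)$, while destroyed agents contribute $g(0) = 0$.

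Next, I would establish that $g$ is non-decreasing on $\{0, 1, \ldots, n\}$ and strictly increasing on $\{1, \ldots, n\}$. Monotonicity on the positive integers follows directly from the \sqda{} assumption that $f(x)/x^2$ is non-decreasing: the product $g(x) = x \cdot (f(x)/x^2)$ is non-decreasing as the product of two non-decreasing non-negative functions, and $g(0) = 0 \leq f(1) = g(1)$ since $f \geq 0$. For strict monotonicity I would use strict convexity together with $f(0) = 0$: writing the discrete differences $\Delta(k) = f(k+1) - f(k)$, strict convexity means $\Delta$ is strictly increasing on $\{0, 1, \ldots, n-1\}$, and since $f(b)/b$ is the arithmetic mean of $\Delta(0), \ldots, \Delta(b-1)$, it is strictly increasing in $b \geq 1$.

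With these monotonicity facts in hand, the conclusions follow at once. Reindexing by the permutation $\varphi$ gives $\sum_i g(CC_{\varphi(i)}(T', \s')) = U_f(T', \s')$, so
$$U_f(T, \s) - U_f(T', \s') = \sum_{i=1}^n \bigl[g(CC_i(T, \s)) - g(CC_{\varphi(i)}(T', \s'))\bigr].$$
Each term is non-negative by the componentwise hypothesis and monotonicity of $g$, proving the first inequality; under the additional hypothesis $CC_i(T,\s) > CC_{\varphi(i)}(T',\s') \ge 1$, strict monotonicity of $g$ on positive integers makes the corresponding term strictly positive, so the whole sum is strictly positive. The only real subtlety I foresee is deriving strict monotonicity of $g$ from strict convexity on an integer domain, which is handled by the arithmetic-mean-of-differences observation above; everything else is a routine rewriting followed by termwise comparison.
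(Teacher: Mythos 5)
Your proposal is correct and follows essentially the same route as the paper: both rewrite $U_f$ as a per-agent sum of $f(CC_i)/CC_i$ (with the value $0$ for destroyed agents) and reduce the claim to the strict monotonicity of $n \mapsto f(n)/n$, proved via the telescoping/averaging of the discrete differences of the strictly convex $f$ with $f(0)=0$. Your extra derivation of non-strict monotonicity from the $f(x)/x^2$ condition is redundant but harmless.
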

\begin{proof}
    Remember, the function $U_f(T,s)$ is defined to be $U_f(T,\s) \coloneqq \sum_{K \in \mathcal{K}(T)}f(|K|)$, where $\mathcal{K}(T)$ is the set of connected components of the network after $\mathcal{A}$ attacked some vulnerable region $T \in \mathcal{T}(\s)$. We have $$U_f(T,\s) = \sum_{K \in \mathcal{K}(T)}f(|K|) = \sum_{v\in V} \frac{f(CC_v(T, \s))}{CC_v(T,s)},$$ where the denominator compensates the fact that every connected component occurs exactly as often as it has vertices in this sum. By convention, if $CC_v(T,\s) = 0$ then $ \frac{f(CC_v(T, \s))}{CC_v(T,s)} = 0$.
    
    Now, in order to prove this lemma, it is sufficient to prove that $n \mapsto \frac{f(n)}{n}$ is strictly increasing, as the order of the summation does not matter (which is why we can consider any permutation~$\varphi$ of the agents). We do so, by showing that \begin{equation}f(n) < n [f(n+1) - f(n)] \label{f_eq}\end{equation} holds for every $n > 0$ (which is a rephrased version of $\frac{f(n)}{n} < \frac{f(n+1)}{n+1}$ for all $n > 0$). For this, notice that $f(n)$ can be written as the telescoping series 
        $$f(n) = f(0) + \sum_{i=0}^{n-1} \left(f(i+1) - f(i)\right).$$
    We know that, by definition, $f(0)=0$ holds, and $$f(i+1) - f(i) < f(n+1) - f(n)$$ (for all $i < n$) by the strict-convexity of $f$. With this, it follows that $(\ref{f_eq})$  holds and thus also the lemma.
\end{proof}

\sqdCutVertices*
\begin{proof}
    When removing $T_\text{cut}$, the remaining nodes of component~$K$ form several connected components $K_1,...,K_r$. This implies that their respective sizes are all strictly lower than $|K| - |T_\text{cut}|$. However, when removing $T_\text{leaf}$, the connectivity of the remaining nodes is exactly $|K| - |T_\text{leaf}|$.
    
    We have that, (1) there is the same number of remaining nodes in both cases, since we assumed that $|T_\text{leaf}| = |T_\text{cut}|$, and (2) that the size of the other components are not affected by the removal of $T_\text{cut}$ or $T_\text{leaf}$.  
    It follows that $$(CC_i(T_\text{leaf},\s))_i > (CC_{\varphi(i)}(T_\text{cut},\s))_i,$$ where $\varphi$ is any permutation of $[n]$ that is the identity outside of component $K$ and swaps the agents in $T_\text{cut}$ and $T_\text{leaf}$. By \Cref{lemma:f-opp-increase}, we have $U_f(T_\text{cut}, \s) < U_f(T_\text{leaf}, \s)$.
\end{proof}

\noindent 
Also, selling edges in targeted regions makes the region less attractive for the \sqda{} opponent, i.e., it is edge-averse.

\corTechCor*
\begin{proof}
    Let $x$ be any node in $G(\s)$. Notice that $CC_x(T,\s)$ is the set of nodes $y$ such that there exists a path between $x$ and $y$ that does not contain nodes in $T$. In particular, this path contains no edge that was removed in profile~$\s'$. And because no new edges were created and no immunized agent sold immunity, it follows that $T' \subseteq T$. Therefore,
    the path between node~$x$ and node~$y$ also exists in $G(\s')$ when removing $T'$. Hence, $CC_x(T',\s) \ge CC_x(T,\s)$. which allows us to conclude by \Cref{lemma:f-opp-increase}.
\end{proof}

Before we prove our first structural property in \Cref{lemma:greed-drop}, we first show the following technical result for the function $g:(x,y) \mapsto f(x{+}y) - f(x) - f(y)$ (w.r.t function $f$ of the \sqda{} opponent) that is needed in \Cref{lemma:greed-drop}. This function computes the change to $U_f$ when the number of connected components change. Indeed, if a component of size $x{+}y$ gets split into components of size $x$ and~$y$, instead of adding a term $f(x{+}y)$ to $U_f$ it adds $f(x){+}f(y)$ instead. Hence this cut will change the value of $U_f$ by exactly $g(x,y)$.

\begin{restatable}{lemma}{lemmaGIncrease}\label{lemma:g-increase}
    The function $g:(x,y) \mapsto f(x+y) - f(x) - f(y)$ is strictly increasing in $x$, for all $x \geq 0$ and all fixed $y > 0$. 
\end{restatable}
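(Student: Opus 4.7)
The plan is to exploit the discrete nature of $f$ and reduce the claim to a one-step monotonicity statement about the discrete derivative $f'$ already defined at the start of the paper. Since the domain of $f$ is $\{0,\dots,n\}$, proving strict monotonicity in $x$ for fixed $y>0$ is equivalent to proving $g(x{+}1,y) - g(x,y) > 0$ for all admissible $x$.

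First I would expand the one-step difference directly from the definition of $g$:
\begin{align*}
g(x{+}1,y) - g(x,y) &= \bigl[f(x{+}1{+}y) - f(x{+}1) - f(y)\bigr] \\
&\quad - \bigl[f(x{+}y) - f(x) - f(y)\bigr] \\
&= \bigl[f(x{+}y{+}1) - f(x{+}y)\bigr] - \bigl[f(x{+}1) - f(x)\bigr] \\
&= f'(x{+}y) - f'(x),
\end{align*}
using the paper's notation $f'(n) = f(n{+}1) - f(n)$ for the discrete derivative.

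Next I would invoke the strict convexity of $f$, which is the first defining property of an SQD function. In the discrete setting this is exactly the statement that $f'$ is strictly increasing on its domain, i.e.\ $f'(m) > f'(k)$ whenever $m > k$. Since $y \geq 1$, we have $x{+}y > x$, hence $f'(x{+}y) - f'(x) > 0$, and therefore $g(x{+}1,y) > g(x,y)$ as required. Iterating this one-step inequality yields strict monotonicity of $g(\cdot, y)$ on the full range $x \geq 0$.

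There is no real obstacle here: the only thing to be careful about is making sure we use the right reading of \emph{strict convexity} for a function on $\N$, namely that $f'$ is a strictly increasing discrete function, which is what the paper implicitly means (and which is the property actually needed in \Cref{lemma:f-opp-increase} as well). Note that the second SQD property, that $f(x)/x^2$ is non-decreasing, is not needed for this particular lemma; only strict convexity is used.
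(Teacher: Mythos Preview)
Your proof is correct and essentially identical to the paper's own argument: both compute the one-step difference $g(x{+}1,y)-g(x,y)=f'(x{+}y)-f'(x)$ and then invoke strict convexity of $f$ (i.e.\ strict monotonicity of $f'$) together with $y>0$ to conclude positivity. Your additional remarks that only strict convexity is used (not the $f(x)/x^2$ condition) are accurate and worth keeping.
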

\begin{proof}
    This is a direct consequence of the strict-convexity of $f$. Indeed by reordering terms, we have for all $y> 0$ and $x \ge 0$ the equality:
    \begin{align*}
    &\big[f(x{+}y{+}1){-}f(x{+}1){-}f(y) \big]\\
    &\quad\quad- \big[f(x{+}y){-}f(x){-}f(y) \big] \\
    =\;& \big[f(x{+}y{+}1){-}f(x{+}y) \big] - \big[f(x{+}1){-}f(x) \big]\\
    =\;& f'(x{+}y) - f'(x).
    \end{align*}
    We know that $f'$ is strictly increasing, by definition of strict convexity. Since we assume $y>0$, it holds that $x+y > x$ and thus that $f'(x{+}y) - f'(x)$ is strictly positive.
\end{proof}

The following lemma is our first structural result about equilibrium networks, where we state that under some assumptions, the loss of connectivity by dropping edges is less than the size of the component into which the edges were bought. This may seem obvious, especially since selling edges is supposed to make an agent less threatening to an \sqda{} opponent which should make it want to target something further away from the agent. However, as \Cref{example} shows, some non-trivial conditions are required for this to hold. Indeed, without edge~$e$ in profile~$\s'$, the connectivity of agent~$i$ is lowered by $9$, compared to having the edge in profile~$\s$, although the component $A$ only contains $7$ nodes.
\begin{restatable}{lemma}{lemmaGreedDrop}\label{lemma:greed-drop}
    Let $\s$ be a strategy profile with induced network $G(\s)$ that has a cut-vertex $u$ in some connected component~$Z$. We consider an $SQD$ opponet $\mathcal{A}$. Let $Z_v$ be a remaining connected component of $Z$ after removing node~$u$, and $v \in Z_v$ a node such that edge $\{u,v\}$ exists. Also, assume that there are either targeted regions outside of $Z$ or no targeted regions contained in $Z_v$. If $\s'$ is the profile where all edges between $u$ and a node in $Z_v$ are dropped, then $$\E_{\mathcal{T}(\s')}[CC_u(\s')] \ge \E_{\mathcal{T}(\s)}[CC_u(\s)] - |Z_v|.$$ Furthermore, if $u$ was targeted, then this inequality is strict.
\end{restatable}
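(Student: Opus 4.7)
The plan is to exploit the key structural observation that since $u$ is a cut-vertex of $Z$ and $Z_v$ is a component of $Z \setminus \{u\}$, dropping all edges between $u$ and $Z_v$ disconnects $Z_v$ from the rest of $Z$ in $G(\s')$. Hence the component of $u$ in $G(\s')$ has size exactly $|Z|-|Z_v|$, so the ``automatic'' loss in $u$'s connectivity is precisely $|Z_v|$. The whole task is to show that the attacker's adjusted choice of target in $\s'$ does not induce any additional loss to $u$ beyond this amount.

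First I would prove the following comparison. For any vulnerable region $T$ that is intact in both $\s$ and $\s'$ (i.e., not relying on any dropped $u$-to-$Z_v$ edge for its connectedness in the vulnerable subgraph), attacking $T$ produces exactly the same components outside of $Z$ in both profiles, while the $u$-containing component in $\s$, of some size $c \ge |Z_v|$, becomes two components of sizes $c-|Z_v|$ and $|Z_v|$ in $\s'$. Therefore $U_f(T,\s) - U_f(T,\s') = g(c-|Z_v|,|Z_v|)$, with $g$ as in \Cref{lemma:g-increase}. Super-additivity of $f$, coming from strict convexity together with $f(0)=0$, makes this difference nonnegative, and strictly positive whenever $c > |Z_v|$.

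The hypothesis then splits into two sub-cases. If some $T_0 \in \mathcal{T}(\s)$ lies outside $Z$, the previous identity gives $c = |Z|$ and hence $U_f(T_0,\s') = U_f(T_0,\s) - g(|Z|-|Z_v|,|Z_v|)$, strictly dropping the minimum; if instead no targeted region of $\s$ lies inside $Z_v$, every $T \subseteq Z_v$ satisfies $U_f(T,\s) > U_f(T_0,\s)$ for a true minimizer $T_0$, and an analogous splitting computation shows this gap persists in $\s'$, so no region contained in $Z_v$ becomes a sole attractor. In either sub-case, every $T^* \in \mathcal{T}(\s')$ is either a region strictly inside $Z_v$, for which $CC_u(T^*,\s') = |Z|-|Z_v|$ already meets the bound, or it corresponds to an admissible region $T$ inherited from $\s$ with $CC_u(T^*,\s') \ge CC_u(T,\s) - |Z_v|$. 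Combining these two observations and using \Cref{lemma:f-opp-increase} to compare the minima over $\mathcal{T}(\s)$ and $\mathcal{T}(\s')$ yields the desired inequality. The strict form when $u$ was targeted in $\s$ follows since then $CC_u(T,\s) = 0$ for any $u$-containing targeted $T$, while $CC_u$ in $\s'$ is always strictly positive, giving a nonzero slack.

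The main obstacle is the case when $u$ itself is vulnerable: the $\s$-region containing $u$ may extend into $Z_v$ via the dropped edges and thus split into several smaller pieces in $\s'$, some of which could become new minimizers of $U_f$ with no direct correspondent in $\mathcal{T}(\s)$. Ruling out that such fragmented regions hurt $u$ by more than $|Z_v|$ requires carefully combining \Cref{lemma:g-increase} with the comparison step above and the observation that any $\s'$-region strictly inside $Z_v$ leaves $u$'s own component fully intact.
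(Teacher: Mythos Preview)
Your overall plan matches the paper's: split according to the location of the targeted regions in $\s$, and measure the change $U_f(T,\s)-U_f(T,\s')$ via the function $g$ of \Cref{lemma:g-increase}. However, two of the technical cores of the paper's proof are missing from your sketch, and without them the argument does not close.

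First, you never establish that the vulnerable region containing $u$ is \emph{not} targeted in $\s'$. You correctly flag this as ``the main obstacle'' in your last paragraph, but you do not resolve it; you only say it ``requires carefully combining'' earlier ingredients. In the paper this is Claim~2: it uses \Cref{cor:tech-cor2} to get $U_f(u,\s')\ge U_f(u,\s)$, together with the strict drop $U_f(T,\s')<U_f(T,\s)$ for some $T\in\mathcal T(\s)$ not containing $u$ (such a $T$ exists by the easy Claim~1 case). Without this, a fragment of $u$'s old region that still contains $u$ could be a minimizer in $\s'$, forcing $CC_u=0$ and breaking your bound. Your strictness argument also silently relies on this fact.

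Second, your treatment of the sub-case ``no targeted region of $\s$ lies in $Z_v$'' (the paper's Case~2) is incomplete. You assert that every $T^*\in\mathcal T(\s')$ not contained in $Z_v$ ``corresponds to an admissible region $T$ inherited from $\s$ with $CC_u(T^*,\s')\ge CC_u(T,\s)-|Z_v|$''. But a new minimizer $T^*\subsetneq Z_u$ need not have been in $\mathcal T(\s)$ at all, so there is no such $T$ to inherit from, and even if $T^*$ was already a vulnerable region in $\s$, all you get is $CC_u(T^*,\s')=CC_u(T^*,\s)-|Z_v|$, which is useless unless $CC_u(T^*,\s)\ge \E_{\mathcal T(\s)}[CC_u(\s)]$. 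The paper's device for this is to pick $W=\operatorname{argmax}_{T\in\mathcal T(\s)}CC_u(T,\s)$, observe that $W\subsetneq Z_u$, and then use the two optimality inequalities $U_f(W,\s)\le U_f(T^*,\s)$ and $U_f(T^*,\s')\le U_f(W,\s')$ together with the identity $U_f(\cdot,\s)-U_f(\cdot,\s')=g(|Z_v|,CC_u(\cdot,\s'))$ and the strict monotonicity of $g$ from \Cref{lemma:g-increase} to conclude $CC_u(T^*,\s')\ge CC_u(W,\s')=CC_u(W,\s)-|Z_v|\ge \E_{\mathcal T(\s)}[CC_u(\s)]-|Z_v|$. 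This ``argmax $W$'' comparison is the heart of Case~2 and is absent from your outline. (Relatedly, your claim that ``no region contained in $Z_v$ becomes a sole attractor'' in $\s'$ is both unjustified and unnecessary: such regions \emph{can} be targeted in $\s'$, and that is harmless since they leave $CC_u(\cdot,\s')=|Z|-|Z_v|$.)
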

\begin{figure}[h]
    \centering
    \includegraphics{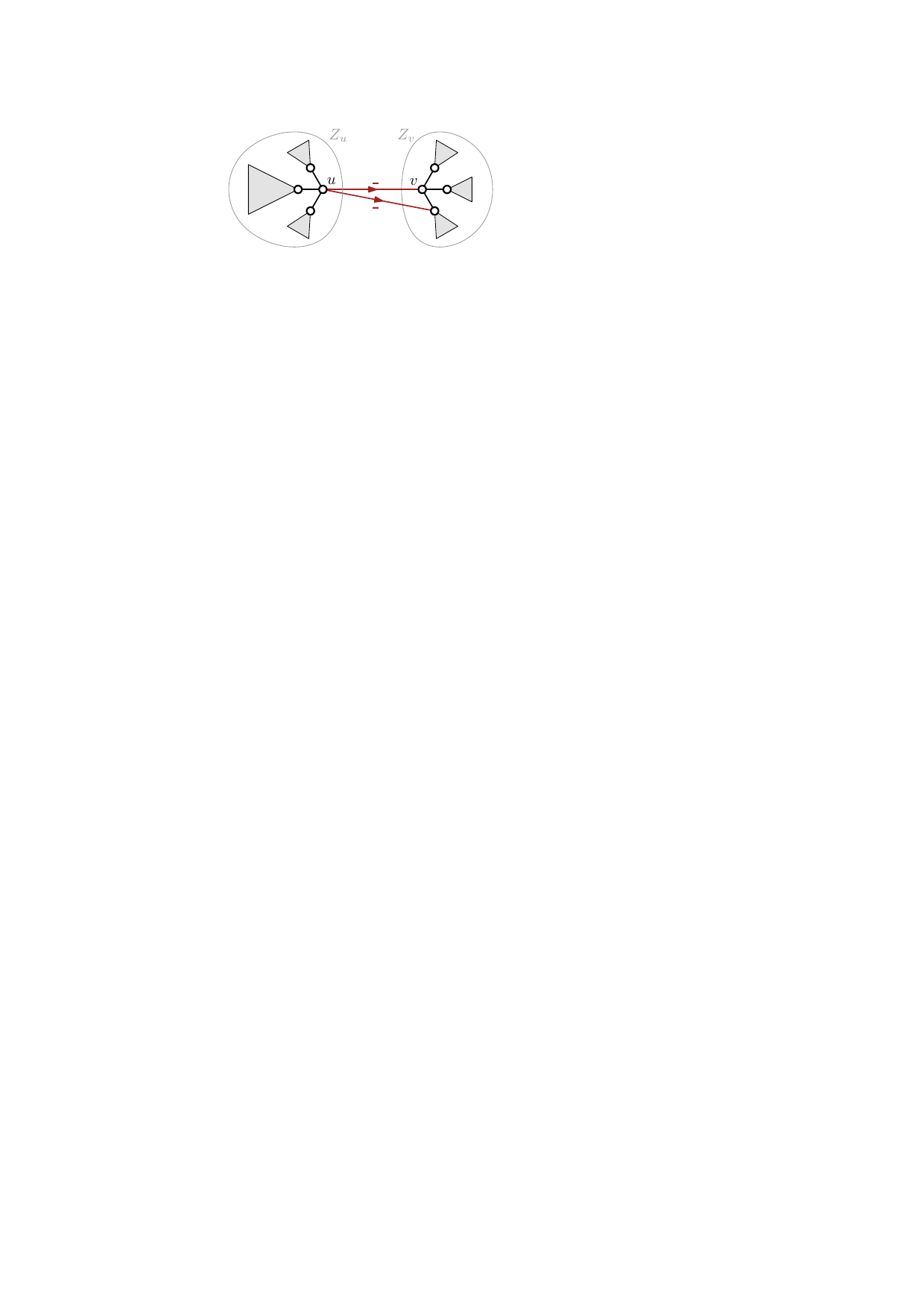}
    \caption{Illustration of the component $Z$ in \Cref{lemma:greed-drop}.}
    \label{fig:greed-drop}
\end{figure}

\begin{proof}
    We first rule out some simple cases, such that we can assume for the rest of the proof, that there is at least one targeted region in $G(\s)$ that does not include node~$u$.

    \begin{claim}\label{claim:greed-drop1}
        The statement holds if all targeted regions that exist in $G(\s)$ contain node~$u$.
    \end{claim}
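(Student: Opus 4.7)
The plan is to observe that under the hypothesis of this claim the inequality becomes essentially vacuous, so the argument reduces to a one-line accounting. First I would note that if every $T \in \mathcal{T}(\s)$ contains $u$, then no matter which targeted region the opponent samples, node $u$ is destroyed by the resulting attack. Consistent with the model's convention that infected agents contribute zero to their connectivity (since their utility is zero), this gives $CC_u(T,\s) = 0$ for every $T \in \mathcal{T}(\s)$, and hence
\[
\E_{\mathcal{T}(\s)}[CC_u(\s)] \;=\; \sum_{T \in \mathcal{T}(\s)} \mathbb{P}_{\mathcal{A}}[T,\s]\cdot CC_u(T,\s) \;=\; 0.
\]
The conclusion of \Cref{lemma:greed-drop} therefore collapses to $\E_{\mathcal{T}(\s')}[CC_u(\s')] \ge -|Z_v|$, which holds trivially since the left-hand side is non-negative (an expectation of component sizes, each of which is in $\{0,1,\dots,n\}$).

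For the "furthermore" clause, strict inequality is required precisely when $u$ is itself targeted, which is automatic here because $u$ lies in every targeted region of $G(\s)$. Since $v \in Z_v$ by construction we have $|Z_v| \ge 1$, so
\[
\E_{\mathcal{T}(\s)}[CC_u(\s)] - |Z_v| \;=\; -|Z_v| \;<\; 0 \;\le\; \E_{\mathcal{T}(\s')}[CC_u(\s')],
\]
and strictness follows immediately. I do not anticipate any obstacle: the purpose of this claim is merely to dispatch a degenerate configuration so that, in the remainder of the proof of \Cref{lemma:greed-drop}, one may safely assume the existence of at least one targeted region in $G(\s)$ that avoids $u$. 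That is the case where the substantive argument lives, and where I would expect to invoke \Cref{cor:tech-cor2} together with the \sqda{} monotonicity captured in \Cref{lemma:f-opp-increase} and \Cref{lemma:g-increase} to compare the attacker's behaviour on $\s$ and $\s'$.
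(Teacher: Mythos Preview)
Your argument coincides with the paper's for the case $\mathcal{T}(\s)\neq\emptyset$. The one gap is that the hypothesis ``all targeted regions contain $u$'' is also satisfied vacuously when $\mathcal{T}(\s)=\emptyset$ (i.e.\ every node is immunized), and the paper treats this subcase separately. There $u$ is \emph{not} targeted---so your assertion that ``$u$ being targeted is automatic here'' fails in this subcase---and the model's intended reading of connectivity when no attack occurs is the full component size, not an empty sum equal to $0$. The paper simply observes that with every node immunized in both $\s$ and $\s'$, dropping the edges to $Z_v$ costs $u$ exactly $|Z_v|$ in connectivity, so the non-strict inequality holds (with equality) and the ``furthermore'' clause does not apply. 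This is a completeness issue rather than a substantive error: once you split off $\mathcal{T}(\s)=\emptyset$ explicitly, the remainder of your write-up is exactly the paper's argument.
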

    \begin{claimproof}[\Cref{claim:greed-drop1}]
        This condition is only true if there are either no targeted regions in $G(\s)$ at all or the only targeted region is the vulnerable region containing node~$u$.

        If $\T(\s)$ is empty, we know that all nodes in $G(\s)$ are immunized. In this case, agent~$u$ looses exactly $|Z_v|$ of connectivity by selling all edges to $Z_v$, i.e., $\E_{\mathcal{T}(\s')}[CC_u(\s')] \ge \E_{\mathcal{T}(\s)}[CC_u(\s)] - |Z_v|$. In this case, node~$u$ is immunized and cannot be targeted.

        If, on the other hand, $\T(\s)$ is not empty, we know by assumption that the only targeted region in $G(\s)$ contains node~$u$. Then, it holds that $\E_{\mathcal{T}(\s)}[CC_u(\s)] = 0$ and $\E_{\mathcal{T}(\s')}[CC_u(\s')] \geq 0$. Thus, 
            $$\E_{\mathcal{T}(\s')}[CC_u(\s')] > \E_{\mathcal{T}(\s)}[CC_u(\s)] - |Z_v|$$ 
        clearly holds.
    \end{claimproof}

    By \Cref{claim:greed-drop1}, we can assume for the rest of this proof that there is at least one targeted region in $\s$ that does not contain node~$u$. Next, we prove that node~$u$ is not targeted in profile~$\s'$, which will help us in the remainder of the proof.

    \begin{claim}\label{claim:greed-drop2}
        The node~$u$ is not targeted in strategy profile $\s'$.
    \end{claim}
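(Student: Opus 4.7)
The plan is to locate another targeted region of $\s$ that beats $u$'s region after the edge deletions, so that the \sqda{} opponent strictly prefers that region over $u$ in $\s'$. I may assume $u$ is vulnerable (otherwise $u$ can never be targeted and the claim is immediate). By the assumption inherited from \Cref{claim:greed-drop1}, there is at least one targeted region in $\s$ that does not contain $u$. I want to pick such a region $T^{*}$ that additionally avoids $Z_v$: this is precisely where the lemma hypothesis enters. If every targeted region not containing $u$ were contained in $Z_v$, then the disjunctive hypothesis would force a targeted region outside of $Z$ to exist, and I would take that one instead. Since $u$ is a cut-vertex of $Z$, a vulnerable region disjoint from $\{u\}$ cannot cross between different remaining components of $Z \setminus \{u\}$, so this choice $T^{*}$ satisfies $T^{*} \cap (\{u\} \cup Z_v) = \emptyset$.

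Next, I would check that $T^{*}$ is still a vulnerable region in $\s'$: none of the edges sold from $\s$ to $\s'$ are incident to $T^{*}$, so the connected structure of the vulnerable subgraph around $T^{*}$ is unchanged. The heart of the argument is a comparison of $U_f(T^{*},\s')$ with $U_f(T^{*},\s)$. Let $C_u$ be the connected component containing $u$ in $G(\s) \setminus T^{*}$. Because $u$ is a cut-vertex of $Z$ and the edges $\{u\} \times Z_v$ survive in $G(\s)$, $C_u$ contains $\{u\} \cup Z_v$. In $G(\s') \setminus T^{*}$, these edges have been removed, and therefore $C_u$ splits into a component of size $|C_u|-|Z_v|$ (still containing $u$) and $Z_v$ itself; all other components of $G(\s)\setminus T^{*}$ are untouched. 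Hence
\[
U_f(T^{*}, \s') - U_f(T^{*}, \s) \;=\; f(|C_u|-|Z_v|) + f(|Z_v|) - f(|C_u|) \;=\; -\,g\bigl(|C_u|-|Z_v|,\,|Z_v|\bigr) \;<\; 0,
\]
where the strict inequality uses $|Z_v|\ge 1$ and $|C_u|-|Z_v|\ge 1$ (since $u$ lies in $C_u\setminus Z_v$) together with \Cref{lemma:g-increase}.

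Finally, I chain three inequalities: the strict inequality just derived, the fact that $T^{*}$ is targeted in $\s$ and so $U_f(T^{*}, \s) \le U_f(T_u, \s)$, and \Cref{cor:tech-cor2} applied to the vulnerable region $T_u$ of $u$, which gives $U_f(T_u', \s') \ge U_f(T_u, \s)$ where $T_u'$ is the region containing $u$ in $\s'$. Combining them yields $U_f(T^{*}, \s') < U_f(T_u', \s')$, so the \sqda{} opponent strictly prefers $T^{*}$ to $T_u'$ in $\s'$ and hence does not target $u$.

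The main obstacle is guaranteeing that $T^{*}$ is disjoint from $Z_v$: if $T^{*}$ intersected $Z_v$, the splitting argument would not cleanly separate off a $Z_v$-sized component in $G(\s') \setminus T^{*}$ and the $-g(\cdot,\cdot)$ estimate would collapse. The seemingly odd disjunctive hypothesis of the lemma (``targeted regions outside $Z$ or no targeted regions in $Z_v$'') is precisely the technical condition needed to secure such a $T^{*}$, and the rest of the proof is essentially a convexity bookkeeping exercise.
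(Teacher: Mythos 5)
Your proof is correct and follows the same skeleton as the paper's: establish the chain $U_f(T^{*},\s') < U_f(T^{*},\s) \le U_f(T_u,\s) \le U_f(T_u',\s')$, with the middle inequality from minimality of targeted regions and the last from \Cref{cor:tech-cor2}. The only divergence is in the first, strict inequality. The paper applies \Cref{lemma:f-opp-increase} to an \emph{arbitrary} targeted region $T \not\ni u$: dropping the edges weakly shrinks every component of $G(\s)\setminus T$, and since $u$ is vulnerable no neighbour of $u$ can lie in $T$, so $u$ and $v$ survive and get separated, giving a strict drop in $U_f(T,\cdot)$. You instead insist on a $T^{*}$ disjoint from $Z_v$ so that the change is exactly $-g(|C_u|-|Z_v|,|Z_v|)<0$ via \Cref{lemma:g-increase}; your argument that such a $T^{*}$ exists (using the lemma's disjunctive hypothesis, plus the fact that a connected vulnerable region avoiding the cut-vertex $u$ cannot straddle $Z_v$ and its complement) is sound, and the exact computation is a valid, more quantitative substitute. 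What you gain is an explicit formula; what you lose is generality and economy — the disjointness from $Z_v$ is not actually needed for this claim, and the disjunctive hypothesis of \Cref{lemma:greed-drop} is really consumed later in Cases 1 and 2 of the main proof rather than here, so attributing it to this step slightly mislocates its role.
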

    \begin{claimproof}[\Cref{claim:greed-drop2}]
        From the previous considerations, we know that there is at least one targeted region $T \in \mathcal{T}(\s)$ that does not contain node~$u$. The region $T$ is also a vulnerable region in strategy profile~$s'$, since the only difference between $\s$ and $\s'$ are edges of agent~$u$ which have no connection to $T$ (by definition of $T$). To show that node~$u$ is not targeted in $\s'$, we show that
            $$U_f(T,\s') \overset{(1)}{<} U_f(T,\s) \overset{(2)}{\leq}  U_f(u,\s) \overset{(3)}{\leq} U_f(u,\s').$$
        Since the \sqda{} opponent $\mathcal{A}$ only attacks nodes / vulnerable regions that minimize $U_f$, node~$u$ is then not targeted in $\s'$.

        For Inequality (1), we use \Cref{lemma:f-opp-increase}. Because removing the edges only lowers the size of the connected components, and some of these became strictly lower as nodes~$u$ and $v$ got disconnected. This gives the desired result.
        Inequality (2) holds due to the fact that $T$ is a targeted region in $\s$ and thus minimizes $U_f$.
        Inequality (3) is a direct consequence of \Cref{cor:tech-cor2}, as the edges sold were all incident to the agent $u$ that was in the considered vulnerable region. 
    \end{claimproof}

    In the following, let $Z_u \coloneqq Z \setminus Z_v$ be the connected component $Z$ of $G(\s)$ without $Z_v$. To show $\E_{\mathcal{T}(\s')}[CC_u(\s')] \ge \E_{\mathcal{T}(\s)}[CC_u(\s)] - |Z_v|$, we will consider two different cases: In profile $\s$, either there is a targeted region outside of $Z$ (Case 1) or all targeted regions are inside $Z$ (Case 2).\\[-2pt]

    \begin{claimproof}[Case 1]
        If there is at least one targeted region~$T$ (w.r.t. $\s$) outside of $Z$, then we show that all targeted regions (w.r.t. $\s'$) are outside of $Z_u$. Once this is shown, we know that every attack will not affect the connected component of node~$u$ in $G(\s')$. Therefore, it holds that $$\E_{\mathcal{T}(\s')}[CC_u(\s')] = |Z_u| = |Z| - |Z_v|.$$ Additionally, we know that in strategy profile~$\s$, the connectivity of agent~$u$ is at most the size of its connected component, i.e., $\E_{\mathcal{T}(\s)}[CC_u(\s)] \le |Z|$. With this, we get $\E_{\mathcal{T}(\s')}[CC_u(\s')] \ge \E_{\mathcal{T}(\s)}[CC_u(\s)] - |Z_v|$.

        It remains to show that indeed all targeted regions in $G(\s')$ are outside of $Z_u$, if there is a targeted region $T$ in $G(\s)$ outside of $Z_u$. For this, let $T' \subseteq Z_u$ be a targeted region in $Z_u$ with respect to profile~$\s'$. By Claim 1, we know that $u$ is not targeted in $G(\s')$ and thus $u \notin T'$. We will show that $U_f(T,\s') < U_f(T', \s')$. Since removing $T$ does not affect $Z$, we know that
        \begin{align}
            U_f(T,\s) = U_f(T,\s') + g(|Z_u|,|Z_v|),\label{eq:greed-drop:case1}
        \end{align}
        where $g$ is the function from \Cref{lemma:g-increase}. To start the chain of inequalities, we rephrase this equation to be
            $$U_f(T,\s') = U_f(T,\s) - g(|Z_u|,|Z_v|)$$
        Because $T$ is a targeted region in strategy profile~$\s$, we get that $U_f(T,\s) \leq U_f(T',\s)$. Additionally, from \Cref{lemma:g-increase}, we know that $g(x,y)$ is strictly increasing for fixed $y > 0$. As at least $v \in Z_v$, we can apply this statement to get that 
            $$g(|Z_u|,|Z_v|) > g(|Z_u| - |T'|,|Z_v|).$$
        As a result, it holds that 
        \begin{align*}
            U_f(T,\s') =\;& U_f(T,\s) - g(|Z_u|,|Z_v|).\\
            <\;& U_f(T',\s) - g(|Z_u| - |T'|,|Z_v|).\\
            =\;& U_f(T',\s'),
        \end{align*}
        where the last equality holds due to similar considerations as in \Cref{eq:greed-drop:case1}.
        
        Since the \sqda{} opponent only targets nodes that minimize $U_f$, $T'$ (and thus all vulnerable regions in $Z_u$ w.r.t $s'$) are not targeted.
    \end{claimproof}
    
    \begin{claimproof}[Case 2]
        Consider that all targeted regions are in connected component~$Z$. 
Consider $$W = \mathrm{argmax}_{T \in \T(\s)} CC_u(T, \s).$$ Notice that $W$ is in $Z$, as assumed in Case 2 for every targeted region. Additionally, $W$ is not included in $Z_v$, as assumed in the statement of the lemma, and it does not contain node-$u$. Since we are not in the setting handled by Claim 1, at least one targeted region does not contain node-$u$ and because $W$ maximizes agent $u$'s connectivity, also $W$ does not contain node~$u$. Therefore, we have $W \subsetneq Z_u$.
Hence, it holds that $CC_u(W, \s) \ge E_{\T}[CC_u(\s)]$. Consider $T' \in \T(\s')$. Because $u$ is not targeted, one of the following conditions must be met:
\begin{enumerate}[leftmargin=1cm]
    \item[(a)] $T' \cap Z_u = \emptyset$,
    \item[(b)] $T' \subsetneq Z_u$.
\end{enumerate}

\noindent
In case (a), we can state that $$CC_{u}(T',\s') = |Z| - |Z_v| \geq \E_{\T}[CC_u(\s)] - |Z_v|.$$ And if node~$u$ was targeted then $\E_{\T}[CC_u(\s)] < |Z|$ thus $CC_{u}(T',\s') > \E_{\T}[CC_u(\s)] - |Z_v|$.

For case (b), we show the following chain of inequalities:
\begin{align*}
    CC_u(T',\s') \overset{(i)}{\geq}\;& CC_u(W,\s')\\
    \overset{(ii)}{\geq}\;& CC_u(W,\s) - |Z_v|\\
    \overset{(iii)}{\geq}\;& \E_{\T(\s)}[CC_u(\s)] - |Z_v|,
\end{align*}
for every $T' \in \T(\s')$. As this establishes a lower bound for $CC_u(T',\s')$ over all targeted regions in $G(\s')$, also the expected values of $CC_u(T',\s')$, i.e., the connectivity of $u$ w.r.t profile~$\s'$, is bounded below by $\E_{\T(\s)}[CC_u(\s)] - |Z_v|$. Additionally, if node~$u$ is targeted in profile~$\s$, Inequality (iii) becomes strict, which concludes the proof. 

Now we show the above inequalities. Inequality $(ii)$ follows, since agent~$u$ removes all edges to $Z_v$ in profile $\s'$.

For Inequality ($i$), since $W \in \T$, we know that the inequality $U_f(W, \s) \leq U_f(T',\s)$ holds and similarly, because $T' \in \T(\s')$ also $U_f(W, \s') \geq U_f(T',\s')$ holds. This gives us the result 
\begin{equation}U_f(W,\s)-U_f(W,\s') \le U_f(T',\s) - U_f(T',\s').\label{eq:Uf}\end{equation}
Now we can compute how these values change when changing the strategy profile from $\s$ to $\s'$ and keeping the targeted region fixed. Note that because $W \subsetneq Z_u$ and $T' \subsetneq Z_u$, going from strategy profile $\s$ to $\s'$ disconnects two components, which change the value $U_f$ by some output of the function~$g$. This gives us: $$U_f(W, \s) - U_f(W, \s') = g(CC_v(W, \s'),CC_u(W, \s'))$$ and that $$U_f(T', \s) - U_f(T', \s') = g(CC_v(T', \s'), CC_u(T', \s')).$$ But it can be simplified, given that $CC_v(W, \s') = CC_v(T', \s') = |Z_v|$, since $W \subseteq Z_u$. Inequality~(\ref{eq:Uf}) and using, once again, that function~$g$ is strictly increasing (by \Cref{lemma:g-increase}) and, by the contrapositive of this, we conclude that $CC_u(W, \s') \leq CC_u(T', \s')$. 

Inequality $(iii)$ holds, since $W$  a maximizer of $CC_u$, we know that $\E_{\T}[CC_u(\s)] \leq CC_u(W,\s)$.
\end{claimproof}

This finishes the proof.
\end{proof}
\noindent Note that the condition of \Cref{lemma:greed-drop} about the targeted regions is necessary, as \Cref{example} shows, with node $i$ acting as node $u$ and node $j$ acting as node $v$.  

The next corollary is one of the main tools for our other results. It follows from \Cref{lemma:greed-drop} (see Supp. Material).

\corSizeAfterEdge*

\begin{proof}
    This is a direct application of \Cref{lemma:greed-drop} when looking at the best response of agent~$u$. The fact that it does not strictly prefer to drop the $k$ edges to $Z$ while it loses at most $|Z|$ (respectively, strictly less if $u$ is targeted) in connectivity means that the cost is lower (respectively, strictly lower) than the size of $Z$.
\end{proof}

\lemmaTargetsSingleton*
\begin{proof}
    Assume towards a contradiction that there exists a connected component $Z$ with at least one immunized node and a targeted region $T$ such that $|T| > 1$. Let agent~$x$ be a node of $T$ that shares an edge with an immunized node $y$. \Cref{lemma:1} ensures that $T$ is a tree. We can root $T$ in node~$x$ and take any leaf $u$ of this tree. Because $|T| > 1$, there is an edge between node~$u$ and some other node~$v$ in $T$. We claim the agent among $u$ and $v$ that bought the edge strictly prefers to swap it to some immunized node in $Z$. 

    First, region~$T$ cannot be the only targeted region as agents~$u$ or $v$ would have a negative utility, and thus would prefer to drop all edges. Therefore, there is at least one other targeted region~$T'$. Also, because $T$ is a tree, it would split into two subtrees~$T_u$ and $T_v$ if the edge was dropped. This allows us to use \Cref{cor:size-after-edge} on node~$v$ to conclude that if it bought the edge, then node~$u$ shares edges with immunized nodes. (If it does not, then no targeted regions are included in $\{u\}$ and $C_E>1$.) And if agent~$u$ bought the edge, there is, by construction, an edge between immunized nodes and $T_v$ (the edge $\{x,y\}$).
    
    Therefore, in both cases, there is an agent that bought an edge, such that the vulnerable component on the other side of the edge is linked to an immunized node. Without loss of generality, we assume that agent~$u$ is the buyer (but the following deviation would also hold for agent~$v$).
    
    Now consider the strategy change of agent~$u$ swapping its edge to $v$ for an edge to $y$ instead. See \Cref{fig:singleton_targeted}. 
    \begin{figure}[h]
    \centering
    \includegraphics[width=0.38\linewidth]{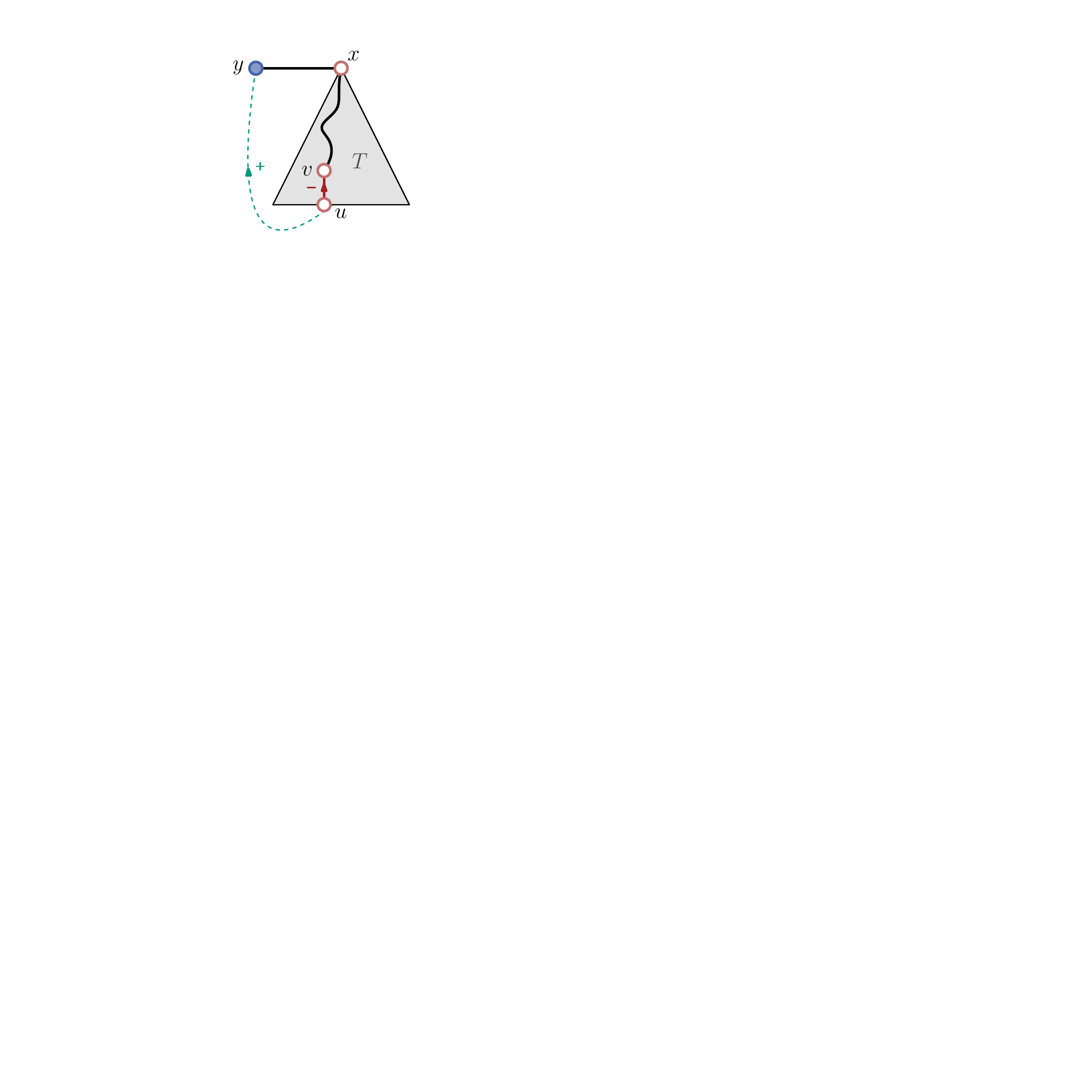}
    \caption{Illustration of \Cref{lemma:targetsingletons}. The tree $T$ rooted in $x$ with leaf $u$. Red edges (supplemented with \enquote{-}) and green edges (supplemented with \enquote{+}) denote which edges are sold and bought, respectively, in profile $\s'$ compared to profile $\s$.}
    \label{fig:singleton_targeted}
\end{figure}
    In that case, the value of $U_f$ when targeting $T_u$ would be strictly higher, as $T_v$ is not destroyed anymore, and the same argument applies for targeting $T_v$. However, when targeting any other vulnerable region, the value of $U_f$ does not change. Therefore, the new targeted regions are the previous ones except for $T$.
    Hence, the new utility of agent~$u$ is the previous expected utility conditioned on the fact that region~$T$ is not destroyed, which is a strict improvement.
\end{proof}

\lemmaConditionCutTargeted*

\begin{proof}[Proof of i)]
    Let $\s$ be a Nash equilibrium and $Z$ be a connected component $G(\s)$ with at least one immunized node and a targeted cut-vertex. First, we notice that every targeted node in $Z$ must be a cut-vertex, because of \Cref{cor:sqd-cut-vertices} and the fact that all targeted regions are singletons (see \Cref{lemma:targetsingletons}). 
    
    Let $y$ be a targeted cut-vertex such that removing $y$ creates a new connected component $Z_x$ of minimum size and some other component $Z_w$. Let $x$ be a node in component~$Z_x$ such that node~$x$ and node~$y$ share an edge, and let $w \in Z_w$ be an immunized node that shares an edge with $y$. This always exists because node~$y$ is assumed to be a cut-vertex and because targeted regions are singletons. See \Cref{fig:targeted_cut}. 
    \begin{figure}[h]
    \centering
    \includegraphics[width=\linewidth]{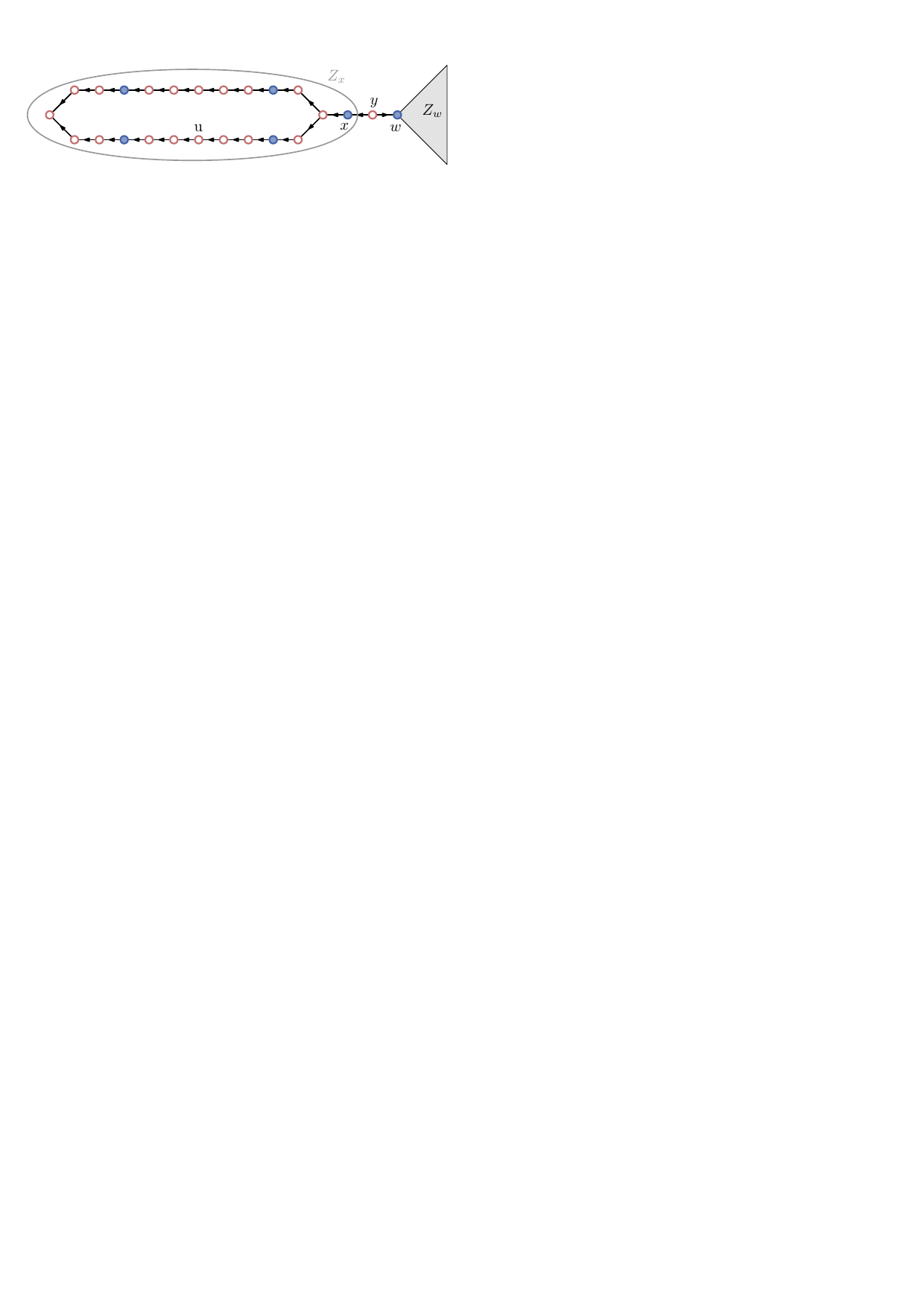}
    \caption{Illustration of the proof of \Cref{lemma:condition-cut-targeted}. The studied component $Z$. The considered strategy changes are either $x$ or $u$ swapping an edge for one to node~$w$ instead.}
    \label{fig:targeted_cut}
\end{figure}

    We will first show that the size of $Z_x$ is strictly greater than $C_E$, by using \Cref{cor:size-after-edge}. However, we first have to prove that we respect the assumptions of the corollary.
    \begin{claim}\label{claim:condition-cut-targeted:1}
        Every edge between a node in $Z_x$ and node~$y$ is bought by agent~$y$.
    \end{claim}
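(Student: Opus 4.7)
The plan is to proceed by contradiction: assume some edge $\{u, y\}$ with $u \in Z_x$ is bought by agent $u$, and exhibit a strictly improving deviation. The natural candidate is the \emph{swap} in which agent $u$ sells the edge $\{u, y\}$ and purchases $\{u, w\}$ instead, yielding a new profile $\s'$. Since the number of bought edges and the immunization decision of $u$ are unchanged, this deviation is cost-neutral, so it suffices to show that $u$'s expected connectivity strictly increases. The structural intuition is simple: $y$ is a \emph{vulnerable} cut-vertex that is targeted, while $w$ is immunized and on the far side of $y$, so routing $u$ through $w$ makes $u$ robust to the attack on $y$.

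Concretely, I would first observe that in $\s'$ the network $G(\s')[V \setminus \{y\}]$ is connected along $Z$: the subcomponent $Z_x$ stays internally connected (by definition of $Z_x$ as a component of $Z \setminus \{y\}$), and the new edge $\{u,w\}$ attaches it to $Z_w$ via the immunized $w$. Hence, when the opponent attacks $\{y\}$, the component containing $u$ has size $|Z_x| + |Z_w|$ in $\s'$ instead of $|Z_x|$ in $\s$, a gain of $|Z_w| \ge 1$. For any other target $T$ in $Z$, I would argue that the swap can only help $u$: removing $\{u, y\}$ can disconnect $u$ from $y$'s side only when that side is still reachable via $w$, and adding $\{u,w\}$ can only enlarge $u$'s post-attack component. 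For any target $T$ disjoint from $Z$, the component of $u$ is all of $Z$ in both profiles and is unchanged.

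The delicate step is the opponent's retargeting. I would show that $U_f(\{y\}, \s') > U_f(\{y\}, \s)$ using super-additivity of $f$: the post-attack components outside $Z$ are identical, while inside $Z$ the contribution changes from $f(|Z_x|) + f(|Z_w|)$ (two pieces in $\s$) to $f(|Z_x| + |Z_w|)$ (one piece in $\s'$, courtesy of $\{u,w\}$), and strict convexity of $f$ with $f(0)=0$ gives $f(a+b) - f(b) > f(a) - f(0) = f(a)$ for $a, b > 0$. Hence $\{y\}$ is strictly less attractive to the \sqda{} opponent in $\s'$. Combined with \Cref{cor:tech-cor2} applied to other vulnerable regions of $Z$ (whose $U_f$ values only go up when incident edges are removed), this forces the opponent in $\s'$ to target only regions $T'$ for which $CC_u(T', \s')$ is at least the connectivity $u$ enjoyed under its old worst case in $\s$, with a strict gain on any event that corresponded to attacking $\{y\}$.

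The main obstacle is exactly this retargeting analysis: because the \sqda{} opponent picks a minimizer of $U_f$, one cannot simply compare connectivities target-by-target, and one must rule out the appearance of a \emph{new} target in $\s'$ that hurts $u$ more than $\{y\}$ did in $\s$. The plan here is to use \Cref{lemma:f-opp-increase} and \Cref{cor:tech-cor2} to bound how $U_f$ of any competing target can shift under a purely local edge swap, and to combine these with the strict super-additivity gap above to conclude that the expected connectivity of $u$ strictly increases, contradicting the Nash property of $\s$ and proving the claim.
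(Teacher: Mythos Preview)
Your deviation is exactly the paper's: swap $\{u,y\}$ for $\{u,w\}$ and argue that expected connectivity strictly improves at no extra cost. Two points need sharpening, and the second is a genuine gap.

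First, you should note explicitly that $u$ is \emph{immunized}: since $\{y\}$ is a targeted singleton (\Cref{lemma:targetsingletons}), every neighbor of $y$ is immunized. This is what guarantees $u$ itself cannot become a target after the swap and simplifies the retargeting analysis.

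Second, your retargeting plan is where the argument actually breaks. Your invocation of \Cref{cor:tech-cor2} for ``other vulnerable regions of $Z$'' is misplaced: that corollary concerns edges incident to the vulnerable region in question, but the swapped edges are incident only to $y$ and to the immunized node $w$. The correct observation (which the paper uses) is that for any target $T$ not containing $y$, the surviving edge $\{y,w\}$ keeps $y$ and $w$ in the same post-attack component, so replacing $\{u,y\}$ by $\{u,w\}$ leaves every component size---and hence $U_f(T,\cdot)$---\emph{unchanged}. So $\{y\}$ becomes strictly less attractive while all other candidates are exactly as attractive as before.

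The substantive gap is the case where $\{y\}$ was the \emph{only} target in $\s$ and new targets appear in $\s'$. You gesture at \Cref{lemma:f-opp-increase} and a ``strict super-additivity gap'', but these do not by themselves locate the new targets or show that attacking them leaves $u$ no worse off than attacking $y$ did in $\s$. The paper closes this with the \emph{minimality} of $Z_x$ (recall $y$ was chosen so that $Z_x$ is a smallest piece left after removing any targeted cut-vertex). If a new target $T'$ lies in $Z_x$, then in $\s'$ the post-attack component of $u$ contains $Z_w \cup \{y\}$, hence has size at least $|Z_w| + 1 > |Z_x| = CC_u(\{y\},\s)$ by minimality; if $T'$ lies elsewhere in $Z$ or outside $Z$, a direct comparison works. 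Your proposal never invokes minimality, and without it the case of a new target inside $Z_x$ cannot be handled.
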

    \begin{claimproof}[\Cref{claim:condition-cut-targeted:1}]
    Because we have not made any assumptions on node~$x$ except that it is a node in $Z_x$ that shares an edge with agent~$y$, it is enough to prove that this edge was bought by agent~$y$.

    By contradiction, assume that agent~$x$ bought the edge. Look at the strategy change of swapping it for edge~$\{x,w\}$ instead. Because agent~$x$ is immunized, it cannot become targeted, and the sizes of the connected components do not change when targeting any region that does not contain node~$y$, and it increases strictly when targeting node~$y$ (which allows us to use \Cref{lemma:f-opp-increase}). 
    This allows us to investigate how targeted regions change by this deviation; two cases for this change will have an easy conclusion and the last one will be split into three sub-cases. 
    
    If there were targeted regions other than $\{y\}$, they stay targeted and node~$y$ is not targeted anymore, which strictly increases the utility of agent~$x$.
    
    For the second case, if $\{y\}$ was the only targeted region and it stays the only targeted region: this is also strictly improving for agent~$x$.
    
    Finally, if $\{y\}$ was the only targeted region but new targeted regions are created by the deviation, there are three possibilities for their position:
    \begin{enumerate}
        \item If a new targeted region is outside of $Z$, then the utility of agent~$x$ strictly increases when this region is attacked.
        \item If a new targeted region is in $Z$ but not in $Z_x$, then agent~$x$'s utility also increases because its corresponding connected component contains at least the previous one and the nodes~$\{y,w\}$ when the new region is attacked.
        \item If a new targeted region is in $Z_x$, then because of the minimality assumption, agent~$x$'s utility is strictly higher when this region is attacked, because its new connected component contains $Z_w$ and node~$y$, and by minimality assumption, $|Z_x| \le |Z_w|$.
    \end{enumerate}
    So this strategy change is always strictly increasing for agent~$x$, which contradicts that strategy profile~$\s$ is a Nash equilibrium. Therefore, we can conclude that the edge $\{y,x\}$ was bought by agent~$y$.
    \end{claimproof}
    
    Because $y$ bought an edge, it must not be the only targeted region (or it would prefer to drop the edge) and by minimality assumption, no targeted regions are in $Z_x$. We are, therefore, in the situation of \Cref{cor:size-after-edge}, where the cut-vertex is targeted. That means that $Z_x$ contains strictly more than $C_E$ vertices. And because of the minimality assumption, we have proven the first assertion.
\end{proof}

\begin{proof}[Proof for ii)]
    We use the same notation and the same strategy profile $\s$ as in the proof for $i)$, and also the derived results. With this, we now prove that at least two of the targeted regions are outside of component~$Z$. To do so, notice that $|Z_x| > C_E > 1$ and therefore there exists an edge between two vertices in $Z_x$. Let node~$u$ be the agent that bought this edge, and consider the same swap as before, dropping it for the edge~$\{u,w\}$ instead. This yields the new strategy profile~$\s'$, which will allow us to prove that there is at least one targeted region outside of component~$Z$. Then, by considering another deviation, we will show that two or more such targeted regions outside of $Z$ exist. 
    
    Like in the proof of i), agent~$u$ swapping its edge to edge~$\{u,w\}$ dose not change the value of $U_f$ when targeting any region outside of $Z_x$ and strictly increases $U_f$ when targeting $y$. Therefore, node~$y$ cannot be targeted after the deviation, as we proved in part i) that it was not the only targeted region. However, new targeted regions can appear in $Z_x$ when doing so. 
    
    Let $T' \in \T(\s')$ be one of those newly created vulnerable regions. If node~$u$ is in $T'$, then the value of $U_f$ is greater or equal than when targeting node~$u$ before the deviation, so $T'$ cannot be targeted. Therefore, we know that no newly created targeted region includes node~$u$. To show that there is at least one targeted region outside of component~$Z$, we show that not all targeted regions are inside $Z$. Indeed, assuming they were all in $Z$, we can prove that the deviation of agent~$u$ swapping its edge to node~$w$ is still strictly improving for agent~$u$. 
    
    Let $T \in \T(\s)$ be any previously targeted region. By assumption, region~$T$ was a cut-vertex in component~$Z$ and targeting it splits $Z$ in at least two connected components, both of them being larger than $Z_x$, by minimality. Hence, both of them are smaller than $|Z| - 1 - |Z_x|$. Therefore, it holds that $$\E_{\T}[CC_u(\s)] \le |Z| - 1 - |Z_x|.$$ But after the edge-swap, when targeting any region~$T' \subseteq Z_x$ that does not contain node~$u$, we have $$CC_{u}(T', \s') > |Z| - |Z_x| > \E_{\T}[CC_u(\s)].$$ Thus, the deviation strictly increases the utility of agent~$u$. Indeed, the new regions that were possibly added all lead to more connectivity for agent~$u$, so they improve its utility. This means that because node~$y$ is not targeted anymore, the utility of agent~$u$ is strictly higher in $\s'$ than in $\s$. 

    Therefore, there is at least one targeted region $T_o \in \T(\s)$ that lies outside of $Z$. For the sake of contradiction, assume it was the only one. Then, the utility of agent~$y$ in $\s$ is
    \begin{align*}
    \E_{\T(\s)}[CC_y(\s)] &= \frac{1}{|\T(\s)|} \bigg[|Z| + \sum_{T \in \T(\s) \backslash \{T_o, y\}} CC_y(T,\s) \bigg]  \\ 
    &\le \frac{1}{|\T(\s)|} \bigg[|Z| + (|\T(\s)| - 2) (|Z| - |Z_x|) \bigg] \\
    &\le \frac{|\T(\s)| - 1}{|\T(\s)|}(|Z| - |Z_x|) + \frac{1}{\T(\s)}|Z_x| \\
    &<|Z| - |Z_x|.
    \end{align*}
    The last inequality comes once again from the minimality of $|Z_x|$, as $|Z|-|Z_x| \ge |Z_w| + 1 \ge |Z_x|+1$. However, after agent~$y$ removes the edge $\{y,x\}$, the only targeted region becomes $T_o$ (indeed, $U_f$ decreases strictly less when targeting regions in $Z$ after the removal of $\{y,x\}$ compared to before, while the decrease when targeting $T_o$ does not change).
    
    Therefore, the new utility is $$\E_{\T(\s')}[CC_y(\s')] = |Z| - |Z_x|.$$
    Hence, this deviation would be strictly improving, which is a contradiction. Thus, we have shown that there are at least two targeted regions outside of component $Z$.
\end{proof}

\lemmaNoIsolatedCut*
\begin{proof}
    Assume towards a contradiction, that network~$G(\s)$ contains an isolated node~$x$ as well as a connected component~$Z$ with an immunized node and a targeted cut-vertex $y$. Let $Z_1, Z_2$ be two remaining connected components after removing node~$y$. 
    
    Now, consider the deviation of agent~$x$ buying an edge to an immunized node in $Z_1$ that shares an edge with $y$. Let $\s'$ be the resulting strategy profile. We will show that agent~$u$ prefers profile~$\s'$ over profile $\s$, which is a contradiction to $\s$ being an equilibrium.

    Note that agent~$x$ is not targeted in profile~$\s'$ because, by \Cref{cor:sqd-cut-vertices}, the \sqda{} opponent $\mathcal{A}$ still prefers to attack node~$y$.

    Since agent~$x$ is not targeted in profile~$\s'$, even after the attack its connected component contains at least either $Z_1$ or $Z_2$. We know that $|Z_1| > C_E$ and $|Z_2| > C_E$ hold by \Cref{lemma:condition-cut-targeted}, because profile $\s$ is an equilibrium. In profile~$\s'$, agent~$x$'s connected component post-attack thus has size strictly greater than $C_E+1$, which yields a strictly better utility for agent~$x$ than staying isolated.
\end{proof}
\noindent
Next, we bound the sizes of connected components.
\lemmaMinimalSize*
\begin{proof}
    Let $Z$ be a connected component in $G(\s)$ with a size larger than $1$. We will show that the size of $Z$ is at least $C_E + 1$. For this, let $u \in Z$ be an agent that bought an edge, and consider the strategy change of agent~$u$ dropping every bought edge and immunization if it was bought. Let $\s'$ be the corresponding strategy profile. We will compare the utility of agent~$u$ in both profiles $\s$ and $\s'$ and use the fact that $\s$ is a Nash equilibrium to deduce an inequality that helps us to prove $|Z| \geq C_E+1$. For this, we make a case distinction on whether agent~$u$ is vulnerable in strategy profile $\s$.

    If $u$ is vulnerable in profile~$\s$, then, by using \Cref{cor:tech-cor2}, the probability of being destroyed does not increase against an \sqda{} opponent $\mathcal{A}$ after the deviation. Let $p$ and $p^\prime$ denote the probability of agent~$u$ being destroyed in $\s$ and $\s'$, respectively. We have shown that $p^\prime \leq p$. The utility of agent~$u$ in profile~$\s$ is lower than $(1-p)|Z| - C_E$, while agent~$u$'s utility in profile~$\s'$ is higher than $1-p' \ge 1-p$. As profile~$\s$ is a Nash equilibrium, agent~$u$ weakly prefers its strategy in $\s$ over its strategy in $\s'$, i.e., it holds that $(1-p) (|Z|-1) \ge C_E$. Finally, solving for $|Z|$ and using $p \leq 1$ yields $|Z| \geq C_E + 1$.
        
    If $u$ is immunized in profile~$\s$, its utility in $\s$ is at most $$|Z|-C_E-C_I < |Z|-C_E-1$$ and its utility in $\s'$ is at least $0$. Therefore the equilibrium property gives us $|Z| \ge C_E+1$. 
\end{proof}

\noindent 
The next lemma deals with the implications of the existence of a targeted region that is a cut-vertex inside a component with an immunized nodes.

\singleComponentCut*
\begin{proof}
    Let $Z_1$ and $Z_2$ be the two connected components, such that both contain an immunized node and a targeted region. Assume for contradiction, that (w.l.o.g) component~$Z_1$ contains a targeted cut-vertex. If $Z_2$ also contains a targeted cut-vertex, without loss of generality we can assume $|Z_1| \le |Z_2|$. Notice that even if only $Z_1$ has targeted cut-vertices, this inequality also holds. Indeed, if $v$ is a targeted cut-vertex in $Z_1$ and $w$ a targeted node in $Z_2$, it holds that $U_f(v,\s) = U_f(w,\s)$ or one of them would not be targeted. However, it holds that $$U_f(\emptyset, \s)-U_f(w,\s) = f'(|Z_2|-1),$$
    and that
    $$U_f(\emptyset,\s) - U_f(v,\s) = f(|Z_1|) - \sum_{i}f(|W_i|),$$
    where $(W_i)_i$ is the vector of the remaining connected components of $Z_1$ after removing node~$v$. Because $v$ is a cut-vertex, there are at least two such components which allows us to use the strict convexity of $f$ to get
    \begin{align*}
        U_f(\emptyset,\s) - U_f(v,\s) >\;& f(|Z_1|) - f\left(\sum_i |W_i|\right)\\
        =\;& f'(|Z_1|-1).
    \end{align*}
    Hence, we have $f'(|Z_1|-1) < f'(|Z_2|-1)$ which gives us $|Z_1| < |Z_2|$ by using the convexity of $f$.
    
    We will show that every immunized node $x \in Z_1$ has a strictly improving strategy change. For this, consider the deviation where agent~$x$ buys an edge to an immunized node $y \in Z_2$ that shares an edge with a targeted node $u \in Z_2$. Let $\s'$ be the corresponding strategy profile.
    
    In profile $\s$, the connectivity $\E_{\T(\s)}[CC_x(\s)]$ of agent~$x$ is strictly lower than $|Z_1|$ because we assumed that there is a targeted region inside $Z_1$. To show that agent~$x$ prefers its strategy in $\s'$ over its strategy in $\s$, we have to prove that agent~$x$'s gain in connectivity is strictly larger than the edge cost $C_E$, i.e.,
        $$\E_{\T(\s')}[CC_x(\s')] - \E_{\T(\s)}[CC_x(\s)] > C_E.$$
    For this, we show that $$CC_x(T,\s') \geq |Z_1| + C_E$$ for all targeted regions. 
    For targeted regions outside of $Z_1$ and $Z_2$ this holds, because $|Z_2| \ge C_E+1$.
    
    If a targeted region $T$ is in $Z_1$, then, when removing it, the component of node~$x$ contains $Z_2$ and $CC_x(T,\s)$ whose sizes are greater than $|Z_1|$ and $C_E$, respectively. This holds, since by \Cref{lemma:condition-cut-targeted} we have that $CC_x(T,\s)>C_E$ and we have already shown that $|Z_1| \leq |Z_2|$.

    If region~$T$ is in $Z_2$ instead, the component of node~$x$ contains $Z_1$ and $CC_y(T,\s)$. The size of this union is greater or equal than $|Z_1| + C_E$. Indeed, if node~$u$ is not a cut-vertex, then no vulnerable cut vertices exist in $Z_2$ and it holds from \Cref{lemma:minimal_size}, else it is implied by \Cref{lemma:condition-cut-targeted}. 
\end{proof}
\noindent 
Later will be bound the number of agents in some cases of Nash equilibria. The following statement is useful for this.

\lemmaBoundedNumberComponentsCut*
\begin{proof}
    Let $Z_1,\dots,Z_r$ be the connected components in equlibrium network~$G(\s)$ with $|Z_1| \geq \dots \geq |Z_r|$. If $r\leq 2$, then we are done. Thus, assume that $r>2$.
    Now, consider any node~$x$ in the connected component~$Z_r$ of smallest size and consider the deviation of agent~$x$ immunizing (if it were not already immunized) and buying an edge to any node in every other component. Let $\s'$ be the corresponding strategy profile. We will compare the utilities of agent~$x$ in both profiles $\s$ and $\s'$ and use the fact that $\s$ is a Nash equilibrium to deduce an inequality that implies $C_E + C_I +2 \geq r$. 
    
    Let $k$ be the number of edges bought by agent~$x$ in profile~$\s$. The utility of agent~$x$ in profile~$\s$ is at most $|Z_r| - k C_E$ and agent~$x$'s utility in profile~$\s'$ is at least 
        $$\sum_{i=1}^r |Z_i| - \max_{1\le i \le r} |Z_i| - (k+r-1) C_E - C_I.$$
    Since profile~$\s$ is a Nash equilibrium, we know that agent~$x$ weakly prefers its strategy in $\s$ over its strategy in $\s'$, and thus 
    \begin{align*}
        &|Z_r|-kC_E
            \geq\;  \sum_{i=2}^r |Z_i| - (k+r-1) C_E - C_I.
    \end{align*}
    Now, due to \Cref{lemma:minimal_size} and the fact that there are no isolated nodes, $|Z_i| \ge C_E +1$ for all $i$. Therefore, it holds that
        $$|Z_r| - kC_E \geq\; |Z_r| + (r-2)(C_E+1) - (k+r-1)C_E - C_I.$$
    Reordering the terms give us what we wanted: \begin{align*}
        C_I + C_E +2 &\ge r. \qedhere
    \end{align*} 
\end{proof}

\noindent
The next lemma allows us to eliminate the possibility of targeted cut-vertices as they only exist in small networks and thus do not impact the asymptotic bounds.

\lemmaCutImplyBounded*
\begin{proof}
    The previous lemmas gave us strict conditions for the existence of such an equilibrium. By definition there is a connected component~$Z$ in $G(\s)$ with an immunized node and a targeted cut-vertex. \Cref{lemma:condition-cut-targeted} ensures the existence of some other targeted regions $T_1,...,T_k$ with $k \ge 2$ outside of $Z$. None of these targeted regions lie inside a component with an immunized node or it would contradict \Cref{single-component-cut}. Therefore, all targeted regions are either contained in $Z$ or form their own connected component. Based on that, we can make some observations regarding the components $T_1,\dots,T_k$:

    \begin{enumerate}[label=(\roman*),leftmargin=32pt]
        \item All components $T_1,\dots,T_k$ have the same size, since all are targeted regions at the same time.
        \item We know that $|Z|\ge|T_i|$ for all $i \in [k]$. Otherwise, if $|Z| \leq |T_i|$ for some $i \in [k]$, no node in $Z$ would be targeted. This is because $f$ is increasing because it is convex, has positive values and starts at $0$.
    \end{enumerate}

    \noindent
    In \Cref{lemma:bounded-number-components-cut}, we found an upper bound on the number of connected components in any Nash equilibrium without isolated nodes, which applies here as \Cref{lemma:no-isolated-cut} shows that there are none in this case. The goal of this proof is to bound the number of nodes in these components. For this, we proceed in three steps, each of which determines an upper bound on the size of a specific structure in the equilibrium. The structures are:\footnote{Technically, the second two cases cover all kinds of connected components and thus all nodes. However, we need the first step as an intermediate result for the steps 2 and 3.}
    \begin{enumerate}[leftmargin=32pt]
        \item the targeted regions $T_i$,
        \item the other components without immunized centroid,
        \item the other components with an immunized centroid.
    \end{enumerate}

    \begin{step}\label{step1:cut-imply-bounded}
        The size of every targeted region is bounded above by $2C_E+C_I$.
    \end{step}
    \begin{claimproof}[\Cref{step1:cut-imply-bounded}]
        Take any agent $x \in T_1$ and consider the deviation of agent~$x$ immunizing and buying an edge to any node in $T_2$ and to any immunized node of $Z$. Before the deviation, the utility of agent~$x$ was at most $|T_1|-\ell C_E$, where $\ell \geq 0$ is the number of edges bought by $x$ into its component $T_1$. After the deviation, node~$x$ is connected to at least two whole components among $Z$, $T_1$ and $T_2$ no matter which region was attacked. Therefore, agent $x$'s utility after the deviation is at least
        \begin{align*}
            &2\min\{|Z|,|T_1|,|T_2|\} - (\ell +2)C_E - C_I \\\geq\;& 2|T_1| - (\ell +2)C_E - C_I.
        \end{align*}
        Using the Nash equilibrium property that this deviation is no strict improvement for agent~$x$, we get that $|T_1| \leq 2C_E+C_I$.
    \end{claimproof}

    \begin{step}\label{step2:cut-imply-bounded}
        Every component without an immunized centroid has at most $\sqrt{2}(2 C_E + C_I)$ vertices.
    \end{step}
    \begin{claimproof}[\Cref{step2:cut-imply-bounded}]
        For this proof, let $K$ be a connected component without immunized centroid. \Cref{lemma:no-centroid-implies-vulnerable} ensures that in that case, there is a vulnerable region $T \subseteq K$ that if removed, splits $K$ in connected components of size $p_1,...,p_r$ such that $0 <p_i \le |K|/2$, for all $1\le i \le r$. This will allow us to prove the upper bound by claiming that the opponent would strictly prefer targeting $T$ over targeting $T_1$, if the size of $K$ was greater than $\sqrt{2}(2 C_E + C_I)$.

        First, because the condition implies that $|K| - 2 p_i \ge 0$ for all $i$, we have that $$\sum_{i=0}^r p_i(|K| - 2 p_i) \ge 0.$$ Then by expanding the terms, we get:
        \[
         |K| ^2 - 2 \sum_{i=1}^r p_i^2 \ge |K| \sum_{i=1}^r p_i - 2 \sum_{i=1}^r p_i^2 \ge 0.
        \]
        Which means that 
        \begin{equation}\sum_{i=1}^r p_i^2 \le \frac{|K|^2}{2} \label{eq_squares}
        \end{equation} 
        holds. By rearranging Inequality~(\ref{eq_squares}) we get 
        $$|K^2| - \sum_{i=1}^r p_i^2 \geq \frac{|K|^2}{2},$$ which bounds the damage done by the maximum disruption opponent when attacking region~$T$. We now transfer this lower bound on the damage done to the more general class of \sqda{} opponents. For this  we use that \sqda{} opponents are $f$-opponents where $f/x^2$ is non-decreasing. For this, we can multiply and divide by the square function to turn this $f(|K|) - \sum_{i=1}^r f(p_i)$ into:
        \begin{align*}
            &\frac{f(|K|)}{|K|^2} |K|^2 - \sum_{i=1}^r \frac{f(p_i)}{p_i^2}p_i^2\\
            \ge\;& \frac{f(|K|)}{|K|^2}\left(|K|^2 - \sum_{i=1}^rp_i^2\right)\\
            \ge\;& \frac{f(|K|)}{2}.
        \end{align*}
        Hence, the fact that $T_1$ is targeted ensures that $f(|T_1|) \ge \frac{f(|K|)}{2}$. Now we will use the trick of multiplying and dividing by $x^2$ once again. If we assume towards a contradiction, that $|K| > \sqrt{2} |T_1| > |T_1|$, then we can use the non-decreasing property :  $\frac{f(|K|)}{|K|^2} \ge \frac{f(|T_1|^2)}{|T_1|^2}$. Therefore, if $|K| > \sqrt{2} |T_1|$ then:
        \begin{align*}
        U_f(T,\s) - U_f(\emptyset, \s) &\ge \frac{f(|K|)}{2} \\
         &= \frac{f(|K|)}{|K|^2}\frac{|K|^2}{2} \\ &\ge\frac{f(|T_1|)}{|T_1|^2}\frac{|K|^2}{2} \\
        &> \frac{f(|T_1|)}{|T_1|^2}|T_1|^2 \\
        &= f(|T_1|) \\
        &= U_f(T_1,\s) - U_f(\emptyset, \s).
         \end{align*}
        Which contradicts the fact that $T_1$ is targeted. Thus, $|K| \le \sqrt{2} |T_1| \le \sqrt{2}(2 C_E + C_I)$.
    \end{claimproof}

    \begin{step}\label{step3:cut-imply-bounded}
        Every component with an immunized centroid has at most $2 C_I + 3 C_E$ nodes.
    \end{step}
    \begin{claimproof}[\Cref{step3:cut-imply-bounded}]
        Take any component~$K$ with an immunized centroid. Look at the deviation of any agent~$x \in T_1$ to immunize and buy an edge to this centroid. This increases the cost of agent~$x$ by $C_E + C_I$. However, if a region in $T_1$ is removed, the size of the connected component of node~$x$ is at least $|K|$, and if a region in $K$ is removed, this size is at least $|T_1| + \frac{|K|}{2}$ because of the centroid property. Hence, for this deviation to not be strictly increasing, either $|K| \le |T_1| + C_I + C_E$ or $|K| \le 2(C_I +C_E)$ must hold. Since, $|T_1| \leq 2C_E+C_I$, by \Cref{step1:cut-imply-bounded}, both of them imply that $|K| \le 2C_I + 3 C_E$, we have the desired bound.
    \end{claimproof}

    \vspace{6pt}
    \noindent
    To summarize, there are at most $C_E+C_I+2$ connected components (by \Cref{lemma:no-isolated-cut} and \Cref{lemma:bounded-number-components-cut}), each of which have a size of at most $2C_I + 3 C_E$ nodes (if they contain an immunized centroid) or $\sqrt{2}(2 C_E + C_I)$ nodes (if they don't). Since $2C_I + 3 C_E > \sqrt{2}(2 C_E + C_I)$ for all $C_E,C_I > 1$, we get that the given Nash equilibrium network has at most $(C_E+C_I+2)(2C_I + 3 C_E)$ nodes.
\end{proof}

\lemmaTargetedWithImmunized*
\begin{proof}
    Assume towards a contradiction that there is a non-trivial Nash equilibrium $\s$ with respect to an \sqda{} opponent $\mathcal{A}$, where no targeted region is in a connected component with an immunized node. Then, in $G(\s)$ all targeted regions are connected components. Because these regions are targeted, they must have the same size $k$. Also let $\ell$ be the number of target regions. Further, let $Z$ be a component with an immunized node, which has to exist, since profile~$\s$ is non-trivial. Let $u$ be an immunized node in component $Z$ that either is a centroid of $Z$ if an immunized centroid exists in $Z$, or shares an edge with a vulnerable region $M$ that splits $Z$ into connected components of size strictly lower than $|Z|/2$ when removed (which exists by \Cref{lemma:no-centroid-implies-vulnerable}). Next, we deal with the case $k=1$ before continuing with $k > 1$ for the rest of the proof.

    For $k=1$, all targeted regions are isolated nodes. Then, the damage when targeting one of them is $f(1) = f'(0)$ while if there were vulnerable nodes in $Z$, targeting them would be a damage of at least $f'(|Z|-1)$. And as $C_I>1$ and the utility of node~$u$ is positive, it must hold that $|Z|>1$. Hence, there cannot exist a vulnerable node in $Z$ or it would be targeted instead of the isolated ones.
    In this setting, every targeted (and isolated) node would prefer to be part of $Z$, i.e., buying an edge to node~$u$ and immunize. Indeed as $|Z| \ge 2$, some immunized agent in $Z$ bought an edge, and still has positive utility. This implies that $|Z| \ge C_E + C_I$. So the utility after the deviation of an isolated node would be at least $1$, which is more than before. Therefore, profile~$\s$ is not a Nash equilibrium in this case.

    For the rest of the proof, we assume that all of the targeted components have $k > 1$ nodes. 
    \begin{figure}[h]
    \centering
    \includegraphics[width=\linewidth]{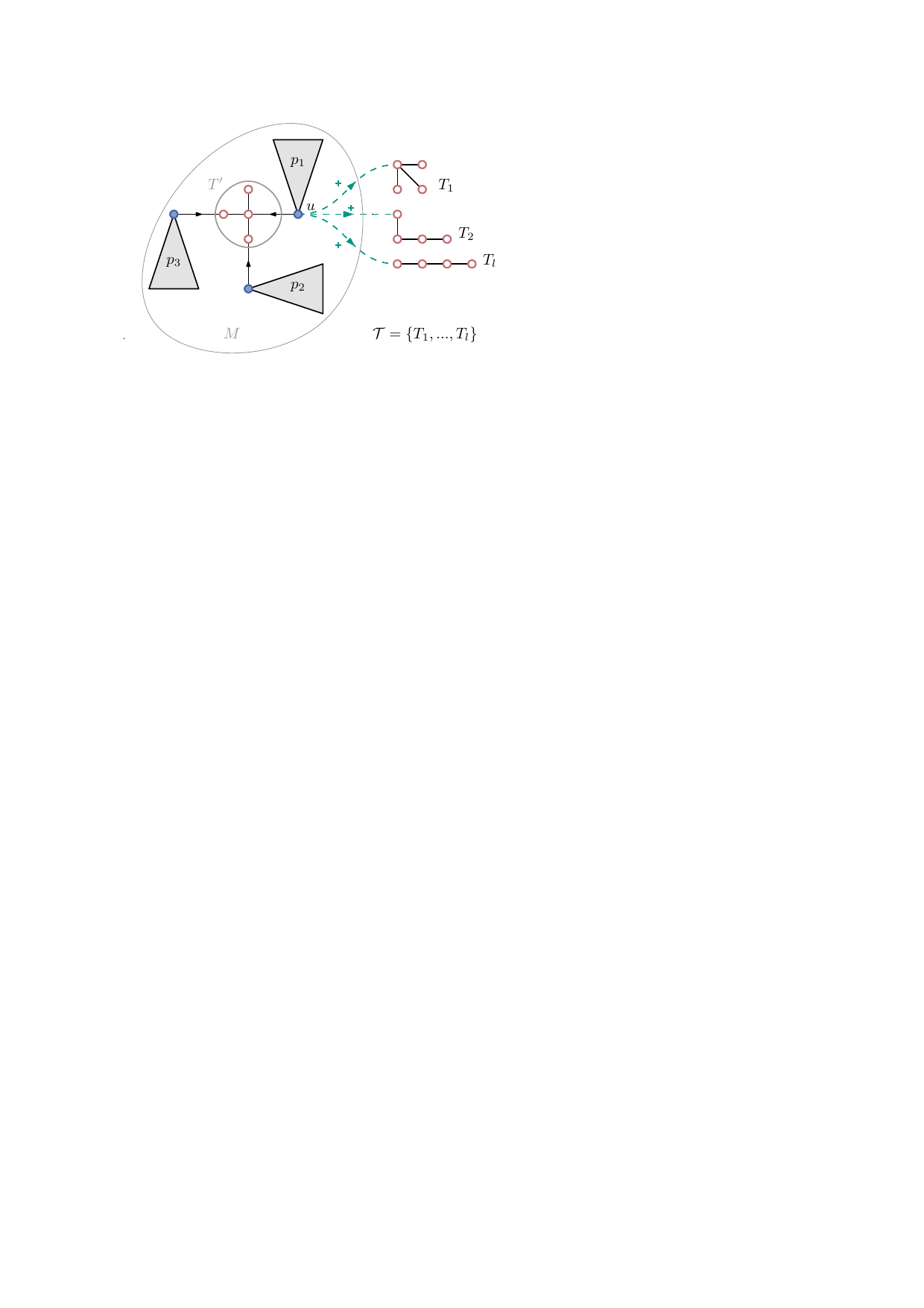}
    \caption{Illustration of the proof of \Cref{lemma:targeted-with-immunized}. An example of the case $k>1$, here with $k=4$ and $v=u$.}
    \label{fig:targeted_immunized}
    \end{figure}
    We will show that for every immunized node~$v \in Z$, there is a vulnerable region $T' \subsetneq Z$ such that $CC_v(\s, T') < |Z|-k$.
    Based on that and previous observations, the region $T'$ that is associated with the immunized node $u$ is targeted in $G(\s)$, which is a contradiction for $G(\s)$ having only complete connected components has targeted regions.

    \begin{step}\label{step1:targeted-with-immunized}
        For every vulnerable node $v \in Z$, there is a vulnerable region $T'$ in $Z$ such that $CC_v(T',\s) < |Z|-k$.
    \end{step}
    \begin{claimproof}[\Cref{step1:targeted-with-immunized}]
        Consider the deviation of agent~$v$ buying an edge to any node in every targeted component, as shown in \Cref{fig:targeted_immunized}, and let $\s'$ denote the obtained strategy profile.

        We will show that the claim of Step 1 will hold for some component~$T' \in \T(\s')$ for which we will argue that it lies in component~$Z$. 
        
        To get an upper bound on $\E_{\T(\s')}[CC_u(\s')]$, we use that strategy profile~$\s$ is a Nash equilibrium and thus
        \begin{equation}\label{eq1:targeted-with-immunized}
            \E_{\T(\s')}[CC_u(\s')] - l C_E \leq \E_{\T(\s)}[CC_u(\s)].
        \end{equation}
        Additionally, consider any leaf $x$ in any targeted component in $T \in \T(\s)$ and its only neighbor~$y$. We claim that agent~$x$ buys the edge~$\{x,y\}$. Otherwise, if agent~$y$ bought the edge $\{x,y\}$, it would be a strict improvement for agent~$y$ to drop the edge since $\left(1-\frac{1}{\ell}\right)k - C_E < k -1$. Because agent~$x$ weakly prefers to keep the edge $\{x,y\}$, it means that $\left(1-\frac{1}{|\T(\s)|}\right) k - C_E  \ge 1$ or equivalently
        \begin{equation}\label{eq2:targeted-with-immunized}
            \ell C_E \leq (\ell-1)k - \ell.
        \end{equation}
        \Cref{eq1:targeted-with-immunized,eq2:targeted-with-immunized} together with $\E_{\T(\s)}[CC_u(\s)] = |Z|$  yield the upper bound of
            $$\E_{\T(\s')}[CC_u(\s')] \leq |Z| + (\ell-1)k - \ell.$$
        This implies that there is a vulnerable region $T'\in \T(\s')$ such that $CC_v(T',\s') < |Z| + (\ell-1)k$, because $\ell$ is strictly positive.

        If $T'$ neither is in $Z$ nor in $T_1,\dots,T_\ell$, then $CC_v(T',\s') = |Z|+\ell k$. If $T'$ is equal to some $T_i$, then $CC_v(T',\s') = |Z|+(\ell-1)k$. Thus, the only remaining possibility is that region~$T'$ is in $Z$. Hence, since agent~$v$ is still connected to the components $T_1,\dots,T_\ell$ in profile~$\s'$, when $T'$ is removed, we have that $$CC_v(T',\s) = CC_v(T',\s') - \ell k < |Z| - k.$$ The second inequality follows from the definition of $T'$. 
    \end{claimproof}

    \begin{step}\label{step2:targeted-with-immunized}
        Some targeted region in $Z$ is targeted in $G(\s)$ by the \sqda{} opponent.
    \end{step}
    \begin{claimproof}[\Cref{step2:targeted-with-immunized}]
       First we want to generalize \Cref{step1:targeted-with-immunized} by finding one region in $Z$ that works for every immunized node. This will either be the region~$T_u'$ associated to node~$u$ by $\Cref{step1:targeted-with-immunized}$ or $M$ if $u$ is not a centroid.
        \begin{enumerate}
            \item If node~$u$ is a centroid, then when removing $T'_u$, the size of the connected component of $u$ is greater or equal than $|Z|/2$ because of the centroid property, and strictly lower than $|Z|-k$ by definition of $T'$. This means that every other remaining component has a size lower or equal to $|Z|/2$ which is strictly lower than $|Z|-k$.
            \item If node~$u$ is not a centroid and the vulnerable region $M$ is different from $T'$, then $CC_u(T'_u,\s) < |Z|-k$, by definition of $T'_u$, but this component includes $M$ which contains a centroid. Hence $|Z|/2 < |Z| - k$. This mean that when removing $M$, every remaining connected component has a size lower than $|Z|/2$ which is lower than $|Z|-k$.
            \item If node~$u$ is not a centroid and the region $M$ is equal to region $T'_u$, then take any immunized node $v$ in a remaining connected component after removing $T'_u$ that shares an edge with a node in $T'_u$. By \Cref{step1:targeted-with-immunized}, there is a vulnerable region $T'_v$ such that $CC_v(T'_v,\s)<|Z|-k$ when removing $T'_v$. If $T'' \neq T'_u$, then $CC_v(T'_v,\s) \ge |Z|/2$ as $v$ is connected to the centroid in $M$. Hence, $|Z|/2 < |Z| -k$ which implies that $CC_v(T'_u,\s) < |Z|-k$.
        \end{enumerate}

        \noindent
        Therefore, there is a region $T'$ (that is either equal to $T'_u $ or to $M$) such that when removed, the size of every remaining connected component is strictly less than $|Z|-k$. Let $p_1,...,p_r$ be these sizes.
        Therefore, $p_i < |Z| - k$ for every $1 \le i \le r$, and $|T'| + \sum_{i=1}^r p_i = |Z|$. This gives us for every $1 \le j \le r$, that 
        \begin{equation}\label{eq1:targeted-with-immunized}
            |T'| + \sum_{i \neq j} p_i > k
        \end{equation}

    \noindent
    Now we are ready to show that $T' \in \T(\s)$, which will be a contradiction. For that, take any $T \in \T(\s)$. We have 
    \begin{align*}
        &U_{f}(T', \s) - U_{f}(T,\s)\\
        =\;&\sum_{i = 1}^r f(p_i) + f(k) - f\left(\sum_{i = 1}^r p_i+ |T'|\right) \\
        =\;& \sum_{i = 1}^r \frac{f(p_i)}{p_i^2}p_i^2 + \frac{f(k)}{k^2}k^2\\
            &\qquad-\; \frac{f(\sum_{i = 1}^r p_i+ |T'|)}{(\sum_{i = 1}^r p_i+ |T'|)^2}\left(\sum_{i = 1}^r p_i+ |T'|\right)^2 \\
        \le\;& \frac{f(\sum_{i = 1}^r p_i+ |T'|)}{(\sum_{i = 1}^r p_i+ |T'|)^2}\left(\sum_{i = 1}^r p^2 + k^2\right)\\
            &\qquad-\; \frac{f(\sum_{i = 1}^r p_i+ |T'|)}{(\sum_{i = 1}^r p_i+ |T'|)^2}\left(\sum_{i = 1}^r p_i+ |T'|\right)^2 \\
    \end{align*}
    This last step used the non-decreasing property of $f/x^2$ and the fact that $|Z| \ge |T_1|$. Now for more clarity define
        $$c\coloneqq \frac{f(\sum_{i = 1}^r p_i+ |T'|)}{(\sum_{i = 1}^r p_i+ |T'|)^2}.$$
    Because $|Z| \ge 2$ (because it contains at least one immunized node and one vulnerable node) and $f$ is strictly convex, we have $f(|Z|) > 0$ and therefore $c > 0$.
    \begin{align*}
        &\frac{1}{c}(U_{f}(T', \s) - U_{f}(T,\s))\\
        \le\;& \sum_{i = 1}^r p^2 + k^2 - \left(\sum_{i = 1}^r p_i+ |T'|\right)^2  \\
        =\;& k^2 - |T'|^2 - \sum_{i=1}^r \Big[p_i \sum_{j\neq i} p_j \Big] - 2|T'| \sum_{i=1}^r p_i \\
        \overset{\text{Eq.} \ref{eq1:targeted-with-immunized}}{<}\;& k^2 - |T'|^2 - \sum_{i=1}^r p_i(k - |T'|)  - 2|T'| \Big[\sum_{i \neq 1} p_i \Big] \\
        \overset{\text{Eq.} \ref{eq1:targeted-with-immunized}}{<}\;& k^2 - |T'|^2 - (k-|T'|) \Big[  \sum_{i \neq j} p_i\Big]    - 2|T'|(k - |T'|) \\
        \overset{\text{Eq.} \ref{eq1:targeted-with-immunized}}{<}\;& k^2 - |T'|^2 - (k-|T'|)^2 - 2|T'|(k - |T'|) \\
        =\;&  k^2 - |T'|^2 - k^2 + 2k|T'| - |T'|^2 - 2|T'|k + 2|T'|^2 \\
        =\;& 0
    \end{align*}
    Therefore it holds that $U_{f}(T', \s) < U_{f}(T,\s)$, which means that this opponent would target $T'$.
    \end{claimproof} 
    This finishes the proof.
\end{proof}

\noindent 
We continue with the setting without targeted cut-vertices and show the existence of at least two vulnerable nodes.

\lemmaTwoVulnerable*

\begin{proof}
    Assume towards a contradiction, that there are no targeted cut-vertices but there is a connected component $K$ with an immunized agent and targeted region that contains exactly one vulnerable agent~$x$. Note, that $x$ must be targeted, since there is a targeted region in $K$.
    
    The goal is to prove that another agent in component $K$ strictly prefers to stop immunizing. To do so, we find another agent in the same situation as agent~$x$ and because agent~$x$ made this choice, this other agent will also prefer it. 

    First, notice that agent~$x$ can only have one incident edge, i.e., it is a leaf. Indeed, assuming it has at least two incident edges, because node~$x$ is not a cut-vertex, another path of immunized nodes link the two endpoints of these edges. Thus, one of them is redundant against any well-behaved opponent.
    Therefore, there is exactly one immunized node~$u$ that is linked to node~$x$. 
    
    Since $x$ is a leaf and since $C_E>1$, the edge $\{x,u\}$ must be bought by agent~$x$. Since this edge must be profitable for agent~$x$, it holds that $|K| \ge C_E + 1 > 2$ so there is at least one other immunized agent in $K$, other than agent $u$. Observe that the immunized component of node~$u$ forms a tree~$T_u$, since non-tree edges can be dropped without changing the connectivity. 
    
    Now, root $T_u$ at node~$u$ and consider a leaf $v$. Since node~$x$ has only one incident edge to node $u$, it cannot share an edge with $v$. Also, note that $v\neq x$, since node $v$ is immunized. Now, note that removing $x$ or $v$ lowers $U_f$ by the same amount: $f'(|K|-1)$. Hence selling immunity makes agent~$v$ targeted, without removing any targeted region.

    Now because agent~$x$ chooses not to immunize in Nash equilibrium~$\s$, we have $\frac{1}{|\T|}|K| \le C_I$. This means that 
        \begin{equation}\frac{1}{|\T(\s)| + 1}\E_{\T(\s)}[CC_v(\s)] \le \frac{1}{\T(\s) + 1} |K| < C_I. \label{eq1:lemma_two_vulnerable}\end{equation}
    However, consider the deviation when agent~$v$ does not immunize and let $\s'$ be the corresponding strategy profile. Then using conditional expectations:
    \begin{align*}
        \E_{\T(\s) \cup \{v\}}[CC_v(\s')] =\;& \frac{|\T(\s)|}{|\T(\s)| +1} \E_{\T(\s)}[CC_v(\s')]\\
        =\;& \frac{|\T(\s)|}{|\T(\s)| +1} \E_{\T(\s)}[CC_v(\s)].
    \end{align*}
    This last equality comes from the fact that node~$v$ is not in any region of $\T(\s)$ and is not linked to any vulnerable node. So being vulnerable does not change the size of any connected component when any region in $\T(\s)$ is attacked. Now because of the inequalities we established:
    \begin{align*}
        &\frac{|\T(\s)|}{|\T(\s)| +1} \E_{\T(\s)}[CC_v(\s)] \\
        =\;& \E_{\T(\s)}[CC_v(\s)] - \frac{1}{\T(\s) + 1}\E_{\T(\s)}[CC_v(\s)] \\
        >\;& \E_{\T(\s)}[CC_v(\s)] - C_I,
    \end{align*}
    where the inequality follows from Inequality~(\ref{eq1:lemma_two_vulnerable}). This allows us to conclude that 
    \begin{align*}
        u_v(\s') =\;& \E_{\T(\s) \cup \{v\}}[CC_v(\s')] - C_E\\
        >\;& \E_{\T(\s)}[CC_v(\s)] - C_E - C_I\\
        =\;& u_v(\s).
    \end{align*}
    Therefore strategy profile~$\s$ is not a Nash equilibrium, which gives us a contradiction.
\end{proof}

\noindent 
Finally, we can show that most one connected component with a targeted regions exists.

\lemmaSoloImmunizedTargeted*
\begin{proof}
    Assume there are two components $K_1$ and $K_2$ with immunized nodes $u_1$ and $u_2$, respectively, and both have targeted regions. These targeted regions are singletons, by \Cref{lemma:targetsingletons}, and are not cut-vertices by assumption. Therefore, when agent~$u_1$ buys an edge to node~$u_2$, its new utility is at least $|K_1| + |K_2|-1 -C_E - c$ where $c$ denotes the costs of the existing strategy of agent~$u_1$. However, before the deviation this was strictly smaller than $|K_1| - c$ as there are targeted regions in $K_1$. Because $|K_2| > C_E+1$, by \Cref{lemma:minimal_size}, this deviation is strictly improving.
\end{proof}

Now, we derive the important statement, that without targeted cut-vertices no isolated agents exist. This is helpful, since such agents contribute to low social welfare. 

\lemmaNoIsolated*
\begin{proof}
    Let $\s$ be a non-trivial Nash equilibrium. By \Cref{lemma:targeted-with-immunized}, there is a component $K$ with at an immunized agent and, if there are vulnerable nodes in $G(\s)$, a targeted region. By \Cref{lemma:targetsingletons}, targeted regions must be singletons and by assumption there are no targeted cut-vertices. Hence, every vulnerable agent in $K$ is targeted and not a cut-vertex.
    
    Now, towards a contradiction, assume there exists an isolated agent~$x$. If agent~$x$ was already immunized, then buying an edge to any immunized node in $K$ would be a strict utility increase, as there are at least $C_E+1$ nodes in $K$ and no vulnerable cut-vertex exists in $K$. Therefore, agent~$x$ is vulnerable, which means that there exists a targeted region, and thus targeted non-cut-vertices in component~$K$.
    
    Now there will be several cases depending on the structure of component~$K$, see \Cref{fig:no-isolated} for an illustration. 
    \begin{enumerate}
        \item Some agent in $K$ bought two edges.
        \item There is only one immunized agent in $K$.
        \item An immunized agent bought an edge that disconnects $K$ when removed and no agent bought more than one edge.
        \item Every edge bought by an immunized agent in $K$ is part of a cycle, no agent bought more than one edge, and some immunized agent bought an edge.
    \end{enumerate}
    \begin{figure}[h]
        \centering
        \includegraphics[width=\linewidth]{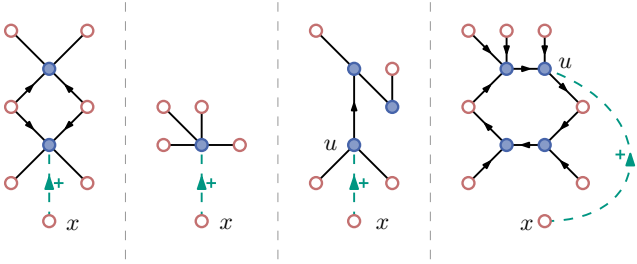}
        \caption{The four cases of the proof. The dashed edge is the proposed deviation.}
        \label{fig:no-isolated}
    \end{figure}

    We still have to prove that those four cases are exhaustive. Indeed, if we are not in the cases $1$ or $2$, then because there are at least $|K| -1$ edges, one of them must be bought by an immunized agent. If one of such edges disconnect $K$ when removed, then we are in Case $3$, and else we are in Case $4$. 
    
    In all of those cases, we will find that buying an edge to an immunized node of $K$ will be a strictly improving deviation for agent $x$. Without buying immunity in cases $1,2$ and $4$ and sometimes buying immunity in Case $3$.

    \begin{case}\label{step1:no-isolated}
        We show that if some agent in $K$ bought two edges, then agent~$x$ strictly prefers to buy an edge to any immunized node of component~$K$.
    \end{case}
    \begin{claimproof}[\Cref{step1:no-isolated}]
        Assume that some agent $v$ bought two edges. If agent~$v$ is immunized then because its utility is positive, we have $$|K| \ge 2 C_E + C_I > 2 C_E + 1.$$ If agent~$v$ is vulnerable, consider the deviation of selling every bought edge. If agent~$v$ becomes isolated by this, then its utility becomes $1$, and else the probability of being destroyed does not increase and was at most $\frac12$, by \Cref{lemma:two-vulnerable}. Hence, the new utility is at least $\frac{1}{2}\cdot 2 \ge 1$. Therefore the Nash property gives us $|K| - 2 C_E \ge 1$. Thus, independent of agent~$v$'s immunization status, we get that $|K| \ge 2 C_E +1$.
        
        Additionally, \Cref{lemma:two-vulnerable} ensures that component~$K$ contains at least two vulnerable agents. Now, with these results, after the deviation of buying an edge to an immunized node of $K$, it holds that the probability of node $x$ being destroyed is at most $\frac{1}{3}$. Therefore, the new utility of agent~$x$ is at least $$\frac{2}{3}(2 C_E+1) \ge  \frac{1}{3}C_E + \frac{2}{3} > 1.$$ This deviation is again a strict increase of utility for $x$, which contradict the fact that $\s$ was a Nash equilibrium.
    \end{claimproof}

    \begin{case}\label{step2:no-isolated}
        We show that if $K$ contains exactly one immunized agent, then agent~$x$ strictly prefers to buy an edge to this node.
    \end{case}
    \begin{claimproof}[\Cref{step2:no-isolated}]
        If there is only one immunized agent in $K$, then buying an edge to it yields an utility of at least
            $$\frac{|K|-1}{|K|} |K| - C_E = |K|-1-C_E$$
        for agent~$x$. Note, that component~$K$ must be a star with the immunized agent as center.
        
        Now, if there were no other targeted regions, then the fact that the vulnerable nodes in $K$ preferred to keep their edges to the star center, means that $$\frac{|K|-2}{|K|-1}(|K|-1)-C_E \ge 1.$$ This implies that $|K|-1 - C_E > 1$ and therefore that this deviation is strictly increasing agent~$x$'s utility.
        
        However, there might be other targeted regions~$T_1,...,T_r$. By \Cref{lemma:solo-immunized-targeted}, we know that they are vulnerable components. 
        In that case, necessarily $$|K| > |T_1| > C_E +1.$$ Indeed, the damage when targeting a node in $K$ is $f'(|K|-1)$ while it is $f(|T_1|)$ when targeting $T_1$. Now, because $$f(|T_1|) = \sum_{i=0}^{|T_1| - 1} f'(i),$$ and, because $|T_1| \ge C_E+1 > 2$, the equality $f'(|K|-1) = f(|T_1|)$ gives us $$f'(|K|-1) \ge f'(|T_1|-1) + f'(1) > f'(|T_1| - 1).$$ We conclude by convexity of $f$.
        
        Because $|K|$ and $|T_1|$ are two integers, it gives us $|K| > C_E + 2$. In that case agent~$x$'s utilitiy after buying the edge  is at least $|K|-1 - C_E > 1$, hence the deviation is always strictly improving.
    \end{claimproof}

    Now for the two last cases, we will first prove general results that will apply to both. As, by \Cref{lemma:two-vulnerable}, there are at least two vulnerable nodes in $K$ and at least $|K|-1$ edges, when no agent bought more than one edge then a vulnerable agent $v$ bought an edge.

    We assume towards a contradiction,that there is at most one targeted region outside of $K$. In this case, agent~$v$'s utility is 
        $$ u_v(\s) < \frac{|\T(\s)|-1}{|\T(\s)|}|K| - C_E,$$
    since if agent~$v$ is not destroyed, then it is connected to at most all the nodes of $K$ and the inequality is strict since there is another targeted node in $K$.  
    
    Now we consider that agent~$v$ drops the bought edge and let $\s'$ denote the corresponding strategy profile. We now show that $u_v(\s') \geq 1$. This holds, since the probability of being destroyed does not increase when dropping edges, and this probability for agent~$v$ is at most $\tfrac12$, because there is at least one other targeted node. This means that if an immunized agent bought an edge to node~$v$, then $u_v(\s') \geq \tfrac{1}{2} \cdot 2$. If, on the other hand, no agent bought an edge to node~$v$, then node~$v$ is isolated in profile~$\s'$ and thus no longer targeted because there is a non-isolated vulnerable nodes in $K \setminus \{v\}$. Hence, in this case we have $u_v(\s') = 1$. 
    
    Hence, since profile $\s$ is a Nash equilibrium, we have $$1 \leq u_v(\s') \leq u_v(\s) < \frac{|\T(\s)|-1}{|\T(\s)|}|K| - C_E.$$ Therefore, if agent~$x$ buys an edge to an immunized node of $K$, its new utility is at least $\frac{|\T(\s)|-1}{|\T(\s)|}|K| - C_E > 1$.

    Now, consider that in profile~$\s$ agent~$x$ buys an edge to some immunized node in $K$ and let $\s''$ be the corresponding strategy profile. In this case, we have $|\T(\s)| \leq |\T(\s'')|$, since if there is no targeted region outside of $K$, then node $x$ becomes a new targeted region and all other targeted regions stay targeted. If there is a targeted region $T$ outside of $K$, then in $\s''$ region $T$ is no longer targeted, since the size of component $K$ increased by one node and thus, the \sqda{} would only attack vulnerable nodes in $K$. Therefore, in profile $\s''$ agent~$x$'s utility is at least $$\frac{|\T(\s'')|-1}{|\T(\s'')|}|K| - C_E \geq \frac{|\T(\s)|-1}{|\T(\s)|}|K| - C_E > 1.$$ Thus, since in profile~$\s$ agent~$x$ has utility $1$, this deviation strictly improves agent~$x$'s utility and hence contradicts that profile~$\s$ is a Nash equilibrium.     

    Thus, there are at least two targeted regions outside of $K$. Notice that they cannot be in components with immunized nodes, by \Cref{lemma:solo-immunized-targeted}. Therefore, there must exist whole vulnerable components $T_1,...,T_r$, which have, by \Cref{lemma:minimal_size}, size $|T_1| \ge C_E +1$. 
    Now we will look at whether $K$ contains an edge bought by an immunized agent that is not part of a cycle or if all such edges are.

    \begin{case}\label{step3:no-isolated}
        We show that if an immunized agent $u$ bought an edge $e$ that disconnects $K$ when removed and no agent in $K$ bought more than one edge, then buying an edge to node~$u$ and maybe immunizing is strictly improving for agent~$x$.
    \end{case}
    \begin{claimproof}[\Cref{step3:no-isolated}]
        Look at the deviation of agent~$u$ selling immunity and selling edge~$e$. By assumption, this splits $K$ in two components $K_u$ and $K_v$. If node~$u$ is not targeted after the deviation, then its new utility is at least $1$ which means that $|K| > C_E + C_I + 1$ and agent~$x$ strictly prefers to immunize and buy an edge to node~$u$.

        If node~$u$ becomes targeted after this deviation, it means that $|K_u| \ge |T_1| \ge C_E +1$. However, as agent~$u$ bought every edge to $K_v$ and there are targeted regions outside of $K$, we can apply \Cref{cor:size-after-edge} to conclude that $|K_v| \ge C_E$. Hence, we get $|K| \ge 2 C_E +1$. Now if agent~$x$ buys an edge to node~$u$, its new utility is at least $$\frac{2}{3} (2C_E +1) - C_E = \frac{1}{3}C_E + \frac{2}{3} > 1,$$ which is strictly improving.
    \end{claimproof}

    \begin{case}\label{step4:no-isolated}
        We show that if all edges that are bought by an immunized agent of $K$ are part of a cycle, no vertices bought more than one edge, and some immunized agent bought an edge, then buying an edge to an immunized node of $K$ is strictly improving for agent~$x$.
    \end{case}
    \begin{claimproof}[\Cref{step4:no-isolated}]
        In this case, we assume that component $K$ contains a cycle. First of all, we prove that this cycle is unique. For this, construct a directed graph $\vec{G}(\s) = (K, \vec{E}(\s))$ based on the strategy profile $\s$ by taking the nodes of $K$, and such that every edge $(u,v) \in \vec{E}(\s)$ indicates that the node $u\in K$ buys the edge $\{u,v\} \in E(\s)$. The assumption that no agent buys more than one edge gives us that every node in $\vec{G}$ has out-degree one. Since $K$ is connected, graph~$\vec{G}$ is weakly connected.  For such a directed and weakly connected graph it is known that it has at most one cycle and that this cycle is directed. 

        Therefore, we know that component $K$ consists of a directed cycle $C$ containing all immunized nodes (by assumption) as well as trees attached to the nodes of cycle $C$. Further, these trees are only attached to immunized nodes and only consist of one node (they are a dangling leaf). This is because, otherwise, there was a vulnerable region of size at least $2$ which would then be targeted and would thus contradict \Cref{lemma:targetsingletons}. For the same reason, we know that no two adjacent nodes on the cycle $C$ are vulnerable. Additionally, we claim that the cycle~$C$ contains at least two vulnerable nodes. This holds, since if the cycle~$C$ only contains immunized nodes, then one of them would prefer to drop its cycle edge to safe the edge cost without losing connectivity. Also, if there was only one vulnerable node in cycle~$C$, then the immunized agent that bought an edge to this node would drop its edge for the same reason. An example of such a network is shown in \Cref{fig:no-isolated}. Note that all leaves in component~$K$ bought the edge to their neighbor in cycle~$C$, as otherwise their neighbor would prefer to drop the edge.

        On cycle~$C$, consider the immunized regions $I_1,\dots,I_k$. From previous considerations, we know that $k \geq 2$ and that all these immunized regions are directed paths. In the following let $s_i$ for $i \in [k]$ be the size of the immunized region $I_i$ plus the number of dangling leaves that bought an edge to $I_i$. With this, we have that
\begin{equation}\label{eq0:step4:no-isolated}
            |K| = k + \sum_{i=1}^k s_i
        \end{equation}
        since between two immunized regions on the cycle~$C$ there is exactly one vulnerable node.
        Without loss of generality, we will assume in the following that the indices of the immunized regions $I_1,\dots,I_k$ are arranged such that $s_1 \geq \dots \geq s_k$. Having this in mind, let $u$ be the node in the largest immunized region $I_1$ (according to the $s_i$) that bought an edge to a vulnerable node in the cycle~$C$.
        In the remainder of this proof, we will show that component $K$ is large enough such that the isolated agent~$x$ would prefer to buy an edge to node~$u$, which would be contradiction to strategy profile $\s$ being a Nash equilibrium.

        To get this lower bound on the size of $K$, consider the deviation of agent~$u$ dropping its edge~$e$, and buying edges to any node of the targeted components $T_1$ and $T_2$, that are outside of $K$, instead. Let $\s'$ denote the obtained strategy profile. Since strategy profile $\s$ is a Nash equilibrium, we know that
        \begin{equation}\label{eq1:step4:no-isolated}
            \E_{\T(\s')}[CC_u(\s')] \le \E_{\T(\s)}[CC_u(\s)] + C_E.
        \end{equation}
        Since there are targeted regions inside $K$ for profile~$\s$, we also know that $\E_{\T(\s)}[CC_u(\s)] < |K|$. Together with \Cref{eq1:step4:no-isolated}, this implies that there is a vulnerable region $T' \in \T(\s')$ such that 
        \begin{equation}\label{eq2:step4:no-isolated}
            CC_u(T',\s') < |K| + C_E.
        \end{equation}
        We argue, that this targeted region $T'$ has to be in cycle~$C$, since $CC_u(T',\s') \geq |K| + C_E$ holds for all other possibilities, which we will now show.
        
        If $T'$ neither is in $K$ nor in the targeted components $T_1$ or $T_2$, then 
            $$CC_u(T',\s') = |K| + 2|T_1| \ge |K| + 2C_E + 2,$$
        where $|T_1| \geq C_E+1$ holds by \Cref{lemma:minimal_size}.
            
        If $T'$ is one of the targeted components $T_1$ or $T_2$, then 
            $$CC_u(T',\s') = |K| + |T_1| \ge |K| + C_E + 1$$
        as agent~$u$ is still connected to $K$ and at least one of $T_1$ and $T_2$ (that have the same size since they are both targeted).
        Thus, the only remaining possibility is that the vulnerable region $T'$ is in component~$K$. If $T'$ was one of the dangling leaves, then 
            $$CC_u(T',\s') = |K|-1 + 2|T_1| \ge |K| + C_E + 1,$$
        as only one node is destroyed that is not a cut-vertex. Therefore, region~$T'$ has to be a node in the cycle~$C$.
        Hence, since agent~$u$ is still connected to the components $T_1$, $T_2$ and the $s_1$ nodes from its immunized component (with the dangling leaves attached to it) in $G(\s')$, when $T'$ is removed, we have that
        \begin{align}
            s_1 \le\;& CC_u(T',\s') - 2|T_1|\nonumber\\
            <\;& |K| + C_E - 2(C_E +1).\nonumber\\
            \leq\;& |K| - (C_E + 2),\label{eq3:step4:no-isolated}
        \end{align}
        Where we used \Cref{eq2:step4:no-isolated} and $|T_1| \geq C_E + 1$ for the second inequality.
        By rearranging Inequality~(\ref{eq3:step4:no-isolated}), we get
        
        \begin{equation}\label{eq4:step4:no-isolated}
            C_E + 2 < |K| - s_1 = \sum_{i=2}^ks_i + k,
        \end{equation}
        where the last equality is due to \Cref{eq0:step4:no-isolated}.
        To get rid of the sum, note that $s_1$ (as the largest $s_i$) is larger or equal than the average value
            $$\overline{s} \coloneqq \frac{1}{k}\sum_{i=1}^ks_i = \frac{1}{k}(|K|-k)$$
        of all $s_i$ values, and thus
        \begin{align*}
            \sum_{i=2}^ks_i =\;& |K|- k - s_1 \\
            \leq\;& |K| - k - \overline{s}\\
            =\;& |K| - k - \frac{1}{k}(|K|-k)\\
            =\;& \frac{k-1}{k}(|K|-k).
        \end{align*}
        Plugging this into \Cref{eq4:step4:no-isolated} yields:
        \begin{align*}
            C_E + 2 < \frac{k-1}{k}(|K|-k) + k = \frac{k-1}{k}|K| + 1.
        \end{align*}
        Which gives the following lower bound to the size of component $K$: 
        \begin{equation}\label{eq5:step4:no-isolated}
            |K| > \frac{k}{k-1}(C_E + 1).
        \end{equation}

        Now, we show that this lower bound to the size of component $K$ is large enough such that the isolated agent~$x$ strictly prefers to buy an edge to node~$u$. Let $\s''$ be the strategy profile obtained by agent~$x$ buying the edge $\{x,u\}$. We show that the utility of agent~$x$ is strictly greater than $1$ in profile~$\s''$ and thus strictly greater as in the given Nash equilibrium~$\s$. 
        With node~$x$ being part of component~$K$ in $G(\s'')$, all targeted regions in $K$ (w.r.t. profile $\s$) now cause even more damage because of the additional node~$x$. Thus, all vulnerable regions outside of $K$ that were targeted in profile~$\s$ are no longer targeted in profile~$\s''$. Since for all but one  of them (namely the targeted region consisting of agent~$x$), the node~$x$ remains in a component of at least $|K|$ nodes, we get that the utility of agent~$x$ in $\s''$ is
            $$u_x(\s'') = \frac{|\T(\s'')|-1}{|\T(\s'')|}|K| - C_E.$$
        Since $\frac{x-1}{x}$ is strictly increasing for $x > 0$, we can use a lower bound on the number of targeted regions in $G(\s'')$ to bound the utility $u_x(\s'')$ from below. For this, note that there are at least $k+1$ vulnerable regions in $K \cup \{x\}$ in $G(\s'')$, namely node~$x$ and the nodes separating the immunized regions $I_1,\dots,I_k$ on the cycle~$C$. We know that all of them are targeted, since they cause the same damage. Therefore, we get that $k+1 \leq |\T(\s'')|$ and thus
        \begin{align*}
            u_x(\s'') \geq \frac{k}{k+1}|K| - C_E > \frac{k-1}{k}|K| - C_E,
        \end{align*}
        where we used that $\frac{x-1}{x}$ is strictly increasing for $x > 0$ for the second inequality. Now we can use the obtained lower on the size of component~$K$ from \Cref{eq5:step4:no-isolated} to get that
        \begin{align*}
            u_x(\s'') >\;& \frac{k-1}{k}|K| - C_E\\
            >\;& \frac{k-1}{k}\left(\frac{k}{k-1}(C_E+1)\right) - C_E\\
            =\;& 1 = u_x(\s),
        \end{align*}
        which finally shows that profile~$\s$ is not a Nash equilibrium and thus concludes the proof by contradiction.
    \end{claimproof}
    This concludes the proof.
\end{proof}

\section{Omitted Details from the Analysis for the Tailored Opponent}
As contrast for our positive results on the maximum carnage, random attack, and the \sqda{} opponents, we show that there exist opponents, such that Nash equilibria can have social welfare of $\Theta(n)$, i.e., the lowest possible welfare. This shows the counter-intuitive result, that opponents exist, that achieve a lower social welfare than the opponent that actually aims for minimizing the social welfare.

\thmBadOpponent*
\begin{proof}
We consider a family of instances with $n$ agents and $C_E= C_I = 6$ and a tailored $f$-opponent $\mathcal{A}$, where $f$ is defined as follows: 
$$f(i) = \begin{cases}
           2, & \text{if $i=1$},\\
           3, & \text{if $i=7$},\\
           5, & \text{if $i=8$},\\
           4, & \text{if $i=9$},\\
           7, & \text{if $i=10$},\\
           0, &\text{otherwise}.
\end{cases}
$$

For every $n \geq 10$, we consider the strategy profile~$\s_{\text{bad}}$ illustrated in \Cref{fig:bad-Nash_appendix}. There, all but nine agents are isolated.
\begin{figure}[h]
    \centering
    \includegraphics[width=0.8\linewidth]{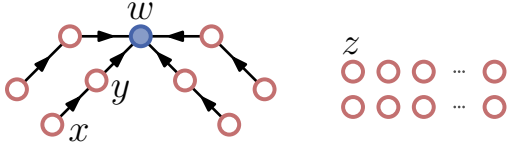}
        \captionof{figure}{The Nash equilibrium~$\s_{\text{bad}}$ with social welfare in $\Theta(n)$, for our tailored opponent, with $C_E = C_I = 6$.}
        \label{fig:bad-Nash_appendix}
\end{figure}

 Strategy profile~$\s_{\text{bad}}$ defines a network $G(\s_{\text{bad}})$ that is a star with four rays, where every ray consists of two vulnerable nodes. Besides this, there are $n-9$ isolated nodes. All nodes except the star center are vulnerable nodes. All edges are bought be the respective agent that is further away from the star center. There are therefore $n-5$ vulnerable regions, and $n-9$ targeted nodes. 
 
 Note that, one of the isolated nodes will be targeted. Thus, the social welfare of profile~$\sbad$ is $$(n-9)\cdot \frac{n-10}{n-9}\cdot 1 + 9\cdot 9 - 8 C_E - C_I \in \Theta(n),$$ since each of the $n-9$ isolated nodes survives with probability $\frac{n-10}{n-9}$ and since the nine nodes in the big connected component are not targeted, each of them has connectivity~$9$.   

Now we prove that profile $\sbad$ indeed is a Nash equilibrium. There are only four different classes of equivalent agents. But first, notice that if any agent buys $k$ edges, the size of its connected component will be lower than $9 + k$, which is strictly lower than $k \cdot C_E$ for $k \ge 2$. Therefore any best response of any agent can buy at most one edge.

\textbf{Node $w$:} Can stop immunizing, stop immunizing and buy an edge, or only buy an edge. In the first two cases, agent~$w$ becomes always destroyed, which is not improving as its current utility is $9 - C_I  = 3 > 0$. If agent~$w$ buys an edge to its own component or to an isolated node, this lowers its utility by $C_E = 6$.

\textbf{Node $y$:} currently has a utility of $9 - C_E = 3$. Agent~$y$ can immunize, immunize and sell its edge, immunize and swap its edge, only sell its edge, or only swap its edge.
        \begin{itemize}
            \item Immunizing without selling the edge yields additional costs of $C_I = 6$, so that agent~$y$'s utility is negative.
            \item Immunizing and selling the edge gives a utility of $-4$.
            \item Only selling the edge leads to a utility of $2$ which is less than the current one.
            \item Swapping the edge to another node in the same component does not change the utility, swapping it for an isolated node leads to an utility of $-3$. 
        \end{itemize}

    \textbf{Node $x$:} Has the same utility as node $y$ and can do the same strategy changes as node $y$ with the same utility changes. Additionally node $x$ could swap its edge to node $w$, without a change in utility.

    \textbf{Node $z$:} The current utility of agent~$z$ is $\frac{n-10}{n-9}>0$. It can immunize, or immunize and buy an edge, or only buy an edge.
        \begin{itemize}
            \item If agent~$z$ immunizes, the new utility will be $-5$ which is less than the current one.
            \item If agent~$z$ immunizes and buys an edge, the cost of $12$ for the edge and immunization will be higher than the size of its connected component.
            \item If agent~$z$ buys an edge to another isolated node, the new utility is $-4$. If it buys the edge to node~$w$, node $z$ gets destroyed and thus has a utility of $-6$. The same happens for buying an edge to node $x$ or $y$.
        \end{itemize}
Thus, in profile $\sbad$ all agents play best response.
\end{proof}

\end{document}